\documentclass[acmsmall, screen, nonacm]{acmart} 
\setcopyright{none}

\usepackage{amsthm}
\usepackage{fancyvrb}
\usepackage{mathpartir}
\usepackage{mathtools}
\usepackage{microtype}
\usepackage{xcolor}
\usepackage{twemojis}
\usepackage{tikz-cd}
\usepackage{aliascnt}
\usepackage{trimclip}
\usepackage{wrapfig}
\usepackage{wasysym}
\usepackage{subdepth}

\definecolor{weakening-strong}{rgb}{0.984,0.777,0.414}
\definecolor{contraction-strong}{rgb}{0.316,0.621,0.688}
\definecolor{exchange-strong}{rgb}{0.539,0.371,0.617}
\definecolor{weakening}{rgb}{0.992,0.922,0.801}
\definecolor{contraction}{rgb}{0.859,0.922,0.934}
\definecolor{exchange}{rgb}{0.906,0.871,0.922}
\definecolor{error}{rgb}{0.89,0.00,0.00}
\definecolor{hblue}{HTML}{205EFF}
\definecolor{hred}{HTML}{B41D67}

\definecolor{Gray}{HTML}{D5D6D8}
\definecolor{Green}{HTML}{61BB46}
\definecolor{Yellow}{HTML}{FFB000}
\definecolor{Orange}{HTML}{FE6100}
\definecolor{Red}{HTML}{DC267F}
\definecolor{Purple}{HTML}{785EF0}
\definecolor{Blue}{HTML}{648FFF}
\definecolor{Cyan}{HTML}{009DDC}
\definecolor{Black}{HTML}{3D3D3D}

\makeatletter
\DeclareFontFamily{OMX}{MnSymbolE}{}
\DeclareSymbolFont{MnLargeSymbols}{OMX}{MnSymbolE}{m}{n}
\SetSymbolFont{MnLargeSymbols}{bold}{OMX}{MnSymbolE}{b}{n}
\DeclareFontShape{OMX}{MnSymbolE}{m}{n}{
    <-6>  MnSymbolE5
   <6-7>  MnSymbolE6
   <7-8>  MnSymbolE7
   <8-9>  MnSymbolE8
   <9-10> MnSymbolE9
  <10-12> MnSymbolE10
  <12->   MnSymbolE12
}{}
\DeclareFontShape{OMX}{MnSymbolE}{b}{n}{
    <-6>  MnSymbolE-Bold5
   <6-7>  MnSymbolE-Bold6
   <7-8>  MnSymbolE-Bold7
   <8-9>  MnSymbolE-Bold8
   <9-10> MnSymbolE-Bold9
  <10-12> MnSymbolE-Bold10
  <12->   MnSymbolE-Bold12
}{}

\let\llangle\@undefined
\let\rrangle\@undefined
\DeclareMathDelimiter{\llangle}{\mathopen}
                     {MnLargeSymbols}{'164}{MnLargeSymbols}{'164}
\DeclareMathDelimiter{\rrangle}{\mathclose}
                     {MnLargeSymbols}{'171}{MnLargeSymbols}{'171}
\makeatother

\newcommand{\error}[1]{\textcolor{error}{#1}}
\newcommand{\danger}{\error{\raisebox{0.1em}{\fontencoding{U}\fontfamily{futs}\selectfont\char 49\relax}}}

\newcommand{\hole}{\fbox{\small \ensuremath{\mathsf{?}}}}

\newcommand{\valpha}{\ensuremath{\alpha}}
\newcommand{\vbeta}{\ensuremath{\beta}}
\newcommand{\vdelta}{\ensuremath{\delta}}

\input{pygments_style.tex}

\theoremstyle{} 
\newtheorem{theorem}{Theorem}[section]
\newcommand{\addtheorem}[2]{
  \newaliascnt{#1}{theorem}
  \newtheorem{#1}[#1]{#2}
  \aliascntresetthe{#1}
  \ExpandArgs{c}\newcommand{#1autorefname}{#2}
}
\addtheorem{lemma}{Lemma}
\addtheorem{lemmav}{[V] Lemma}
\addtheorem{lemmac}{[C] Lemma}
\addtheorem{corollary}{Corollary}
\addtheorem{property}{Property}

\newcommand{\Sub}[3]{\ensuremath{[#1/#2]#3}}

\newcommand{\bnfsep}{\mathrel{|}}

\newcommand{\evals}{\longmapsto}

\newcommand{\exact}[5]{#1 \sim #2 \in #3 \mathrel{|} #5; #4}
\newcommand{\exactOpen}[8]{#1; #2; #3 \gg #4 \sim #5 \in #6 \mathrel{|} #8; #7}

\newcommand{\have}{\mathrel{.}}

\newcommand{\typing}[5]{\mbox{\ensuremath{#1; #2; #3 \vdash #4 : #5}}}

\newcommand{\defmathsymb}[1]{
  \mathchoice
  {\mbox{\normalshape #1}}
  {\mbox{\normalshape #1}}
  {\scalebox{.7}{\normalshape #1}}
  {\scalebox{.7}{\normalshape #1}}
}

\newcommand{\ul}[1]{\underline{#1}}
\newcommand{\open}{{\defmathsymb{\fullmoon}}}
\newcommand{\opTy}[2]{\ensuremath{\defmathsymb{\fullmoon}_{#1}(#2)}}
\newcommand{\closed}{{\defmathsymb{\newmoon}}}
\newcommand{\clTy}[2]{\ensuremath{\defmathsymb{\newmoon}_{#1}(#2)}}
\newcommand{\unitTy}{\texttt{unit}}
\newcommand{\arrTy}[2]{\ensuremath{#1 \rightarrow #2}}

\newcommand{\forallTy}[2]{\ensuremath{\forall(#1 . #2)}}
\newcommand{\sumTy}[2]{\ensuremath{#1 + #2}}
\newcommand{\existsTy}[2]{\ensuremath{\exists(#1 . #2)}}
\newcommand{\uTy}[1]{\ensuremath{\mathsf{U}(#1)}}
\newcommand{\fTy}[1]{\ensuremath{\mathsf{F}(#1)}}
\newcommand{\pairTy}[2]{\ensuremath{#1 \otimes #2}}

\newcommand{\unitEx}{\ensuremath{\langle\rangle}}
\newcommand{\sealEx}[2]{\ensuremath{\texttt{seal}[#1](#2)}}
\newcommand{\unsealEx}[3]{\ensuremath{\texttt{unseal}\; #1\; \texttt{in}\; #2 . #3}}
\newcommand{\suspEx}[1]{\ensuremath{\texttt{susp}(#1)}}
\newcommand{\forceEx}[1]{\ensuremath{\texttt{force}(#1)}}
\newcommand{\packEx}[2]{\texttt{pack}[#1](#2)}
\newcommand{\openEx}[4]{\texttt{open}\; #1\; \texttt{in}\; #2 , #3 . #4}
\newcommand{\lamEx}[2]{\ensuremath{\lambda(#1 . #2)}}
\newcommand{\apEx}[2]{\texttt{ap}(#1 ; #2)}
\newcommand{\allEx}[2]{\ensuremath{\Lambda(#1 . #2)}}
\newcommand{\instEx}[2]{\ensuremath{#1 [#2]}}
\newcommand{\injlEx}[1]{\ensuremath{\texttt{l} \cdot #1}}
\newcommand{\injrEx}[1]{\ensuremath{\texttt{r} \cdot #1}}
\newcommand{\caseEx}[5]{\ensuremath{\texttt{case}\; #1\; \texttt{of}\; \{\; \injlEx{#2} \hookrightarrow #3 \mathrel{|} \injrEx{#4} \hookrightarrow #5\; \}}}
\newcommand{\pairEx}[2]{\ensuremath{\langle #1 , #2 \rangle}}
\newcommand{\splitEx}[4]{\texttt{split}\; #1\; \texttt{into}\; \pairEx{#2}{#3}\; \texttt{in}\; #4}
\newcommand{\retEx}[1]{\ensuremath{\texttt{ret}(#1)}}
\newcommand{\bindEx}[3]{\ensuremath{\texttt{bind}\; #1\; \texttt{in}\; #2 . #3}}
\newcommand{\consumeEx}[1]{\ensuremath{\texttt{consume}(#1)}}
\newcommand{\produceEx}[1]{\ensuremath{\texttt{produce}(#1)}}

\begin{document}

\title{The Duality of Information Flow} 
\subtitle{Reconciling Robust Downgrading with Non-Interference}

\author{Hemant Gouni}
\orcid{0009-0009-3888-8440}
\affiliation{
  \institution{Carnegie Mellon University}
  \city{Pittsburgh}
  \country{USA}
}
\email{hsgouni@cs.cmu.edu}

\author{Frank Pfenning}
\orcid{0000-0002-8279-5817}
\affiliation{
  \institution{Carnegie Mellon University}
  \city{Pittsburgh}
  \country{USA}
}
\email{fp@cs.cmu.edu}

\author{Jonathan Aldrich}
\orcid{0000-0003-0631-5591}
\affiliation{
  \institution{Carnegie Mellon University}
  \city{Pittsburgh}
  \country{USA}
}
\email{jonathan.aldrich@cs.cmu.edu}

\begin{abstract}
\textit{Non-interference} properties have long enjoyed a position as the
high water mark of program security guarantees. For instance,
\textit{confidentiality} states that a secret does not interfere with a
given computation, precluding the former from escaping through the latter.
\textit{Integrity} analogously states that untrusted data does not
interfere with some computation, preventing the former from manipulating
the latter. \textit{Information flow type systems} comprise the primary
means for obtaining non-interference properties of programs, but their
potential as a holy grail for secure programming has remained latent. Prior
work bifurcates the type system along confidentiality and integrity,
resulting in duplicate reasoning machinery and complex specifications.
Furthermore, long-held wisdom dictates that non-interference must be
weakened with \textit{downgrading} mechanisms to accommodate the needs
of practical programs, nearly all of which violate confidentiality and
integrity in the course of fulfilling their purpose. This often pierces
abstraction barriers and compromises modular reasoning.

We introduce \textit{parametric information flow}, which uses recent
insights from modal type theory to shed light on these issues. In
particular, we draw inspiration from work on the \textit{open} and
\textit{closed} modalities, highlighting their rich interplay. Though each
individual modality finds uses throughout the literature, our key insight
is that \textit{their joint interaction} suffices to reconstruct
full-spectrum information flow reasoning, producing a single framework
accounting for both confidentiality and integrity. Downgrading and
analogues of advanced reasoning tools in the lineage of \textit{robust
declassification} are recovered without extensions to our theory,
strengthening prior results. We show non-interference via a binary logical
relations argument, realizing robustness as an ordinary 2-hyperproperty
mediated by our modalities. Our work reveals state-of-the-art
downgrading mechanisms to be wholly compatible with those for
abstraction and modularity, arising precisely from the semantics of the
latter under full-strength non-interference.
\end{abstract}

\maketitle

\section{Introduction}
\label{sec:intro}

Shall secure information flow focus on tracking \textit{information} or
\textit{flows}? That is, in considering the dependency of a computation on some
data, should primacy be given to what can be \textit{gleaned about} that data
by the computation, or to the manner of its \textit{movement through} the
computation? We say both. Much prior work in information flow chooses one or
the other, a seemingly immaterial choice between options so subtly distinct as
to be without a difference. However, we will show that it makes all the
difference in the world to consider them as equal parts of a whole.

\subsection{Information Flow by Parametric Polymorphism}

In embarking on illuminating the preceding remark---a process which will span
the length of this work---we begin with a quick tour of our approach to
information flow. We start at the familiar machinery of parametric polymorphism
\cite{reynolds_1984}. Consider the typing for the polymorphic identity
\mbox{\texttt{id} \texttt{\PY{o}{:}} \PY{n+nl}{\valpha} \texttt{\PY{o}{->}}
\PY{n+nl}{\valpha}}, which constrains its return value to be exactly its
argument. Likewise, the typing \mbox{\texttt{fst} \texttt{\PY{o}{:}}
\PY{n+nl}{\valpha} \texttt{\PY{o}{->}} \PY{n+nl}{\vbeta} \texttt{\PY{o}{->}}
\PY{n+nl}{\valpha}} specifies a function which returns its first argument and
ignores its second. Finally, the type \mbox{\PY{k+kt}{\texttt{list}}
\PY{n+nl}{\valpha} \PY{o}{\texttt{->}} \PY{o}{\texttt{(}}\PY{n+nl}{\valpha}
\PY{o}{\texttt{->}} \PY{n+nl}{\vbeta}\PY{o}{\texttt{)}} \PY{o}{\texttt{->}}
\PY{k+kt}{\texttt{list}} \PY{n+nl}{\vbeta}} of the \texttt{map} function on
lists states that the elements of the argument list are passed into the higher
order function, whose return values in turn populate the elements of the
returned list. In each case, \textit{parametricity} allows us to reason
simultaneously about properties of both the \textit{information} referenced by
each signature and the \textit{flows} described by it.

Specifically, we reason about flows by interpreting each type variable as
naming some data, with its positioning inside the signature explicating the
movement of that data through the computation. In the case of
\mbox{\PY{n+nl}{\valpha} \texttt{\PY{o}{->}} \PY{n+nl}{\valpha}} the appearance
of \PY{n+nl}{\valpha} in the return type tells us that the computation
\textit{depends on} the argument at \PY{n+nl}{\valpha}. On the other hand, we
reason about information through data abstraction, or the hiding of
information. This is what lets us know that a computation at type
\mbox{\PY{n+nl}{\valpha} \texttt{\PY{o}{->}} \PY{n+nl}{\valpha}} \textit{cannot
depend on} the details of the particular \PY{n+nl}{\valpha} passed to it. Taken
together, these two properties of data dependency---or information
flow---narrow the space of possible behaviors of a computation at type
\mbox{\PY{n+nl}{\valpha} \texttt{\PY{o}{->}} \PY{n+nl}{\valpha}} to contain
only the identity function, a classic result of parametricity; similar results
can be obtained for \texttt{fst} and \texttt{map}. Usual appeals to this
principle merge both threads of reasoning into one.

The unification by parametricity of concerns about both flows and information
is often quite useful, but its sheer strength significantly constrains the
range of expressible programs. For instance, what if we wish to speak of the
information flow properties of a function \mbox{\texttt{increment}
\PY{o}{\texttt{:}} \PY{k+kt}{\texttt{int}} \PY{o}{\texttt{->}}
\PY{k+kt}{\texttt{int}}} which computes with---so cannot be polymorphic
in---its input? Remedying this necessitates distinguished notions of flows and
information able to be deployed separately. We proceed to isolate each in turn,
making it available in a more general setting. We initially retrace the
development by \citet{gouni2025structural} of certain forms of information flow
via parametric polymorphism.

\subsubsection{Isolating Flow Reasoning}
\label{sec:isolating-flow-intro}

The preceding approach to reasoning about the movement of data within some
signature was to trace its associated type variables \PY{n+nl}{\valpha}, but
this will not work if we also wish to perform arbitrary computations with that
data. Let us instead try \textit{tagging} types with \textit{dependency
variables} like \PY{n+nl}{\valpha}, rather than having \PY{n+nl}{\valpha}
\textit{be} the type. Under this proposal, we might type the increment function
on integers as \mbox{\PY{n+nl}{\valpha} \PY{k+kt}{\texttt{int}}
\PY{o}{\texttt{->}} \PY{n+nl}{\valpha} \PY{k+kt}{\texttt{int}}}. The
\PY{n+nl}{\valpha} names the attached data, allowing us to track its movement
from the argument at \PY{n+nl}{\valpha} to the return value dependent on
\PY{n+nl}{\valpha}. Importantly, unlike for type variables, \PY{n+nl}{\valpha}
no longer plays any role in reasoning about information hiding.

This seems to leave us better off than before, in that we can simultaneously
track the movement of data and compute with it, but it is still impractical. To
see why, imagine how we might type addition on integers. We would like to
express that it takes two arguments and returns a result dependent on both. We
start by writing \mbox{\PY{n+nl}{\valpha} \PY{k+kt}{\texttt{int}}
\PY{o}{\texttt{->}} \PY{n+nl}{\vbeta} \PY{k+kt}{\texttt{int}}
\PY{o}{\texttt{->}} \hole{} \PY{k+kt}{\texttt{int}}}. How should the \hole{} be
filled? Our current syntax limits us to mentioning a single dependency variable
inside it, but we clearly need to mention both \PY{n+nl}{\valpha} and
\PY{n+nl}{\vbeta} because both arguments flow to the return value. So we
generalize our syntax to handle \textit{sets} of dependency variables within
types, rather than just \textit{singular} dependency variables. Setting \hole{}
= \PY{o}{\texttt{[}}\PY{n+nl}{\valpha} \PY{n+nl}{\vbeta}\PY{o}{\texttt{]}}, we
successfully type addition as
\mbox{\PY{o}{\texttt{[}}\PY{n+nl}{\valpha}\PY{o}{\texttt{]}}
\PY{k+kt}{\texttt{int}} \PY{o}{\texttt{->}}
\PY{o}{\texttt{[}}\PY{n+nl}{\vbeta}\PY{o}{\texttt{]}} \PY{k+kt}{\texttt{int}}
\PY{o}{\texttt{->}} \PY{o}{\texttt{[}}\PY{n+nl}{\valpha}
\PY{n+nl}{\vbeta}\PY{o}{\texttt{]}} \PY{k+kt}{\texttt{int}}}. For consistency,
the dependency variables on the arguments have also been generalized to
dependency sets.

We refer to \PY{n+nl}{\valpha} as a dependency \textit{variable} because our
signatures are still polymorphic, but over dependencies rather than types. In
particular, each \PY{n+nl}{\valpha} ranges over another dependency set
\mbox{\PY{o}{\texttt{[}}\PY{n+nl}{$\alpha_1$} \ldots{}
\PY{n+nl}{$\alpha_n$}\PY{o}{\texttt{]}}}, meaning the former may be
instantiated to the latter. Just as the \PY{n+nl}{\valpha} in
\PY{n+nl}{\valpha} \PY{o}{\texttt{->}} \PY{n+nl}{\valpha} can be instantiated
to \PY{k+kt}{\texttt{int}} to obtain \PY{k+kt}{\texttt{int}}
\PY{o}{\texttt{->}} \PY{k+kt}{\texttt{int}}, we can instantiate the
\PY{n+nl}{\valpha} in our type for addition to
\mbox{\PY{o}{\texttt{[}}\PY{n+nl}{\texttt{secret1}}
\PY{n+nl}{\texttt{secret2}}\PY{o}{\texttt{]}}} to obtain
\mbox{\PY{o}{\texttt{[}}\PY{n+nl}{\texttt{secret1}}
\PY{n+nl}{\texttt{secret2}}\PY{o}{\texttt{]}} \PY{k+kt}{\texttt{int}}
\PY{o}{\texttt{->}} \PY{o}{\texttt{[}}\PY{n+nl}{\vbeta}\PY{o}{\texttt{]}}
\PY{k+kt}{\texttt{int}} \PY{o}{\texttt{->}}
\PY{o}{\texttt{[}}\PY{n+nl}{\texttt{secret1}} \PY{n+nl}{\texttt{secret2}}
\PY{n+nl}{\vbeta}\PY{o}{\texttt{]}} \PY{k+kt}{\texttt{int}}}. A function at
this type takes an integer at the dependencies
\PY{o}{\texttt{[}}\PY{n+nl}{\texttt{secret1}}
\PY{n+nl}{\texttt{secret2}}\PY{o}{\texttt{]}} and eventually returns an integer
at those dependencies plus those of its second argument. As a notational
convention, we will use \mbox{\PY{n+nl}{\valpha}, \PY{n+nl}{\vbeta},
\PY{n+nl}{\vdelta}, \ldots{}} to indicate polymorphic dependency variables; all
others are fixed.

For a more grounded example of tracking flows consider the case of a password
checker in \autoref{fig:password-check}. We first declare a
\PY{k+kt}{\texttt{string}} \texttt{pass} tagged with \PY{n+nl}{\texttt{secret}} to
indicate that it represents password data. We assume \PY{n+nl}{\texttt{secret}} to
be in scope here, but defer detailing the mechanisms used to introduce it.
\texttt{check} takes a password \texttt{attempt} \PY{k+kt}{\texttt{string}} at
any dependencies \PY{n+nl}{\valpha} and compares it to \texttt{pass}, returning
a \PY{k+kt}{\texttt{bool}} tagged with \PY{n+nl}{\valpha} and
\PY{n+nl}{\texttt{secret}} to indicate the flows induced from \texttt{attempt} and
\texttt{pass}.

\begin{figure}
    \begin{BVerbatim}[commandchars=\\\{\}]
\textcolor{lightgray}{1} \PY{k}{let} password \PY{o}{:} \PY{o}{[}\PY{n+nl}{secret}\PY{o}{]} \PY{k+kt}{string} \PY{o}{=} \PY{l+s}{\PYZdq{}takver\PYZdq{}}
\textcolor{lightgray}{2}
\textcolor{lightgray}{3} \PY{k}{let} check \PY{o}{:} \PY{o}{[}\PY{n+nl}{\valpha}\PY{o}{]} \PY{k+kt}{string} \PY{o}{->} \PY{o}{[}\PY{n+nl}{\valpha} \PY{n+nl}{secret}\PY{o}{]} \PY{k+kt}{bool} \PY{o}{=}
\textcolor{lightgray}{4}    \PY{k}{fun} attempt \PY{o}{->} attempt \PY{o}{==} password
    \end{BVerbatim}
    \caption{Simple Password Checker}
    \label{fig:password-check} 
\end{figure}

Note that the type of \texttt{check} is slightly redundant: the appearance of
\PY{n+nl}{\valpha} in both the argument and return type means any call to
\texttt{check} will \textit{always} depend on the argument passed to it. We can
use \textit{dependency elision} \cite{gouni2025structural} to simplify its
type, eliding \PY{n+nl}{\valpha} to get \mbox{\texttt{check} \PY{o}{\texttt{:}}
\PY{k+kt}{\texttt{string}} \PY{o}{\texttt{->}}
\PY{o}{\texttt{[}}\PY{n+nl}{\texttt{secret}}\PY{o}{\texttt{]}}
\PY{k+kt}{\texttt{bool}}}. When a dependency is not mentioned at a formal
argument, those on the actual argument are propagated directly to the call
site. So assuming we have some \mbox{\texttt{x} \PY{o}{\texttt{:}}
\PY{o}{\texttt{[}}\PY{n+nl}{\texttt{alice}}\PY{o}{\texttt{]}}
\PY{k+kt}{\texttt{string}}}, the application \texttt{check} \texttt{x} type
checks under \PY{o}{\texttt{[}}\PY{n+nl}{\texttt{secret}}
\PY{n+nl}{\texttt{alice}}\PY{o}{\texttt{]}} \PY{k+kt}{\texttt{bool}}. Similar
reasoning applies to the previously discussed types for integer increment and
addition. Our examples will use this ability going forward. A further complaint
about the type of \texttt{check} regards its impracticality: a useful password
checker must leak its result, so we do not wish the returned
\PY{k+kt}{\texttt{bool}} to be \PY{n+nl}{\texttt{secret}}. This too will be
allayed through \autoref{sec:syntax-typing} and \autoref{sec:examples}, where
we discuss our ability to support \textit{downgrading}\footnote{We use
`downgrading' here to avoid referring individually to declassification or
endorsement, since our deconstruction of information flow will yield a
theoretical divide distinct from the one between confidentiality and
integrity.} using only features already introduced.

The moves we have made so far are analogous to those of
\citet{gouni2025structural}; indeed, part of the preceding exposition has been
adapted from there. However, having exhausted their insights we find ourselves
at an as-yet-incomplete picture. From a theoretical perspective, it is
suspicious that the other aspect of parametricity---information hiding---has
not yet played a role. Practically, we lack the ability to speak about a number
of important secure programming patterns.

For instance, while it is useful to track the dependence of a computation on
password data, we might like to determine that the result of some computation
\textit{cannot} depend on---that is, cannot leak---password data. Our only
chance of doing this at present is to manually verify that its dependency set
does not contain \PY{n+nl}{\texttt{secret}}. But even this falls flat, for
instance, in the presence of polymorphic dependencies \PY{n+nl}{\valpha} which
may later be instantiated to dependency sets containing
\PY{n+nl}{\texttt{secret}}: we would have to verify upon every instantiation that
the resulting set is still free of \PY{n+nl}{\texttt{secret}}. As another example,
we would like to know that data tagged with \PY{n+nl}{\texttt{untrusted}} does
not affect some sensitive computation. We could again look through its
dependency set and manually validate that it is free of
\PY{n+nl}{\texttt{untrusted}}, but this is subject to the same pitfalls as
before. The current state of affairs is evidently unsatisfactory. Conveniently,
filling in the theoretical part of our incomplete picture will turn out to do
the same on the practical side, so let us investigate information hiding.

\subsubsection{Isolating Information Reasoning}
\label{sec:isolating-info-intro}

Having isolated reasoning about flows, can we do the same for information? We
focus in particular on its \textit{hiding}, going by our deconstruction of
parametricity, because this is the essence of data abstraction. To account for
this we introduce new syntax for \textit{anti}-dependencies explicating that on
which a computation \textit{cannot} depend. Specifically, we provide a way to
prohibit some data from being used by certain computations by tagging the
former with anti-dependencies against the latter. Data tagged as such can be
seen as \textit{conditionally} abstract, that is, only from the perspective of
computations possessing those dependencies. This is a generalization of
information reasoning over that afforded by parametricity, as with reasoning
about flows above.

Let us see how this helps us with locally preventing a computation from leaking
secrets, as previously desired. Consider some computation \mbox{\texttt{public}
\PY{o}{\texttt{:}} \PY{o}{\texttt{...}} \PY{o}{\texttt{->}}
\PY{o}{\texttt{[}}\PY{n+nl}{\valpha}\PY{o}{\texttt{]}} \PY{k+kt}{\texttt{int}}}
whose result we would like to declare cannot depend on password information.
With anti-dependencies in hand, this is as simple as inserting one regulating
against \PY{n+nl}{\texttt{secret}} into its return type:
\mbox{\PY{o}{\texttt{...}} \PY{o}{\texttt{->}}
\PY{o}{\texttt{[}}\PY{n+nl}{\valpha}\PY{o}{\texttt{]}}
\PY{o}{\texttt{!}}\PY{o}{\texttt{[}}\PY{n+nl}{\texttt{secret}}\PY{o}{\texttt{]}}
\PY{k+kt}{\texttt{int}}}. If \PY{n+nl}{\valpha} is instantiated to
\PY{n+nl}{\texttt{secret}} when \texttt{public} is used, we will be left with
\mbox{\error{\texttt{[}\texttt{secret}\texttt{]}
\texttt{!}\texttt{[}\texttt{secret}\texttt{]} \texttt{int}} \danger}. This is
an error because a
\PY{o}{\texttt{!}}\PY{o}{\texttt{[}}\PY{n+nl}{\texttt{secret}}\PY{o}{\texttt{]}}
\PY{k+kt}{\texttt{int}} is hidden from computations dependent on
\PY{n+nl}{\texttt{secret}}. \autoref{fig:closed-1} provides further motivation,
protecting sensitive computations from
\PY{o}{\texttt{[}}\PY{n+nl}{\texttt{untrusted}}\PY{o}{\texttt{]}} data.

\begin{figure}
    \begin{BVerbatim}[commandchars=\\\{\}]
\textcolor{lightgray}{1} \PY{k}{let} user_input \PY{o}{:} \PY{o}{[}\PY{n+nl}{untrusted}\PY{o}{]} \PY{k+kt}{string} \PY{o}{=} \PY{l+s}{"saio pae"}
\textcolor{lightgray}{2} \PY{k}{let} neural_net \PY{o}{:} \PY{o}{![}\PY{n+nl}{untrusted}\PY{o}{]} \PY{o}{\texttt{(}}\PY{k+kt}{string} \PY{o}{->} \PY{k+kt}{bool}\PY{o}{\texttt{)}} \PY{o}{=} \PY{o}{...}\phantom{\hspace{5.75em}}
    \end{BVerbatim}
    \vspace{1em}
    \textcolor{lightgray}{\hrule}
    \vspace{1em}
    \begin{BVerbatim}[commandchars=\\\{\}]
\textcolor{lightgray}{3} \PY{k}{let} \PY{o}{_} \PY{o}{:} \PY{o}{[}\PY{n+nl}{secret}\PY{o}{]} \PY{o}{![}\PY{n+nl}{untrusted}\PY{o}{]} \PY{k+kt}{bool} = neural_net password
\textcolor{lightgray}{4} \error{let _ : [untrusted] ![untrusted] bool = neural_net user_input} \danger
    \end{BVerbatim}
    \caption{Preventing Adversarial Input to a Neural Network}
    \label{fig:closed-1}
\end{figure}

We begin by declaring \texttt{user\_input} tagged
\PY{o}{\texttt{[}}\PY{n+nl}{\texttt{untrusted}}\PY{o}{\texttt{]}} and a
function \texttt{neural\_net} whose type is enclosed in
\PY{o}{\texttt{!}}\PY{o}{\texttt{[}}\PY{n+nl}{\texttt{untrusted}}\PY{o}{\texttt{]}}
to indicate that it must not be invoked within an
\PY{n+nl}{\texttt{untrusted}}-dependent computation, because of its inherent
vulnerability to adversarial input. It takes a \PY{k+kt}{\texttt{string}} and
returns a \PY{k+kt}{\texttt{bool}}, using dependency elision to slightly
simplify what ordinarily would have been written
\mbox{\PY{o}{\texttt{[}}\PY{n+nl}{\valpha}\PY{o}{\texttt{]}}
\PY{k+kt}{\texttt{string}} \PY{o}{\texttt{->}}
\PY{o}{\texttt{[}}\PY{n+nl}{\valpha}\PY{o}{\texttt{]}}
\PY{k+kt}{\texttt{bool}}}. We make use of it on line 3, passing it the
\texttt{password} variable from \autoref{fig:password-check} and storing the
result into a \PY{k+kt}{\texttt{bool}} appropriately tagged
\PY{n+nl}{\texttt{secret}} to indicate the dependency. The next line swaps out
\texttt{password} for \texttt{user\_input}, culminating in an error due to the
conflict between the
\PY{o}{\texttt{[}}\PY{n+nl}{\texttt{untrusted}}\PY{o}{\texttt{]}} argument and
the restriction against such a dependency.

Observe our phrasing around the mechanics of anti-dependencies: \textbf{a
conflict between dependencies and anti-dependencies arises when a computation
at some dependency \textit{contains} data which has declared an anti-dependency
on the same.} In terms of the prior example, we care specifically about the
existence of \PY{o}{\texttt{[}}\PY{n+nl}{\texttt{untrusted}}\PY{o}{\texttt{]}}
with
\PY{o}{\texttt{!}}\PY{o}{\texttt{[}}\PY{n+nl}{\texttt{untrusted}}\PY{o}{\texttt{]}}
nested \textit{inside} it. One might wonder whether the other order of
composition
\mbox{\PY{o}{\texttt{!}}\PY{o}{\texttt{[}}\PY{n+nl}{\texttt{untrusted}}\PY{o}{\texttt{]}}
\PY{o}{\texttt{[}}\PY{n+nl}{\texttt{untrusted}}\PY{o}{\texttt{]}}
\PY{k+kt}{\texttt{bool}}} results in an error. It does not, because in this
case the data which has declared an anti-dependency on
\PY{n+nl}{\texttt{untrusted}} is not yet itself dependent on
\PY{n+nl}{\texttt{untrusted}}, not being under such a dependency. The semantics
of dependencies and anti-dependencies given by \textit{polarity} will quell any
unease about the possibility of mis-specification owing to this subtlety,
however. \textit{Suspending}
\PY{o}{\texttt{[}}\PY{n+nl}{\texttt{untrusted}}\PY{o}{\texttt{]}}
\PY{k+kt}{\texttt{bool}} will render the reverse order of composition inert in
\autoref{sec:syntax-typing}. We look now to it and the rest.

\subsection{A Preview of the Rest}

Parametricity has provided a convenient inroads to exposing our approach to
information flow, motivating its deconstruction into \textit{dependencies} and
\textit{anti-dependencies}. It turns out that the type connectives
\mbox{\PY{o}{\texttt{[}}\PY{n+nl}{\valpha} \PY{n+nl}{\vbeta}
\PY{o}{\texttt{...}}\PY{o}{\texttt{]}}} and
\mbox{\PY{o}{\texttt{!}}\PY{o}{\texttt{[}}\PY{n+nl}{\valpha} \PY{n+nl}{\vbeta}
\PY{o}{\texttt{...}}\PY{o}{\texttt{]}}} correspond to the \textit{open} and
\textit{closed} modalities previously discussed by \citet{sterling2021logical}
in the context of parametricity for program modules, alongside others. In this
work we shift our attention from parametricity to non-interference, driving the
design of information flow types using the complementary dependency tracking
behaviors of these modalities. Their semantics deeply inform the distinction
between information and flows.

Perhaps surprisingly, all the core features of our system have been discussed
at this point; what remains is to reveal how. \autoref{sec:background} looks to
the broader body of work on information flow, isolating two distinct varieties
of dependency tracking which appear independently and interchangeably
throughout. In \autoref{sec:syntax-typing} we establish the correspondence of
these varieties to the open and closed modalities, showing that though taken
separately they may seem similar, taken together their differences shine. We
walk through a number of further examples in \autoref{sec:examples} which
exploit the interaction between our modalities to exhibit full-spectrum
information flow reasoning encompassing both confidentiality and integrity,
furthermore reconstructing \textit{robust downgrading}.
\autoref{sec:metatheory} formally grounds these examples, demonstrating
non-interference for our system and shedding light on the relationship between
the joint semantics of our modalities and robustness. \autoref{sec:related}
reviews related work, and we conclude in \autoref{sec:conclusion}. In summary,
our contributions are:

\begin{enumerate}
    \item We review the literature on information flow and find \textbf{two
        distinct flavors of dependency tracking,} deployed largely without
        regard to their differences. Each flavor is realized by a different
        formulation of the primary dependency tracking modality. We tease apart
        the metatheory of each, finding that one deals with flows and the other
        with information.
    \item We \textbf{re-cast each flavor in correspondence to either the open
        or the closed modality,} illuminating their semantics from the
        point of view of proof-theoretic polarity and modal type theory. We
        explicate why one corresponds to \textit{dependencies} and the other to
        \textit{anti-dependencies} when appearing in concert, despite acting
        analogously when apart.

    \item \textbf{We reconstruct full-spectrum information flow, covering both
        confidentiality and integrity, through our modalities and their joint
        interaction.} We push this further to recover an improved form of
        robust information flow reasoning over prior work, including
        \textit{robust downgrading}---all without extensions to our framework
        beyond what has already been introduced. A number of practical examples
        serve to ground these points, moreover demonstrating the
        \textbf{benefits to specification simplicity of our framework.}

    \item We give a logical relations argument for non-interference which
        \textbf{realizes robustness as an ordinary 2-hyperproperty,} simplified
        from prior presentations. We owe this to robustness being
        \textit{built in} to the semantics of our modalities; its recovery
        through their interplay places prior work in this area on firm logical
        footing. Remarkably, \textbf{our non-interference property accounts for
        sophisticated downgrading mechanisms without making semantic
        concessions} to them, providing extraordinarily strong modularity
        guarantees. Our results have been mechanized in the Lean theorem
        prover.
\end{enumerate}

Our work ultimately aims to advance the perspective that practical,
full-spectrum information flow programming can be pursued under full-strength
non-interference, and that this can be negotiated through the duality that
results from our deconstruction of information flow. We begin by tracing the
roots of this duality through the literature.

\section{Background: Two Modalities for Dependency Tracking}
\label{sec:background}

Prior work on dependency tracking has deployed a diversity of modalities for
doing so. Of these we survey those that form \textit{monads}
\cite{moggi1989computational}, arranging them along two distinct lines: one
dealing with flows, the other with information. These can respectively be
re-cast as the \textit{open} and \textit{closed} modalities of
\citet{rijke2020modalities}. For narrative ease, we start with the latter. This
section takes notational liberties as needed for clarity in presenting prior
systems. The type system in \autoref{sec:syntax-typing} will be revealed to
import the dependency tracking machinery introduced in this one.

\subsection{Isolating Information, Formally}
\label{sec:isolating-info}

\begin{figure}
\begin{mathpar}
    \inferrule[Seal]
    {\Gamma \vdash e : A}
    {\Gamma \vdash \sealEx{\phi}{e} : T_\phi(A)}

    \inferrule[Unseal]
    {\Gamma \vdash e_1 : T_\phi(A_1) \\ \Gamma, x : A_1 \vdash e_2 : A_2 \\ A_2 \geq \phi}
    {\Gamma \vdash \unsealEx{e_1}{x}{e_2} : A_2}
\end{mathpar}
\vspace{1em}
\textcolor{lightgray}{\hrule}
\vspace{1em}
\begin{tabular}{ c c c c c }
\PY{k+kt}{\texttt{unit}} &
\error{\texttt{bool}} \danger &
\error{$T_{\alpha_1}(\texttt{bool})$} \danger &
$T_{\PY{n+nl}{\alpha_1}\, \PY{n+nl}{\alpha_2}}(\PY{k+kt}{\texttt{bool}})$ &
\PY{k+kt}{\texttt{unit}} \PY{o}{\texttt{*}} \PY{k+kt}{\texttt{unit}} \\
\error{\texttt{bool}} \error{\texttt{*}} \error{\texttt{unit}} \danger &
$T_{\PY{n+nl}{\alpha_1}\, \PY{n+nl}{\alpha_2}}(\PY{k+kt}{\texttt{bool}})$ \PY{o}{\texttt{*}} \PY{k+kt}{\texttt{unit}} &
\error{\texttt{int}} \danger &
\error{\texttt{list} \texttt{unit}} \danger &
\PY{k+kt}{\texttt{bool}} \PY{o}{\texttt{->}} \PY{k+kt}{\texttt{unit}} \\
\end{tabular}
\caption{DCC Dependency Tracking Connective (Top); Compatible Sealings (Bottom)}
\label{fig:dcc}
\end{figure}

The Dependency Core Calculus (DCC) of \citet{abadi1999} pioneered the use of
monads for dependency tracking. It is organized around a monadic type
connective $T_{\PY{n+nl}{\alpha_1} \ldots{} \PY{n+nl}{\alpha_n}}(A)$ denoting
that $A$ is dependent on $\PY{n+nl}{\alpha_1} \ldots{} \PY{n+nl}{\alpha_n}$.
The rules for this connective are given in \autoref{fig:dcc} (top). The
introduction rule \textsc{Seal} annotates an expression at any type with a set
of dependencies $\phi$, where $\phi$ is of the form $\PY{n+nl}{\alpha_1}
\ldots{} \PY{n+nl}{\alpha_n}$. The elimination rule \textsc{Unseal} unwraps the
dependency annotation from $e_1$, extracting the contained value at the inner
type and passing it to $e_2$. The critical portion is the final premise $A_2
\geq \phi$, which states that \textit{all the information in $A_2$ must be
sealed at $\phi$ or above}. We defer a formal description of this sealing
obligation to the next section; it is easier to illustrate instead by example.
\autoref{fig:dcc} (bottom) shows the validity of different choices of $A$ in $A
\geq \PY{n+nl}{\alpha_1\, \alpha_2}$.

In short, \textsc{Unseal} ensures that any computation using sealed data places
all the information it produces under the same level of sealing. Any part of
the result type which risks exfiltrating previously sealed information is
affected by this requirement. Accordingly, the need to \texttt{seal} is
entirely elided when the result type is one incapable of revealing any
information, such as \PY{k+kt}{\texttt{unit}} or \PY{k+kt}{\texttt{unit}}
\PY{o}{\texttt{*}} \PY{k+kt}{\texttt{unit}} or \PY{k+kt}{\texttt{bool}}
\PY{o}{\texttt{->}} \PY{k+kt}{\texttt{unit}}. Sealing as employed here
crystallizes the idea that \textbf{the DCC dependency tracking monad
fundamentally concerns information, or what can be \textit{gleaned about data},
analyzing that represented by types.} Importantly, information is its only
concern, not any other property of the computation at hand. Indeed, $T_\phi(A)$
can be understood as \textit{internalizing} the judgement-level correspondence
between types and information given by $A \geq \phi$.

The DCC monad was recognized as the \textit{closed modality} from later work in
univalent foundations \cite{rijke2020modalities} by \citet{sterling2022sheaf},
who elegantly reworked its categorical semantics. \textit{Modality} here refers
to an \textit{idempotent monad}, specifically a \textit{semantic} view of one.
Idempotency means that we have $T_\phi(T_\phi(A)) \cong T_\phi(A)$. This allows
modalities to be characterized by their \textit{semantics}, or behavior, rather
than by their syntax. Specifically, we say $A$ is $T_\phi$-modal if $T_\phi(A)
\cong A$. This is what sets the DCC dependency tracking connective apart from
an ordinary monad: \textsc{Bind} does not conclude at $T_\phi(-)$, as the
latter would, but at a type $A$ satisfying $A \geq \phi$. This is a lightweight
way of checking that $A$ is $T_\phi$-modal. In other words, our concern is
whether the resulting type \textit{behaves} in the expected way in terms of
revealing information, rather than being conservatively forced to exhibit a
particular syntax. We defer further discussion to \citet{rijke2020modalities},
who provide a comprehensive overview of modalities in the sense deployed here,
and \citet[\S 4.4]{choudhury2022monadic}, who focuses on the DCC modality and
its relationship to standard monads.

Much later work exploits the DCC approach to dependency tracking
\cite{cecchetti2017nonmalleable, shikuma2008proving, liu2024internalizing,
sterling2022sheaf, choudhury2022monadic, bowman2015noninterference,
rajani2025graded, hirsch2021giving}, or references doing so. However,
despite crediting the DCC---or closed---modality, some such works cut much
closer in their chosen flavor of dependency tracking to its complement, the
\textit{open modality}. This is an easy oversight, for the latter has been
comparatively neglected in the information flow literature; we remedy this now.

\subsection{Isolating Flows, Formally}
\label{sec:isolating-flows}

\begin{figure}
    \begin{mathpar}
        \inferrule[Consume]
        {\Gamma; \phi \cup \phi' \vdash e : A}
        {\Gamma; \phi \vdash \lamEx{\_}{e} : \phi' \Rightarrow A}

        \inferrule[Produce]
        {\Gamma; \phi \vdash e : \phi' \Rightarrow A \\ \phi' \subseteq \phi}
        {\Gamma; \phi \vdash \apEx{e}{\phi'} : A}
    \end{mathpar}
\vspace{1em}
\textcolor{lightgray}{\hrule}
\vspace{1em}
\begin{minipage}{0.67\textwidth}
    \centering
    \begin{mathpar}
        \inferrule*[Right=C]
        {\inferrule*[Right=P]
         {x : \PY{n+nl}{\valpha} \Rightarrow{} \PY{k+kt}{\texttt{int}}, y : \PY{n+nl}{\vbeta} \Rightarrow{} \PY{k+kt}{\texttt{int}}; \PY{n+nl}{\alpha} \vdash x : \PY{n+nl}{\valpha} \Rightarrow{} \PY{k+kt}{\texttt{int}} \\ \PY{n+nl}{\valpha} \subseteq \PY{n+nl}{\valpha}}
         {x : \PY{n+nl}{\valpha} \Rightarrow{} \PY{k+kt}{\texttt{int}}, y : \PY{n+nl}{\vbeta} \Rightarrow{} \PY{k+kt}{\texttt{int}}; \PY{n+nl}{\alpha} \vdash \apEx{x}{\alpha} : \PY{k+kt}{\texttt{int}}}}
        {x : \PY{n+nl}{\valpha} \Rightarrow{} \PY{k+kt}{\texttt{int}}, y : \PY{n+nl}{\vbeta} \Rightarrow{} \PY{k+kt}{\texttt{int}}; \varnothing \vdash \lamEx{\_}{\apEx{x}{\alpha}} : \PY{n+nl}{\valpha} \Rightarrow{} \PY{k+kt}{\texttt{int}}}

        \inferrule*[Right=C]
        {\inferrule*[Right=P \danger]
         {x : \PY{n+nl}{\valpha} \Rightarrow{} \PY{k+kt}{\texttt{int}}, y : \PY{n+nl}{\vbeta} \Rightarrow{} \PY{k+kt}{\texttt{int}}; \PY{n+nl}{\alpha} \vdash y : \PY{n+nl}{\vbeta} \Rightarrow{} \PY{k+kt}{\texttt{int}} \\ \error{\vbeta \not\subseteq \valpha}}
         {x : \PY{n+nl}{\valpha} \Rightarrow{} \PY{k+kt}{\texttt{int}}, y : \PY{n+nl}{\vbeta} \Rightarrow{} \PY{k+kt}{\texttt{int}}; \PY{n+nl}{\alpha} \vdash \apEx{y}{\beta} : \PY{k+kt}{\texttt{int}}}}
        {x : \PY{n+nl}{\valpha} \Rightarrow{} \PY{k+kt}{\texttt{int}}, y : \PY{n+nl}{\vbeta} \Rightarrow{} \PY{k+kt}{\texttt{int}}; \varnothing \vdash \lamEx{\_}{\apEx{y}{\beta}} : \PY{n+nl}{\valpha} \Rightarrow{} \PY{k+kt}{\texttt{int}}}
    \end{mathpar}
\end{minipage}
\begin{minipage}{0.32\textwidth}
    \centering
    \begin{tikzpicture}
\node at (0, 0) {\PY{n+nl}{\valpha} $\Rightarrow$ \PY{k+kt}{\texttt{int}} \PY{o}{\texttt{->}}};
\node at (0, -1) {\PY{n+nl}{\vbeta} $\Rightarrow$ \PY{k+kt}{\texttt{int}} \PY{o}{\texttt{->}}};
\node at (0, -2) {\PY{n+nl}{\valpha} $\Rightarrow$ \PY{k+kt}{\texttt{int}} \phantom{\texttt{->}}};

\node[rotate=90] at (-1.4, -1) {\textcolor{Blue}{Unlocks}};
\node[rotate=90] at (1.3, -1) {\textcolor{Red}{Flows}};

\draw [Blue] (-0.74, -1.85) -- (-0.74, -1.75) -- (-1.375, -1.75) -- (-1.375, -1.65);
\draw[->, dashed, Blue] (-1.2, -0.55) -- (-0.74, -0.55) -- (-0.74, -0.75);
\draw [->, Blue] (-1.375, -0.35) -- (-1.375, 0.325) -- (-0.74, 0.325) -- (-0.74, 0.125);

\draw [Red] (-0.74, -0.2) -- (-0.74, -0.3) -- (1.35, -0.3) -- (1.35, -0.5);
\draw [dashed, dash phase=1pt, Red] (-0.74, -1.2) -- (-0.74, -1.3) -- (1.1, -1.3);
\draw [->, Red] (1.35, -1.5) -- (1.35, -2.4) -- (-0.74, -2.4) -- (-0.74, -2.2);
    \end{tikzpicture}
\end{minipage}
    \caption{\citet{shikuma2008proving} Dependency Tracking Connective (Top); Usage Example (Bottom)}
    \label{fig:notdcc}
\end{figure}

We show the setup for the second kind of modality in \autoref{fig:notdcc}
(top), taken from \citet[\S 2.2]{shikuma2008proving} who introduce it as a
reformulated variant of that from DCC. This modality is still monadic, but
makes no mention of the sealing judgement defining the previous. It works
instead by interacting with the $\phi$ environment newly added to the typing
judgement, denoting the ambient security level. Specifically, it represents the
current set of dependencies of the expression being typed. The introduction
rule \textsc{Consume} takes an expression at type $A$ and ambient dependencies
$\phi \cup \phi'$ and captures the $\phi'$ dependencies within a newly formed
modality $\phi' \Rightarrow A$, moving them into the type. The
elimination rule \textsc{Produce} checks that the ambient security level $\phi$
is higher than the dependency set $\phi'$ of the modality to be eliminated,
then concludes at the inner type.

The notation we use here for the modality is suggestive: it is exactly a
function with a dependency set as its argument type. Through this lens, the
rules for its introduction and elimination mimic exactly those standard for
functions, where the former adds its argument as an assumption to the typing
environment for its body and the latter checks that the current assumptions
suffice for the expected argument. The benefits of viewing this modality as a
function are twofold. First, it permits the dependency tracking mechanics at
play to be understood as a matter of \textit{access token passing},
illuminating its sensitivity to \textit{flows}. Second, it establishes a firm
connection to the \textit{reader monad} for $\phi$, which is precisely the
\textit{open} modality \cite{sterling2021logical}. We review each in turn.

On the first point, consider a function at type
\mbox{\PY{o}{\texttt{(}}\PY{n+nl}{\valpha} $\Rightarrow$
\PY{k+kt}{\texttt{int}}\PY{o}{\texttt{)}} \PY{o}{\texttt{->}}
\PY{o}{\texttt{(}}\PY{n+nl}{\vbeta} $\Rightarrow$
\PY{k+kt}{\texttt{int}}\PY{o}{\texttt{)}} \PY{o}{\texttt{->}}
\PY{n+nl}{\valpha} $\Rightarrow$ \PY{k+kt}{\texttt{int}}}. At first glance, we
might guess that its return value depends only on its first argument. Why?
Stepping through, we assume two modalities at \PY{n+nl}{\valpha} and
\PY{n+nl}{\vbeta} respectively. We conclude with one at \PY{n+nl}{\valpha}.
Reading the modality itself as a function, this means we \textit{assume}
\PY{n+nl}{\valpha} from the overall return type \PY{n+nl}{\valpha}
$\Rightarrow$ \PY{k+kt}{\texttt{int}} and have it available to \textit{unlock}
other data where it is expected as an argument. This is only possible for the
input modality \PY{n+nl}{\valpha} $\Rightarrow$ \PY{k+kt}{\texttt{int}}. For
\PY{n+nl}{\vbeta} $\Rightarrow$ \PY{k+kt}{\texttt{int}} we do not have
\PY{n+nl}{\vbeta} as an assumption, so attempting to apply \textsc{Produce}
fails to unlock its data and extract the contained \PY{k+kt}{\texttt{int}}.
\autoref{fig:notdcc} (bottom) shows candidate function \textit{bodies} for each
case, starting at the bottom with \textsc{Consume} followed by
\textsc{Produce}.

Observe that we are exclusively concerned here with the provenance of various
permissions to unlock data. We pass around such access tokens to speak by proxy
about \textit{flows}; assuming an \PY{n+nl}{\valpha} from the return type and
having it available to use to unlock an argument at \PY{n+nl}{\valpha} permits
that argument to flow to the return value. That is, the presence of an ambient
\PY{n+nl}{\valpha} denotes flows from data guarded by it. \textbf{The nature of
this modality is to indicate \textit{where data is moving} by way of tracking
the authority to access it.} It is defined by \textit{internalizing access
assumptions}, similarly to usual function types internalizing variable typing
assumptions. Unlike the previous modality, it involves no notion of the
information contained in types. One could return an informationally inert
\PY{k+kt}{\texttt{unit}} from some computation, but continue to have to track
which permissions were used to compute it because the flow has already
occurred. Effects bring this special relevance; consider \mbox{\texttt{print}
\PY{o}{\texttt{:}} \PY{n+nl}{\texttt{io}} $\Rightarrow$
\PY{o}{\texttt{(}}\PY{k+kt}{\texttt{string}} \PY{o}{\texttt{->}}
\PY{k+kt}{\texttt{unit}}}\PY{o}{\texttt{)}}. Failing to track
\PY{n+nl}{\texttt{io}} would be tantamount to forgetting potential uses of
\texttt{print}! \autoref{sec:related} will revisit this. For now, we confront
the second point from before.

Given that both its underlying machinery and practical uses differ subtly but
fundamentally from those of the DCC modality, we now have reason to suspect
that the modality here should be cast as something quite distinct from the
latter. It is in reality the \textit{open modality}, introduced by
\citet{rijke2020modalities} as the sibling to the closed modality. The open
modality is defined as a function with its antecedent containing dependencies
$\phi$, a construction familiar to functional programmers as the reader monad
for $\phi$. Besides \citet{shikuma2008proving}, the rules presented here are
also analogous to those for the open modality in
\citet{sterling2021metalanguage}, a fact not observed there or elsewhere. The
open modality makes further appearances in \citet{gouni2025structural} as
discussed in \autoref{sec:intro}, in \citet[\S 3.2]{liu2024internalizing} where
it is again cast as a variant of the DCC monad inspired by
\citet{shikuma2008proving}, and in \citet{sterling2022sheaf} which reviews it
as distinct from the closed modality in an information flow setting. The
lattermost notes the algebraic relationship between the two, but focuses on the
closed modality and its semantic insights towards DCC. In this work we afford
each equal standing, situating both within the same system to take advantage of
the rich information flow reasoning springing forth from their \textit{joint}
semantics.

\autoref{sec:intro} gives one exposition of the divide between flows and
information, in terms of dependencies and anti-dependencies, and we have just
given another here, in terms of differing information flow modalities. We unify
these perspectives in the next section by way of formally introducing our
approach to information flow, which features both modalities. Though both act
to track dependencies when apart, when combined one is flipped into the
anti-dependencies view. Apart from the two throughlines we have seen, there is
a \textit{hybrid} modality within the literature which can be understood as
combining aspects of both. This is briefly summarized for thoroughness in
\autoref{sec:hybrid}, but will not be relevant to our purposes here. We look
now to our type system.

\section{Syntax and Typing}
\label{sec:syntax-typing}

We set up our core language, \textit{parametric information flow}, in the
Call-By-Push-Value framework of \citet{levy2012call} extended with the
just-discussed modalities.\footnote{The Lean mechanization of our type system
uses intrinsically well-typed-and-scoped syntax with binding via De Bruijn
indices, approximated here using explicit scoping directives and
capture-avoiding substitution over named variables.} We use its explicit
treatment of proof-theoretic polarity to further distinguish each modality. We
then discuss their \textit{joint} semantics at a theoretical level, leading
into its practical implications in the section which follows.

\subsection{Information, or Values}
\label{sec:syntax-typing-info}

\begin{figure}
    Dependency Vars $\PY{n+nl}{\alpha_1}, \PY{n+nl}{\alpha_2}, \ldots \in \Delta$ \quad Vars $x_1, x_2, \ldots \in \Gamma$

	\begin{align*}
        \text{Dependencies}~ \phi & \Coloneqq        \epsilon \bnfsep \PY{n+nl}{\alpha}; \phi                             \\
		\text{Value Types}~ A & \Coloneqq            \unitTy \bnfsep \sumTy{A_1}{A_2} \bnfsep \pairTy{A_1}{A_2} \bnfsep \existsTy{\PY{n+nl}{\alpha}}{A} \bnfsep \clTy{\phi}{A} \bnfsep \uTy{\ul{B}} \\
        \text{Values}~ v & \Coloneqq x \bnfsep \unitEx \bnfsep \injlEx{v} \bnfsep \injrEx{v} \bnfsep \pairEx{v_1}{v_2} \bnfsep \packEx{\phi}{v} \bnfsep \sealEx{\phi'}{v} \bnfsep \suspEx{e}
	\end{align*}

    \begin{mathpar}
        \inferrule[T-Var]
        {\\}
        {\typing{\Delta}{\Gamma, x : A}{\phi}{x}{A}}

        \inferrule[T-Unit]
        {\\}
        {\typing{\Delta}{\Gamma}{\phi}{\unitEx}{\unitTy}}

        \inferrule[T-InjL]
        {\typing{\Delta}{\Gamma}{\phi}{v}{A_1} \\ \Delta \vdash A_2}
        {\typing{\Delta}{\Gamma}{\phi}{\injlEx{v}}{\sumTy{A_1}{A_2}}}

        \inferrule[T-InjR]
        {\typing{\Delta}{\Gamma}{\phi}{v}{A_2} \\ \Delta \vdash A_1}
        {\typing{\Delta}{\Gamma}{\phi}{\injrEx{v}}{\sumTy{A_1}{A_2}}}

        \inferrule[T-Pair]
        {\typing{\Delta}{\Gamma}{\phi}{v_1}{A_1} \\ \typing{\Delta}{\Gamma}{\phi}{v_2}{A_2}}
        {\typing{\Delta}{\Gamma}{\phi}{\pairEx{v_1}{v_2}}{\pairTy{A_1}{A_2}}}

        \inferrule[T-Pack]
        {\typing{\Delta}{\Gamma}{\phi}{v}{\Sub{\phi'}{\PY{n+nl}{\alpha}}{A}} \\ \Delta \vdash \phi'}
        {\typing{\Delta}{\Gamma}{\phi}{\packEx{\phi'}{v}}{\existsTy{\PY{n+nl}{\alpha}}{A}}}

        \inferrule[T-Seal]
        {\typing{\Delta}{\Gamma}{\phi}{v}{A} \\ \Delta \vdash \phi'}
        {\typing{\Delta}{\Gamma}{\phi}{\sealEx{\phi'}{v}}{\clTy{\phi'}{A}}}

        \inferrule[T-Susp]
        {\typing{\Delta}{\Gamma}{\phi}{e}{\ul{B}}}
        {\typing{\Delta}{\Gamma}{\phi}{\suspEx{e}}{\uTy{\ul{B}}}}

        \inferrule[T-Sub]
        {\typing{\Delta}{\Gamma}{\phi}{v}{A} \\ \phi'_\open \vdash_{\textcolor{gray}{(\supseteq)}} \phi_\open \\ \Delta \vdash \phi'}
        {\typing{\Delta}{\Gamma}{\phi'}{v}{A'}}

        \inferrule[T-Sub-Type]
        {\typing{\Delta}{\Gamma}{\phi}{v}{A} \\ A \subseteq A' \\ \Delta \vdash A'}
        {\typing{\Delta}{\Gamma}{\phi}{v}{A'}}
    \end{mathpar}
    \caption{Connectives for Values}
    \label{fig:values}
\end{figure}

Call-By-Push-Value bifurcates its connectives between \textit{values} and
\textit{computations}. We start with the values first; their syntax and rules
are shown in \autoref{fig:values}. We have dependency variables
$\PY{n+nl}{\valpha_1}, \PY{n+nl}{\valpha_2}, \ldots$ and term variables $x_1,
x_2, \ldots$, with the former collected into dependency sets $\phi$. The value
types $A$ are all \textit{positive} types in a proof-theoretic sense, with the
slight complication of the \textit{shift} connective $\uTy{\ul{B}}$. $\ul{B}$
will range over the \textit{computation} types, to be discussed in the next
subsection. The form of the value typing judgement is
$\typing{\Delta}{\Gamma}{\phi}{v}{A}$, read as ``under in-scope dependency
variables $\Delta$, in-scope term variables $\Gamma$, and ambient dependencies
$\phi$, the value $v$ has type $A$.''

The rule \textsc{T-Var} for typing variables is standard for the polarized
setting, stating that variables must be at value type. It permits this typing
at any ambient dependencies $\phi$. The introduction rules for \texttt{unit},
sums $\sumTy{A_1}{A_2}$, and positive products $\pairTy{A_1}{A_2}$ likewise
follow largely by convention, threading through the ambient dependencies as
needed. For sums, we check the well-scopedness $\Delta \vdash A$ of the type
not mentioned by the typing assumption in the premise. This is to prevent the
appearance of dependency variables $\PY{n+nl}{\alpha} \not\in \Delta$. For
products, each projection is checked at the same ambient dependencies $\phi$.
This may seem an onerous restriction, but is mediated by a \textit{subsumption}
rule \textsc{T-Sub}, shown in \autoref{fig:values}. The notation $\phi'_\open
\vdash \phi_\open$ can be read as $\phi' \supseteq \phi$; \autoref{sec:versus}
will use it to explain the semantics of our system. This rule allows, for
instance, the lower of two projections with differing security levels to be
heightened as needed. We scope check the heightened dependency set $\phi'$.
$\uTy{\ul{B}}$ is the only \textit{negative} connective among the value types.
\textsc{T-Susp} introduces it by suspending a computation $e$, shifting it to a
value. Its ambient dependencies $\phi$ flow to the conclusion untouched.

\begin{wrapfigure}{r}{0.255\textwidth}
    \centering
    \begin{math}
        \inferrule[Sub-Closed]
        {\phi_\closed \vdash_{\textcolor{gray}{(\subseteq)}} \phi'_\closed \\ A \subseteq A'}
        {\clTy{\phi}{A} \subseteq \clTy{\phi'}{A'}}
    \end{math}
\end{wrapfigure}

The first non-standard connective, fundamentally concerning information flow,
is $\clTy{\phi}{A}$. It is introduced by the rule \textsc{T-Seal}, which we
have already encountered. This connective is a variant of the \textit{closed}
modality, acting analogously to the \textsc{Seal} rule from \autoref{fig:dcc};
its syntax has been updated to follow \citet{sterling2021logical}. Subsumption
mechanics are afforded to it by the rule \textsc{T-Sub-Type} in
\autoref{fig:values}, whose premise $A \subseteq A'$ defines a standard
subtyping relation on types. The case of this relation for the closed modality
is shown to the right, with $\phi_\closed \vdash \phi'_\closed$ read as $\phi
\subseteq \phi'$. \autoref{sec:versus} will illuminate the notation.
\textsc{T-Sub-Type} scope checks the heightened type containing $\phi'$. A
further information flow connective is the existential quantifier
$\existsTy{\PY{n+nl}{\alpha}}{A}$ over dependencies. \textsc{T-Pack} checks the
implementation of an existential at the implementing dependency set $\phi'$
before obscuring $\phi'$ from view in the conclusion, where it is hidden behind
the existential variable $\PY{n+nl}{\valpha}$. This will provide
\textit{downgrading} behavior, as recognized by \citet{gouni2025structural}.
\autoref{sec:examples} will explain by example.

\begin{figure}[h]
    \begin{mathpar}
        \inferrule[S-Unit]
        {\\}
        {\unitTy \geq \phi}

        \inferrule[S-Pair]
        {A_1 \geq \phi \\ A_2 \geq \phi}
        {\pairTy{A_1}{A_2} \geq \phi}

        \inferrule[S-Sealed]
        {\phi_\closed \vdash_{\textcolor{gray}{(\subseteq)}} \phi'_\closed}
        {\clTy{\phi'}{A} \geq \phi}

        \inferrule[S-Unsealed]
        {A \geq \phi}
        {\clTy{\phi'}{A} \geq \phi}

        \inferrule[S-Thunk]
        {\ul{B} \geq \phi}
        {\uTy{\ul{B}} \geq \phi}
    \end{mathpar}
    \caption{Sealing Judgement for Value Types}
    \label{fig:sealing-values}
\end{figure}

Existentials can be approximated through universals \cite[\S
11.3.5]{girard1989proofs}, so prior work \cite{gouni2025structural} does not
treat them explicitly. (Observe that universals have been in use towards
dependency polymorphism since \autoref{sec:intro}, providing the precursors for
downgrading.) In our setting, however, existentials exhibit sufficiently
interesting behavior to invite first-class treatment. To elaborate, recall
\textit{sealing} from \autoref{sec:isolating-info}.
\autoref{fig:sealing-values} defines the sealing judgement for the value types,
which will be the interesting half of its definition. As we saw in
\autoref{fig:dcc}, \unitTy{} is always sealed due to conveying no information;
likewise for pairs with sealed projections. Besides \textsc{S-Unit}, the other
base case for this judgement is provided by \textsc{S-Sealed}, which states
that $\clTy{\phi'}{A} \geq \phi$ is satisfied when $\phi$ is included in
$\phi'$. \textsc{S-Unsealed} gives the case for the closed modality where its
$\phi'$ does not satisfy the inclusion, instead recursing onto the inner type.
\textsc{S-Thunk} recurses onto its contents with the sealing judgement at
computation type, alleged to be straightforward.

So where lies the interesting portion of sealing for values? Exactly in the
cases which have been \textit{omitted}: those for sums and existentials. There
is no case of sealing for sums because a term of type $\sumTy{A_1}{A_2}$ always
has at least two inhabitants, contributed by the left and right injections,
which inevitably reveals information. As for \PY{k+kt}{\texttt{bool}} in
\autoref{fig:dcc}---which can be defined as
$\sumTy{\unitTy}{\unitTy}$---the only way for a sum to exist within a type
satisfying the sealing judgement is to be placed under the closed modality. The
case (or lack of it) for existentials is less clear, but analogous reasoning
applies. Readers may recall that an existential type can be interpreted as a
sum over a family of types; this is exactly the reason for their omission from
sealing. In particular, the implementing dependency set $\phi'$ in
$\packEx{\phi'}{e}$ may differ between different terms of type
$\existsTy{\PY{n+nl}{\valpha}}{A}$. However, one might correctly protest that
there is no way to \textit{use} an existential so as to reveal its
implementation. The preceding behavior is indeed \textit{not} a true semantic
necessity but a concession made for metatheoretic simplicity. This will allow
us to clarify our core results and draw closer comparisons with prior work,
some of which exhibits the same limitation. \autoref{sec:metatheory} will offer
further justification on this point.

Only the value types lack certain cases of sealing; all computation types will
possess routine definitions for the sealing judgement. As embodied by sealing,
information is defined by an analysis of possible inhabitants; it is no
coincidence that this parallels the usual characterization of positive types by
their methods of introduction. This is why $\clTy{\phi}{A}$ exists among the
values: they are the only class of types worth sealing! In other words, only
value types possess even the \textit{potential} to represent information---to
be unable to be sealed---and so the connective for tracking information is
conferred to them. This indicates that \textbf{information is fundamentally a
matter of values.} It is reasonable to wonder, then, what information flow
reasoning gains from computations.

\subsection{Flows, or Computations}
\label{sec:syntax-typing-flows}

\begin{figure}
\begin{align*}
    \text{Computation Types}~ \ul{B} &\Coloneqq \arrTy{A}{\ul{B}} \bnfsep \forallTy{\PY{n+nl}{\alpha}}{\ul{B}} \bnfsep \opTy{\phi}{\ul{B}} \bnfsep \fTy{A} \\
    \text{Expressions}~ e &\Coloneqq
    \lamEx{x}{e} \bnfsep \apEx{e_1}{e_2} \bnfsep \allEx{\PY{n+nl}{\alpha}}{e} \bnfsep \instEx{e}{\phi} \bnfsep \consumeEx{e} \bnfsep \produceEx{e} \bnfsep \retEx{v} \\
        &\hspace{0.5em}\bnfsep \bindEx{e_1}{x}{e_2} \bnfsep \caseEx{v}{x_1}{e_1}{x_2}{e_2} \bnfsep \forceEx{v} \\
        &\hspace{0.5em}\bnfsep \splitEx{v}{x_1}{x_2}{e} \bnfsep \openEx{v}{\PY{n+nl}{\alpha}}{x}{e} \bnfsep \unsealEx{v}{x}{e}
\end{align*}

\begin{mathpar}
    \inferrule[\textcolor{Red}{T-Ret}]
    {\typing{\Delta}{\Gamma}{\phi}{v}{A}}
    {\typing{\Delta}{\Gamma}{\phi}{\retEx{v}}{\fTy{A}}}

    \inferrule[\textcolor{Red}{T-Consume}]
    {\typing{\Delta}{\Gamma}{\phi' \phi}{e}{\ul{B}}}
    {\typing{\Delta}{\Gamma}{\phi'}{\consumeEx{e}}{\opTy{\phi}{\ul{B}}}}

    \inferrule[\textcolor{Red}{T-Produce}]
    {\typing{\Delta}{\Gamma}{\phi'}{e}{\opTy{\phi}{\ul{B}}}}
    {\typing{\Delta}{\Gamma}{\phi\, \phi'}{\produceEx{e}}{\ul{B}}}

    \inferrule[\textcolor{Red}{T-Bind}]
    {\typing{\Delta}{\Gamma}{\phi}{e_1}{\fTy{A}} \\
     \typing{\Delta}{\Gamma, x : A}{\phi}{e_2}{\ul{B}}}
    {\typing{\Delta}{\Gamma}{\phi}{\bindEx{e_1}{x}{e_2}}{\ul{B}}}

    \inferrule[\textcolor{Red}{T-Lam}]
    {\typing{\Delta}{\Gamma, x : A}{\phi}{e}{\ul{B}}}
    {\typing{\Delta}{\Gamma}{\phi}{\lamEx{x}{e}}{\arrTy{A}{\ul{B}}}}

    \inferrule[\textcolor{Blue}{T-Force}]
    {\typing{\Delta}{\Gamma}{\phi}{v}{\uTy{\ul{B}}}}
    {\typing{\Delta}{\Gamma}{\phi}{\forceEx{v}}{\ul{B}}}

    \inferrule[\textcolor{Red}{T-Ap}]
    {\typing{\Delta}{\Gamma}{\phi}{e}{\arrTy{A}{\ul{B}}} \\ \typing{\Delta}{\Gamma}{\phi}{v}{A}}
    {\typing{\Delta}{\Gamma}{\phi}{\apEx{e}{v}}{\ul{B}}}

    \inferrule[\textcolor{Blue}{T-Unseal}]
    {\typing{\Delta}{\Gamma}{\phi}{v}{\clTy{\phi'}{A}} \\
     \typing{\Delta}{\Gamma, x : A}{\phi}{e}{\ul{B}} \\
     \ul{B} \geq \phi'}
    {\typing{\Delta}{\Gamma}{\phi}{\unsealEx{v}{x}{e}}{\ul{B}}}

    \inferrule[\textcolor{Red}{T-All}]
    {\typing{\Delta, \PY{n+nl}{\alpha}}{\Gamma}{\phi}{e}{\ul{B}} \\ \Delta \vdash \phi}
    {\typing{\Delta}{\Gamma}{\phi}{\allEx{\PY{n+nl}{\alpha}}{e}}{\forallTy{\PY{n+nl}{\alpha}}{\ul{B}}}}

    \inferrule[\textcolor{Blue}{T-Split}]
    {\typing{\Delta}{\Gamma}{\phi}{v}{A_1 \otimes A_2} \\
     \typing{\Delta}{\Gamma, x_1 : A_1, x_2 : A_2}{\phi}{e}{\ul{B}}}
    {\typing{\Delta}{\Gamma}{\phi}{\splitEx{v}{x_1}{x_2}{e}}{\ul{B}}}

    \inferrule[\textcolor{Red}{T-Inst}]
    {\typing{\Delta}{\Gamma}{\phi}{e}{\forallTy{\PY{n+nl}{\alpha}}{\ul{B}}} \\
     \Delta \vdash \phi'}
    {\typing{\Delta}{\Gamma}{\phi}{\instEx{e}{\phi'}}{\Sub{\phi'}{\PY{n+nl}{\alpha}}{\ul{B}}}}

    \inferrule[\textcolor{Blue}{T-Case}]
    {\typing{\Delta}{\Gamma}{\phi}{v}{\sumTy{A_1}{A_2}} \\
     \typing{\Delta}{\Gamma, x_1 : A_1}{\phi}{e_1}{\ul{B}} \\
     \typing{\Delta}{\Gamma, x_2 : A_2}{\phi}{e_2}{\ul{B}}}
    {\typing{\Delta}{\Gamma}{\phi}{\caseEx{v}{x_1}{e_1}{x_2}{e_2}}{\ul{B}}}

    \inferrule[\textcolor{Blue}{T-Open}]
    {\typing{\Delta}{\Gamma}{\phi}{v}{\existsTy{\PY{n+nl}{\alpha}}{A}} \\
     \typing{\Delta, \PY{n+nl}{\alpha}}{\Gamma, x : A}{\phi}{e}{\ul{B}} \\
     \Delta \vdash \underline{B} \\
     \Delta \vdash \phi}
    {\typing{\Delta}{\Gamma}{\phi}{\openEx{v}{\PY{n+nl}{\alpha}}{x}{e}}{\ul{B}}}
\end{mathpar}
\caption{Connectives for Computations}
\label{fig:computations}
\end{figure}

We show the rules for computation types $\ul{B}$ in \autoref{fig:computations}.
These comprise the \textit{negative} connectives---save again for the shift
connective $\fTy{A}$ which forms the sibling to $\uTy{\ul{B}}$ and is positive,
inserting values into the computation domain. They follow the judgement
$\typing{\Delta}{\Gamma}{\phi}{e}{\ul{B}}$, read as ``under in-scope dependency
variables $\Delta$, in-scope term variables $\Gamma$, and ambient dependencies
$\phi$, the expression $e$ has type $\ul{B}$.'' The new rules for computation
connectives live alongside the elimination rules for value types in the
computation domain because the latter also eliminate into motives at
computation type. The former set of rules are highlighted in
\textcolor{Red}{\textbf{red}} and the latter in
\textcolor{Blue}{\textbf{blue}}.

Starting with function types $\arrTy{A}{\ul{B}}$, we first show the relevant
case of sealing below. The argument type $A$ is irrelevant as far as
sealing is concerned because one can ultimately only leak information through
the return type $\ul{B}$. So it recurses straightforwardly onto this inner
type; the
\begin{wrapfigure}{r}{0.12\textwidth}
    \centering
    \begin{math}
        \inferrule[S-Arrow]
        {\ul{B} \geq \phi}
        {\arrTy{A}{\ul{B}} \geq \phi}
    \end{math}
\end{wrapfigure}
remaining cases of sealing for computation types proceed analogously
and straightforwardly, as promised. The introduction and elimination rules for
function types---\textsc{T-Lam} and \textsc{T-Ap}---likewise present no
surprises. Note that in the case of \textsc{T-Ap}, the function expression $e$
and its argument $v$ must be at the same ambient dependencies $\phi$.
Discrepancies between them are, as previously discussed for products,
negotiated via a subsumption rule analogous to \textsc{T-Sub} in
\autoref{fig:values}, operating instead on the computation judgement. The
situation is identical for all future rules concerning multiple typed values or
expressions, so we will make no further note of it.

The rule \textsc{T-All} introduces universal quantification over dependencies
$\forallTy{\PY{n+nl}{\alpha}}{\ul{B}}$, binding the \PY{n+nl}{\valpha} in scope
for the typing premise under a universal quantifier and removing it from the
dependency variable environment in the conclusion as it does so. The auxiliary
premise \mbox{$\Delta \vdash \phi$} ensures the ambient dependency set $\phi$
of the typing premise does not mention the just-bound dependency
\PY{n+nl}{\valpha}, preventing \PY{n+nl}{\valpha} from appearing in the
concluding ambient dependencies where it is no longer in scope. The elimination
rule \textsc{T-Inst} substitutes the instantiating dependency set $\phi'$ for
the bound dependency variable \PY{n+nl}{\valpha} within the inner type $\ul{B}$
of the quantifier. $\phi'$ is checked for well-scopedness.

\begin{wrapfigure}{r}{0.25\textwidth}
    \centering
    \begin{math}
        \inferrule[Sub-Open]
        {\phi'_\open \vdash_{\textcolor{gray}{(\supseteq)}} \phi_\open \\ \ul{B} \subseteq \ul{B}'}
        {\opTy{\phi}{\ul{B}} \subseteq \opTy{\phi'}{\ul{B}'}}
    \end{math}
\end{wrapfigure}
The rules \textsc{T-Consume} and \textsc{T-Produce} for introducing and
eliminating $\opTy{\phi}{B}$ look familiar: they are the rules from
\autoref{fig:notdcc} for our variant of the open modality! As in the case of
the closed modality we have updated its syntax, replacing the function notation
within the connective \cite{sterling2021logical} and its terms
\cite{gouni2025structural} in favor of that used by prior work, but the
structure of both rules remains intact. Also as for the closed modality it
enjoys subsumption, given by a rule analogous to
\textsc{T-Sub-Type} from \autoref{fig:values} but operating instead on
computation types. We give the relevant case here; it allows its indexing
dependency set $\phi$ to be raised to any superset $\phi'$, so is structurally
identical to that for the closed modality. We use different notation this time
to reflect the contravariance owed to the open modality due to its reading as a
function from \autoref{sec:isolating-flows}, again to be explained shortly in
\autoref{sec:versus}. Comparing \textsc{Produce} and \textsc{T-Produce}, rather
than checking that the ambient dependencies in the conclusion contain the
dependencies guarding the modality, as in the former, the latter releases the
dependencies $\phi$ guarding the modality into the concluding ambient set. In
both cases, the modality may only be unlocked in the presence of a sufficiently
high concluding ambient dependency set, guaranteeing its use is tracked. The
symmetry of our setup for the open modality exposes its proof-theoretic
harmony---introduction followed by elimination is type-preserving.

Recall that the ambient dependencies denote the flows in play, under the view
that they represent permissions to unlock and use data---guarded by other open
modalities---within the current computation. The only connective whose rules
directly affect the ambient dependencies is the open modality itself, which
\textit{lives entirely within the realm of computations.} Of these rules,
observe that \textsc{T-Produce} uses the ambient dependencies---or
permissions---to extract data. Its invocation can be seen as the definitive
point where the flow occurs. It is also the eliminator for the open modality,
which being a negative type is exactly characterized by it; the open modality
is \textit{defined by causing flows.} This indicates that \textbf{flows are
fundamentally a matter of computations,} which should come as no surprise given
their inherent connection to the dynamic behavior of programs.

A couple complications threaten this perspective. First, the subsumption rule
\textsc{T-Sub} from \autoref{fig:values} also modifies the ambient
dependencies. However, this modification is vacuous, raising the ambient
security level as needed for compatibility between subexpressions. We will lend formal
support to the idea that the ambient security level only \textit{necessarily}
changes for computations in \autoref{sec:metatheory}, where we show subsumption
to be admissible under a semantics which behaves as such. Second, why are
ambient dependencies present in the value judgement at all? The permissions
they represent cannot be applied within the value judgement to realize flows.
These are indeed rather \textit{flows-to-be}, enabling flows when the value
becomes part of a computation. \autoref{sec:simultaneous} will hint, and
\autoref{sec:metatheory} will formalize, that knowing about flows-to-be at the
value level bridges our two modalities. Only ambient dependencies within the
computation judgement denote active flows.

The remaining rules closely follow their standard formulations, threading
through ambient dependencies as needed. Of note, \textsc{T-Open} eliminates
existential quantifiers, binding within $e$ the existential dependency variable
\PY{n+nl}{\valpha} and a variable at type $A$ mentioning it. The type $\ul{B}$
and ambient dependencies $\phi$ of this expression are checked to be
well-scoped under a dependency variable environment not containing
\PY{n+nl}{\valpha}. This means one must eventually downgrade or otherwise avoid
mentioning the existentially quantified dependency variable in the result of
any computation using it. \autoref{sec:examples} will exploit this restriction
towards practical programming.

As promised, our type system features both the open and closed variants of
dependency tracking. We have furthered here the characterization of information
versus flows initiated in the previous section, aligning each flavor of
dependency tracking with the value-computation duality of Call-By-Push-Value.
Following \citet{levy2012call}, \textit{a value is whereas a computation does}.
While the former views programs as inert data, the latter is intertwined with
their dynamic behavior. The insights in this section give rise to the parallel
slogan \textbf{\textit{information is whereas a flow does}; the distinction
between the two mirrors the value-computation duality.} Having said this, we
have thus far left by the wayside the characterization of information versus
flows via \textit{dependencies} and \textit{anti-dependencies} from the
introduction. We remedy this now, unifying it with that just discussed.

\subsection{Information and Flows Simultaneously}
\label{sec:simultaneous}

We have just shown that the open and closed modalities respectively capture
reasoning regarding information versus flows. However, \autoref{sec:intro}
teased apart these same ideas via \textit{dependencies} and
\textit{anti-dependencies}. The \textit{joint} semantics of the open and closed
modalities give rise to a consistent resolution of this discrepancy.
Afterwards, we comment on the distinction between the standard formulation of
the open and closed modalities and that given by the preceding exposition.

We hinted at the nature of the reconciliation between these differing views
earlier, with \autoref{sec:intro} stating that \citet{gouni2025structural}
exploit the \textit{flows} or \textit{dependencies} fragment, and
\autoref{sec:isolating-flows} relating their information flow connective to the
open modality. So \textbf{the open modality is associated to reasoning about
dependencies,} with the remaining possibility being that \textbf{the closed
modality is linked to anti-dependencies.} In \autoref{sec:intro} the former was
written \PY{o}{\texttt{[}}\PY{n+nl}{\valpha} \PY{n+nl}{\vbeta}
\PY{o}{\ldots}\PY{o}{\texttt{]}} and the latter
\mbox{\PY{o}{\texttt{!}}\PY{o}{\texttt{[}}\PY{n+nl}{\valpha} \PY{n+nl}{\vbeta}
\PY{o}{\ldots}\PY{o}{\texttt{]}}}. But why this grouping---of flows,
dependencies, and the open modality on the one side, and information,
anti-dependencies, and the closed modality on the other?

\begin{wrapfigure}{r}{0.22\textwidth}
    \centering
    \begin{math}
        \inferrule
        {A \geq \phi' \\ \phi_\open \vdash \phi'_\closed}
        {\Delta\; \Gamma\; \phi \vdash v_1 \equiv v_2 : A}
    \end{math}
\end{wrapfigure}

The answer lies in the interaction between our modalities. In particular, they
satisfy (roughly) the equation to the right, adapted from
\citet{sterling2021metalanguage}, which states that all terms sealed at $\phi'$
are equivalent under ambient dependencies $\phi$. The effect of the premise
\mbox{$\phi_\open \vdash \phi'_\closed$}---which we will fully elucidate
next---is to check that $\phi$ and $\phi'$ have \textit{overlapping elements}.
In other words, their intersection is non-empty. Recall from
\autoref{sec:isolating-info} that $\phi'$ in $A \geq \phi'$ refers to
dependencies situated within the closed modality, which internalizes this
judgement, and from \autoref{sec:isolating-flows} that the ambient dependencies
$\phi$ can be thought of as connected to the open modality, being internalized
by it. The meaning of the equation, then, is that data sealed within the closed
modality at some set of dependencies is rendered \textit{indistinguishable}
under the purview of an ambient set of dependencies which overlaps with it. The
dependencies within the closed modality therefore regulate against their
mention in the ambient dependencies, or the open modality; values at the former
are \textit{anti-dependent} on computations at the latter. We will continue to
refer to $\PY{n+nl}{\valpha}$s as dependencies, identifying anti-dependencies
by placement within the closed modality. Indistinguishability corresponds to
the central soundness property, amounting to faithfully tracking both open and
closed: \textit{data at non-overlapping type cannot depend on data at
overlapping type}. \autoref{sec:metatheory} will need to enrich the preceding
to account for downgrading.

In summary, the action of the closed modality is to protect data from being
accessed in settings where one has assumed permissions, and therefore indicated
flows, adverse to it. \textbf{While open represents flows, closed hides
information from them.} Note that overlapping, or running into a
\textit{conflict} in the nomenclature of \autoref{sec:isolating-info-intro},
does not necessarily lead to a type error. The stated property is merely one of
separation, not of well-formedness: it is perfectly valid to have a value at
type $\clTy{\phi}{A}$ with ambient dependencies $\phi$. It simply cannot be
depended upon by any non-conflicted parts of the program. A practical
implementation invites producing type errors, as implied by the examples in
\autoref{fig:closed-1}, but how this might be done is not prescribed
theoretically. We bring this section to a close by briefly discussing the
issues of composition order and the relationship of our modalities to the
standard ones.

\subsubsection{On Composition Order}
\label{sec:order}

What we have just discussed corresponds to the order of composition
\mbox{\PY{o}{\texttt{[}}\PY{n+nl}{\valpha} \PY{n+nl}{\vbeta}
\PY{o}{\ldots}\PY{o}{\texttt{]}}
\PY{o}{\texttt{!}}\PY{o}{\texttt{[}}\PY{n+nl}{\valpha} \PY{n+nl}{\vbeta}
\PY{o}{\ldots}\PY{o}{\texttt{]}} \PY{k+kt}{\texttt{t}}} seen in
\autoref{sec:intro}, with anti-dependencies at the closed modality situated
within dependencies at the open modality. What about the reverse order? This
turns out to be inert. To begin, view the examples of the introduction under a
standard call-by-value semantics. The translation from call-by-value into the
explicitly polarized core language presented in this section proceeds routinely
via the procedure given by \citet[\S 2.7.1]{levy2012call}.

In particular, setting $\phi = \PY{n+nl}{\alpha}; \PY{n+nl}{\beta};
\PY{o}{\text{\ldots}}$, we have
$\llangle\PY{o}{\texttt{[}}\PY{n+nl}{\valpha}\; \PY{n+nl}{\vbeta}\;
\PY{o}{\text{\ldots}}\PY{o}{\texttt{]}}\; \PY{k+kt}{\texttt{t}}\rrangle
\rightsquigarrow
\uTy{\opTy{\phi}{\fTy{\llangle\PY{k+kt}{\texttt{t}}\rrangle}}}$ for the
translation of the open modality, and
$\llangle\PY{o}{\texttt{![}}\PY{n+nl}{\valpha}\; \PY{n+nl}{\vbeta}\;
\PY{o}{\text{\ldots}}\PY{o}{\texttt{]}}\; \PY{k+kt}{\texttt{t}}\rrangle
\rightsquigarrow \clTy{\phi}{\llangle\PY{k+kt}{\texttt{t}}\rrangle}$ for
the closed. The call-by-value open modality
\PY{o}{\texttt{[}}\PY{n+nl}{\valpha} \PY{n+nl}{\vbeta}
\PY{o}{\ldots}\PY{o}{\texttt{]}} is surrounded by the shift connectives
\textsf{U} and \textsf{F} due to being a negative type, and the call-by-value
closed modality \mbox{\PY{o}{\texttt{!}}\PY{o}{\texttt{[}}\PY{n+nl}{\valpha}
\PY{n+nl}{\vbeta} \PY{o}{\ldots}\PY{o}{\texttt{]}}} is left untouched because
it is a positive type. Translations for all other types are given analogously.
The translation for the reverse composition is then
\mbox{$\llangle\PY{o}{\texttt{!}}\PY{o}{\texttt{[}}\PY{n+nl}{\valpha}\;
\PY{n+nl}{\vbeta}\; \PY{o}{\text{\ldots}}\PY{o}{\texttt{]}}\;
\PY{o}{\texttt{[}}\PY{n+nl}{\valpha}\; \PY{n+nl}{\vbeta}\;
\PY{o}{\text{\ldots}}\PY{o}{\texttt{]}}\; \PY{k+kt}{\texttt{t}}\rrangle
\rightsquigarrow \clTy{\phi}{ \llangle\PY{o}{\texttt{[}}\PY{n+nl}{\valpha}\;
\PY{n+nl}{\vbeta}\; \PY{o}{\text{\ldots}}\PY{o}{\texttt{]}}\;
\PY{k+kt}{\texttt{t}}\rrangle} \rightsquigarrow
\clTy{\phi}{\uTy{\opTy{\phi}{\fTy{\llangle \PY{k+kt}{\texttt{t}}
\rrangle}}}}$}.

Reifying the reverse composition in Call-By-Push-Value allows us to see why it
poses no danger, even if someone mistakenly assumed it would be considered a
conflict and therefore covered by the soundness guarantee. The closed modality
at $\phi$ does not \textit{directly} contain the open modality at $\phi$ at the
end of the translation, but rather encapsulates it \textit{inside a suspension}
\textsf{U}. Accordingly, these open dependencies $\phi$ and the flows they
represent are not ambient but suspended with respect to the closed modality, so
are not considered conflicting! It is only once the inner computation runs and
exposes its dependencies to the closed modality that any danger arises of
information protected under anti-dependencies participating in the flows
denoted by conflicting ambient dependencies.

\subsubsection{Comparing to the Standard Modalities}
\label{sec:versus}

In certain cases, we have been careful to refer to our connectives
$\clTy{\phi}{A}$ and $\opTy{\phi}{\ul{B}}$ as \textit{variants} of the usual
closed and open modalities. Those in our setting have been modified to increase
specification readability and confer modular downgrading behavior. The
formation rules for the standard modalities might be written as on the left of
\autoref{fig:pairing}.

\begin{figure}
\centering
\begin{minipage}{0.53\textwidth}
\begin{mathpar}
    \inferrule
    {P : \mathsf{Prop} \\ A : \mathsf{Type}_\mathbb{V}}
    {\clTy{P}{A} : \mathsf{Type}_\mathbb{V}}

    \inferrule
    {P : \mathsf{Prop} \\ \ul{B} : \mathsf{Type}_\mathbb{C}}
    {\opTy{P}{\ul{B}} : \mathsf{Type}_\mathbb{C}}

\end{mathpar}
\end{minipage}
\hfill
\textcolor{lightgray}{\vrule}
\hfill
\begin{minipage}{0.41\textwidth}
    \begin{align*}
    \llbracket \PY{n+nl}{\alpha} \rrbracket &\triangleq p~ \text{where}~ p = \mathcal{P}[\PY{n+nl}{\alpha}] \\
    \llbracket \epsilon \rrbracket &\triangleq (\top, \bot) \\
    \llbracket \PY{n+nl}{\alpha}; \phi \rrbracket &\triangleq (\llbracket\PY{n+nl}{\alpha}\rrbracket \wedge \llbracket \phi \rrbracket.\pi_1, \llbracket\PY{n+nl}{\alpha}\rrbracket \vee \llbracket \phi \rrbracket.\pi_2)
    \end{align*}
\end{minipage}
\caption{Signatures of the Standard Modalities (Left); Interpretation of Dependencies (Right)}
\label{fig:pairing}
\end{figure}

Each takes a proposition $P$ followed by a value or computation type $A$ or
$\ul{B}$ and returns $\clTy{P}{A}$ or $\opTy{P}{\ul{B}}$. Working with
\textit{propositions} rather than sets of dependencies is the essence of the
difference. The meaning of $\opTy{P}{\ul{B}}$, recalling its interpretation as
a function, is to say that we have $\ul{B}$ under assumption $P$. And the
meaning of $\clTy{P}{A}$ is that when $P$ holds its information becomes
indistinguishable, or $\clTy{P}{A} \cong \unitTy$. For nomenclature, when
$\opTy{P}{\ul{B}} \cong \unitTy$, we say $\ul{B}$ is $\open_P$-connected.
\autoref{fig:pairing} shows to the right the propositional structure of a
$\phi$ to help clarify our connection to this setup. Assume a logic with the
usual truth values $\top$ and $\bot$, in addition to a set of \textit{included
middles} $\mathcal{P}$---extra truth values neither $\top$ nor $\bot$. We
generate one such value $p \in \mathcal{P}$ to interpret each in-scope
dependency $\PY{n+nl}{\alpha}$. We then interpret dependency sets $\phi$ as a
pair where the first projection is a sequence of conjunctions over its elements
and the second a sequence of disjunctions. As an example, where
$\llbracket\PY{n+nl}{\alpha}\rrbracket \triangleq p_1$ and
$\llbracket\PY{n+nl}{\beta}\rrbracket \triangleq p_2$, we have
$\llbracket\PY{n+nl}{\alpha}; \PY{n+nl}{\beta}; \epsilon\rrbracket \triangleq
(p_1 \land p_2, p_1 \lor p_2)$.

We now define our modalities as the standard ones precomposed with the
projection maps: $\open \circ \pi_1$ and $\closed \circ \pi_2$. The first
premise of their formation rules changes to take a \textit{pair} of
propositions $\mathsf{Prop} \times \mathsf{Prop}$ representing a $\phi$. This
means $\opTy{\phi}{\ul{B}}$ is effectively indexed by a conjunction, and
$\clTy{\phi}{A}$ by a disjunction. \textbf{The notations $\phi_\open$ and
$\phi_\closed$ view the dependencies in $\phi$ as logical atoms and conjunct or
disjunct them together,} corresponding to taking either the first or second
projections. Their entailment is then as usual. We choose this rather than a
more straightforward presentation of each $\phi$ as a set because it highlights
a number of desirable properties. \textsc{T-Sub} and the case of subsumption
for $\opTy{\phi}{\ul{B}}$ are contravariant with respect to $\phi_\open$, as
should be expected of judgemental assumptions and function argument types,
while $\clTy{\phi}{A}$ is covariant for $\phi_\closed$. Further, the
indistinguishability condition $\phi_\open \vdash \phi_\closed$ encodes the
desired overlapping property by checking that a string of conjunctions entails
one of disjunctions, while mirroring the standard one stating that the ambient
assumptions entail a closed proposition. \textbf{We respectively refer to
$\opTy{\phi}{\ul{B}}$ and $\clTy{\phi}{A}$ as the \textit{quasi-open} and
\textit{quasi-closed}\footnote{No relevance to the modality of the same name of
\citet{schultz2017temporal}.} modalities to reflect their non-standard
formulation,} particularly when their divergence from the standard modalities
is relevant.

Specification simplicity ranks among the practical advantages gained from this
divergence. With the standard modalities we have $\opTy{P}{\opTy{Q}{\ul{B}}}
\cong \opTy{P \land Q}{\ul{B}}$ by currying and $\clTy{P}{\clTy{Q}{A}} \cong
\clTy{P \lor Q}{A}$ straightforwardly from \citet[Ex. 1.8]{rijke2020modalities}
and \citet[Lem. 8.5.9]{hottbook}. As such, we would need to support both
conjunction and disjunction operators within our information flow
specifications, precluding the flat sets $\phi$ we currently use. This
semantics furthermore turns out to grant downgrading behavior, because the base
case for $\epsilon$ in \autoref{fig:pairing} evades the indistinguishability
condition; \autoref{sec:metatheory} will explain. Finally, note that with
only two truth values $\top$ and $\bot$, we have $\opTy{\top}{\ul{B}} \cong
\ul{B}$ and $\clTy{\top}{A} \cong \unitTy$, and $\opTy{\bot}{\ul{B}} \cong
\unitTy$ and $\clTy{\bot}{A} \cong A$. The included middles allow us to mention
the same dependency \PY{n+nl}{\valpha} in both modalities without trivializing
either. This is a technique used by prior work \cite[\S
2.1.1]{grodin2026abstraction}, originating from the anti-classical foundations
of the modalities. \autoref{sec:metatheory} will show the metatheory to be
organized around these extra truth values, and \autoref{sec:related} will
discuss further formal consequences of the preceding designs. We consider now
their programming consequences.

\section{Examples}
\label{sec:examples}

Theoretical concerns have received much attention; we now seek to reap the
practical benefits of the preceding intuitions surrounding duality in
information flow. The closed modality will be shown to provide
\textit{robustness} reasoning, leading to a unifying view on \textit{robust
declassification} and \textit{transparent endorsement} from the literature. We
finish by showing how to program with conflicts \textit{modularly}. In order to
avoid syntactic clutter from the $\uTy{-}$ and $\fTy{-}$ shifts of the
polarized setting, we work here in a call-by-value language derived from it;
this follows the translation from \autoref{sec:order}. When we use monospace
type notation as in \PY{k+kt}{\texttt{t}} we refer to call-by-value types.

\subsection{Robustness and the Closed Modality}

The closed modality can be seen as fundamentally providing reasoning about
\textit{robustness} properties. The original definition of robustness, as
explored by later work \cite{gordon2001authenticity, bugliesi2011resource,
swasey2017robust, mackay2022necessity, zdancewic2001robust}, is as follows.

\begin{quote}
    The degree to which a system or component can function correctly in the
    presence of invalid or stressful environmental conditions.
    \hfill \citet{ieee1990standard}
\end{quote}

An integer under the closed modality as in
\PY{o}{\texttt{!}}\PY{o}{\texttt{[}}\PY{n+nl}{\texttt{untrusted}}\PY{o}{\texttt{]}}
\PY{k+kt}{\texttt{int}} can be understood as stating that it will render itself
inaccessible within an environment where it is unable to behave according to
its intended semantics. Namely, one where computations involving data at
\PY{o}{\texttt{[}}\PY{n+nl}{\texttt{untrusted}}\PY{o}{\texttt{]}} are in play.
More broadly, the closed modality defines the permissible environments for the
data it contains. The cumulative shape of these environmental restrictions
gives the robustness properties of the program, explicating the contexts in
which its various fragments may be used. We wish to advance here the
perspective that \textbf{anti-dependencies speak about robustness against
dependencies.}

\begin{figure}
\begin{BVerbatim}[commandchars=\\\{\}]
\textcolor{lightgray}{1} \PY{k}{let} password \PY{o}{:} \PY{o}{[}\PY{n+nl}{secret}\PY{o}{]} \PY{k+kt}{string} \PY{o}{=} \PY{o}{\{} \PY{l+s}{"takver"} \PY{o}{\}}
\textcolor{lightgray}{2} \PY{k}{let} user_input \PY{o}{:} \PY{o}{[}\PY{n+nl}{untrusted}\PY{o}{]} \PY{k+kt}{string} \PY{o}{=} \PY{o}{\{} \PY{l+s}{"saio pae"} \PY{o}{\}}
\textcolor{lightgray}{3} \PY{k}{let} neural_net \PY{o}{:} \PY{o}{![}\PY{n+nl}{untrusted}\PY{o}{]} \PY{o}{\texttt{(}}\PY{k+kt}{string} \PY{o}{->} \PY{k+kt}{bool}\PY{o}{\texttt{)}} \PY{o}{=} \PY{k}{seal} \PY{o}{...}
\end{BVerbatim}
\vspace{1em}
\textcolor{lightgray}{\hrule}
\vspace{1em}
\begin{BVerbatim}[commandchars=\\\{\}]
\textcolor{lightgray}{4} \error{let _ : [untrusted] ![untrusted] bool =}
\textcolor{lightgray}{5}     \error{\{ unseal neural_net as f in seal (f user_input) \}} \danger
\textcolor{lightgray}{6} \PY{k}{let} \PY{o}{_} \PY{o}{:} \PY{o}{[}\PY{n+nl}{secret}\PY{o}{]} \PY{o}{![}\PY{n+nl}{untrusted}\PY{o}{]} \PY{k+kt}{bool} \PY{o}{=}
\textcolor{lightgray}{7}     \PY{o}{\{} \PY{k}{unseal} neural_net \PY{k}{as} f \PY{k}{in} \PY{k}{seal} \PY{o}{(}f password\PY{o}{)} \PY{o}{\}}\phantom{\hspace{3.25em}}
\end{BVerbatim}
    \caption{Revisiting Adversarial Input with Fully Explicit Syntax (see \S 4.1.1 for notation)}
\label{fig:revisiting}
\end{figure}

\subsubsection{Towards Integrity}

\autoref{fig:closed-1} already reviewed a case of such robustness restrictions
towards integrity by excluding the \texttt{neural\_net} function, vulnerable to
adversarial inputs, from
\PY{o}{\texttt{[}}\PY{n+nl}{\texttt{untrusted}}\PY{o}{\texttt{]}} environments.
\autoref{fig:revisiting} revisits this example with the hindsight of the full
formalism behind us, transforming it to use more explicit syntax to expose the
underlying constructs used. The first line now wraps its string
\PY{l+s}{\texttt{"takver"}} in \PY{o}{\texttt{\{}} \PY{o}{\texttt{...}}
\PY{o}{\texttt{\}}}, which provides notation for the $\suspEx{\consumeEx{-}}$
corresponding to the open modality in its type; we do the same on the next. The
definition of \texttt{neural\_net} on the third line now begins with a
\PY{k}{\texttt{seal}} corresponding to the occurrence of the closed modality in
its type, presumably followed by a function implementation.

Moving to the usages of these variables in the bottom half, as before we have
an expression at
\mbox{\PY{o}{\texttt{[}}\PY{n+nl}{\texttt{secret}}\PY{o}{\texttt{]}}
\PY{o}{\texttt{![}}\PY{n+nl}{\texttt{untrusted}}\PY{o}{\texttt{]}}
\PY{k+kt}{\texttt{bool}}} on line 6. However, its implementation looks rather
more complex. Stepping into the \mbox{\PY{o}{\texttt{\{}} \PY{o}{\texttt{...}}
\PY{o}{\texttt{\}}}} corresponding to
\PY{o}{\texttt{[}}\PY{n+nl}{\texttt{secret}}\PY{o}{\texttt{]}}, we first
\PY{k}{\texttt{unseal}} \texttt{neural\_net} to extract the inner component of the
\PY{k}{\texttt{seal}}, binding it as \texttt{f}. We then call this function on
the same argument as before,
re\PY{k}{\texttt{seal}}ing the result. Note that there is no syntax deployed
for invoking $\produceEx{\forceEx{-}}$ on \texttt{password}. When data at the
open modality appears within \mbox{\PY{o}{\texttt{\{}} \PY{o}{\texttt{...}}
\PY{o}{\texttt{\}}}}, we automatically do so. The \textit{idiom brackets} of
\citet{mcbride2008applicative} provide the inspiration for this mechanic,
exploiting the visual boundary given by the brackets to capture all usages of
the open modality within. The erroneous usage on line 4 once again has the same
type as before, and its implementation follows analogously. After unsealing
\texttt{neural\_net} at
\PY{o}{\texttt{![}}\PY{n+nl}{\texttt{untrusted}}\PY{o}{\texttt{]}} we call it
on \texttt{user\_input} at
\PY{o}{\texttt{[}}\PY{n+nl}{\texttt{untrusted}}\PY{o}{\texttt{]}} and seal the
result, producing an error.

\begin{figure}[h]
\begin{BVerbatim}[commandchars=\\\{\}]
\textcolor{lightgray}{4} \PY{k}{let} \PY{o}{_} \PY{o}{:} \PY{o}{![}\PY{n+nl}{untrusted}\PY{o}{]} \PY{o}{[}\PY{n+nl}{untrusted}\PY{o}{]} \PY{k+kt}{bool} \PY{o}{=}
\textcolor{lightgray}{5}     \PY{k}{unseal} neural_net \PY{k}{as} f \PY{k}{in} \PY{k}{seal} \PY{o}{\{} f user_input \PY{o}{\}}
\end{BVerbatim}
\end{figure}

Note that if we had written the brackets inside the \PY{k}{\texttt{seal}} as
above, this would no longer error. As mentioned in \autoref{sec:order}, the
suspension associated to
\PY{o}{\texttt{[}}\PY{n+nl}{\texttt{untrusted}}\PY{o}{\texttt{]}} renders this
inert. The sealed application expression for \texttt{f} has not necessarily
been thrust into the presence of data violating its robustness
requirements---\texttt{user\_input}---because the brackets suspend the inner
computation and capture its dependencies before reaching \PY{k}{\texttt{seal}}.
Observe that one may still interact with this expression without introducing
its invalid final result by, say, \PY{k}{\texttt{unseal}}ing it without using
$\produceEx{\forceEx{-}}$ on the contents. This is not the case for the
original variant of lines 4 and 5.

Surprisingly, raising an error here is a choice left to the implementation, not
a matter of soundness. The examples in this section will return errors upon a
conflict being detected in the type of a top-level definition, which is a
matter of checking the indistinguishability condition from
\autoref{sec:simultaneous}. A practical implementation may choose to return
errors more proactively in order to place them closer to the offending
machinery. Whatever is done, the conflict is guaranteed to be statically
tracked---and therefore the user to be informed ahead-of-time---due to
soundness. On the current example soundness says the following, which we will
substantiate fully in the next section.

\begin{property}[Robustness, informally]
    \label{cor:untrusted}
    If we have expressions $\typing{\PY{n+nl}{\alpha}}{x : \PY{o}{\texttt{![}}\PY{n+nl}{\alpha}\PY{o}{\texttt{]}}~\PY{k+kt}{\texttt{t}}}{\varnothing}{v}{\PY{o}{\texttt{[}}\PY{n+nl}{\alpha}\PY{o}{\texttt{]}}~\PY{k+kt}{\texttt{t\textquotesingle}}}$ and $\typing{\PY{n+nl}{\alpha}}{\varnothing}{\PY{n+nl}{\alpha}}{v_1, v_2}{\PY{o}{\texttt{![}}\PY{n+nl}{\alpha}\PY{o}{\texttt{]}}~\PY{k+kt}{\texttt{t}}}$ then $\Sub{v_1}{x}{v}$ behaves equivalently to $\Sub{v_2}{x}{v}$.
\end{property}

Set $\PY{n+nl}{\alpha} = \PY{n+nl}{\texttt{untrusted}}$ and $x =
\texttt{neural\_net}$, and imagine $v$ as the term on lines 4-5. This corollary
states that the semantics of $v$ is independent of \texttt{neural\_net} by way
of the behavior of the former being invariant under differing implementations
of the latter. The intuition is that \texttt{neural\_net} is not robust against
\PY{o}{\texttt{[}}\PY{n+nl}{\texttt{untrusted}}\PY{o}{\texttt{]}} computations,
so for the program to be correct, the latter must remain ignorant of the
former. That is, $v_1$ could be the implementation from line 3 and $v_2$ the
constant function returning \PY{n+nd}{\texttt{true}}, and both programs would
be considered to behave equivalently. This is done by having program equality,
to be made rigorous in \autoref{sec:metatheory}, be \textit{up to
indistinguishability.} Indistinguishability trivializes equality upon conflict,
which has the effect of isolating the offending usage from the rest of the
program which does not misbehave. This means that it is not possible to use
data from lines 4-5 in a program at the type in line 6.
\textbf{\autoref{cor:untrusted} justifies that \textit{tracking} suffices over
\textit{erroring} to ensure that data cannot be misused by conflicting
computations.}

The previous point is worth reiterating: from a \textit{formal} point of view, it is of
no matter whether we return errors. Programs in conflict have derivations under
the core type system from \autoref{sec:syntax-typing}. What we focus on instead
is separating misbehaving parts of the program from those well-behaved---so a
program whose top-level type indicates no conflict cannot misbehave. We do this
because non-conflict is the focal point of our approach to information flow,
\textit{defining its conception of dependency tracking} rather than simply
being an additional reasoning facility of it. As such, its most natural
phrasing is by far as a matter of separation of conflicted and non-conflicted
parts, not the allowability of conflict itself. Significant theoretical
advantages will be gained from this perspective in \autoref{sec:metatheory}.

\subsubsection{Again, for Confidentiality}

\begin{figure}
\begin{BVerbatim}[commandchars=\\\{\}]
\textcolor{lightgray}{1} \PY{k}{let} make_user_profile \PY{o}{:} \PY{o}{![}\PY{n+nl}{secret}\PY{o}{] (}\PY{k+kt}{int} \PY{o}{*} \PY{k+kt}{string} \PY{o}{->} \PY{k+kt}{data}\PY{o}{)} \PY{o}{=} \PY{k}{seal} \PY{o}{...}
\end{BVerbatim}
\vspace{1em}
\textcolor{lightgray}{\hrule}
\vspace{1em}
\begin{BVerbatim}[commandchars=\\\{\}]
\textcolor{lightgray}{2} \PY{k}{let} \PY{o}{_} \PY{o}{:} \PY{o}{[}\PY{n+nl}{untrusted}\PY{o}{]} \PY{o}{![}\PY{n+nl}{secret}\PY{o}{]} \PY{k+kt}{\texttt{data}} \PY{o}{=}
\textcolor{lightgray}{3}     \PY{o}{\{} \PY{k}{unseal} make_user_profile \PY{k}{as} f \PY{k}{in} \PY{k}{seal} \PY{o}{(}f \PY{n+nd}{0} user_input\PY{o}{)} \PY{o}{\}}
\textcolor{lightgray}{4} \error{let _ : [secret] ![secret] data =}
\textcolor{lightgray}{5}     \error{\{ unseal make_user_profile as f in seal (f 0 password) \}} \danger\phantom{\hspace{2.25em}}
\end{BVerbatim}
\caption{Excluding Secrets from Public Data}
\label{fig:revisiting-confid}
\end{figure}

We have just reviewed an example of integrity reasoning, acting to prevent an
untrusted portion of the program from compromising a sensitive one. Let us
exercise the same pattern towards confidentiality.
\autoref{sec:isolating-info-intro} gestured at a scenario where secrets are
prevented from entering public settings. \autoref{fig:revisiting-confid} works
a full program in this spirit.

Specifically, we consider a fragment of the machinery for a program
implementing a social media program. We start on line 1 with
\texttt{make\_user\_profile}, which models functionality to create a
public-facing user profile. It takes an \PY{k+kt}{\texttt{int}} for the
creation timestamp and a \PY{k+kt}{\texttt{string}} for the username and
returns some \PY{k+kt}{\texttt{data}} representing this information. We use it
much as we did \texttt{neural\_net}, with that on line 3 passing it
\texttt{user\_input} from the prior example. This poses no issue here, as one
might expect. The usage on line 5 runs into an error, risking exposing password
data in a public setting. \autoref{cor:untrusted} states that its semantics is
independent of \texttt{make\_user\_profile}. This is due to the conflict in its
type giving way to indistinguishability, meaning we can freely swap out
\texttt{make\_user\_profile} for the constant function ignoring its inputs.
This again has the practical effect of isolating misuses of
\texttt{make\_user\_profile} from the rest of the program.

\subsubsection{Confidentiality and Integrity}

Note that this example of confidentiality reasoning is structurally identical
to the previous covering integrity, with each splitting specifications across
the two modalities in analogous ways. \textbf{Despite accounting for both, the
division of reasoning within our system is not the one between confidentiality
and integrity,} but between dependencies and anti-dependencies. Surprisingly,
based on the examples shown, it seems that \textit{half} of each of
confidentiality and integrity is a matter of dependencies such as
\PY{o}{\texttt{[}}\PY{n+nl}{\texttt{untrusted}}\PY{o}{\texttt{]}} or
\PY{o}{\texttt{[}}\PY{n+nl}{\texttt{secret}}\PY{o}{\texttt{]}}, and the other
half is a matter of \textit{robustness against} those dependencies, as with
\PY{o}{\texttt{![}}\PY{n+nl}{\texttt{untrusted}}\PY{o}{\texttt{]}} and
\PY{o}{\texttt{![}}\PY{n+nl}{\texttt{secret}}\PY{o}{\texttt{]}}. The latter two
might be written under a conventional theory of information flow as
\PY{o}{\texttt{[}}\PY{n+nl}{\texttt{trusted}}\PY{o}{\texttt{]}} and
\PY{o}{\texttt{[}}\PY{n+nl}{\texttt{public}}\PY{o}{\texttt{]}}, using our
syntax for dependencies, with each respectively being allowed to combine with
\PY{o}{\texttt{[}}\PY{n+nl}{\texttt{untrusted}}\PY{o}{\texttt{]}} and
\PY{o}{\texttt{[}}\PY{n+nl}{\texttt{secret}}\PY{o}{\texttt{]}} to yield
\PY{o}{\texttt{[}}\PY{n+nl}{\texttt{untrusted}}\PY{o}{\texttt{]}} and
\PY{o}{\texttt{[}}\PY{n+nl}{\texttt{secret}}\PY{o}{\texttt{]}}. That is, the
latter \textit{infect} the former. However, this is insufficient for our usage
of \PY{o}{\texttt{![}}\PY{n+nl}{\texttt{untrusted}}\PY{o}{\texttt{]}} and
\PY{o}{\texttt{![}}\PY{n+nl}{\texttt{secret}}\PY{o}{\texttt{]}}.

The outlined scheme works for reasoning about which dependencies we
\textit{have}, but not for which dependencies we are \textit{not allowed to
have}---perhaps unsurprisingly, given the implied absence of anti-dependencies.
Consider a function at type
\mbox{\PY{o}{\texttt{[}}\PY{n+nl}{\texttt{trusted}}\PY{o}{\texttt{]}}
\PY{k+kt}{\texttt{bool}} \PY{o}{\texttt{->}} \PY{o}{\texttt{...}}}, which
seemingly wishes to avoid taking untrusted input. Despite its argument type, it
can still be passed untrusted information via indirect flows, say, conditional
branching surrounding its call site. Merely specifying its argument to be
\PY{o}{\texttt{[}}\PY{n+nl}{\texttt{trusted}}\PY{o}{\texttt{]}} does not
suffice. This issue cannot easily be solved in the domain of dependencies
alone. We gain the ability to specify against untrusted inputs in reasoning
about \textit{robustness against} dependencies, which by definition accounts
for the surrounding context and its indirect flows. Prior work tends to either
force users to manually inspect portions of the program to be free of undesired
labels \cite[Fig. 4]{orbaek1997trust}, which scales poorly in the presence of
polymorphism as noted in \autoref{sec:isolating-flow-intro}, or introduces
ad-hoc specification features---like a function-level robustness label checked
against a special \textit{program counter} label denoting the current control
flow dependencies [\citealp[\S 2.8]{myers1999mostly}; \citealp[Fig.
8]{cecchetti2017nonmalleable}]. \textbf{We reify robustness restrictions as
their \textit{own} first-class dependency tracking connective, recognizing its
equal stature to the usual one towards obtaining full-spectrum information flow
reasoning.} This ecumenical view justifies the preceding deconstruction of
confidentiality and integrity.

\subsection{Robust Downgrading}
\label{sec:robust-downgrading}

Downgrading \textit{removes} dependencies from the quasi-open and quasi-closed
modalities, for instance, eliding the \PY{n+nl}{\valpha} from an
\PY{o}{\texttt{![}}\PY{n+nl}{\valpha} \PY{n+nl}{\vbeta}\PY{o}{\texttt{]}}
\PY{k+kt}{\texttt{int}} to get
\PY{o}{\texttt{![}}\PY{n+nl}{\vbeta}\PY{o}{\texttt{]}} \PY{k+kt}{\texttt{int}}.
Robustness and information flow typing have previously made contact in service
of downgrading \cite{cecchetti2017nonmalleable, zdancewic2001robust},
specifically \textit{robust declassification}. \textit{Declassification} is
downgrading applied toward confidentiality, or dependencies regarding secrets.
\textit{Robust} declassification recognizes the sensitivity of declassification
to manipulation by its environment, protecting against untrusted input. A
stronger variant of this feature can be accessed in our system as a special
case of the reasoning tools already reviewed, owing to their first-class
treatment of robustness. As a motivating example, let us write a variant of the
password checker in \autoref{fig:password-check} which uses (1) downgrading and
(2) robust declassification to resist brute force guess attempts. We step
through each point in turn via the program in
\autoref{fig:robust-declassification}.

\begin{figure}
\begin{minipage}{0.445\textwidth}
\centering
    \begin{BVerbatim}[commandchars=\\\{\}]
\textcolor{lightgray}{1}  \PY{k}{module} \PY{k+kt}{RobustDeclassify} \PY{o}{:} \PY{k}{sig}
\textcolor{lightgray}{2}    \PY{k}{dependency} \PY{n+nl}{rd}
\textcolor{lightgray}{3} 
\textcolor{lightgray}{4}    \PY{k}{val} declassify \PY{o}{:} \PY{o}{\texttt{[}}\PY{n+nl}{rd} \PY{n+nl}{\valpha}\PY{o}{\texttt{]}} \PY{k+kt}{t} \PY{o}{->}
\textcolor{lightgray}{5}        \PY{o}{\texttt{[}}\PY{n+nl}{\valpha}\PY{o}{\texttt{]}} \PY{o}{![}\PY{n+nl}{untrusted}\PY{o}{]} \PY{k+kt}{t}
\textcolor{lightgray}{6}  \PY{k}{end} \PY{o}{=} \PY{k}{struct}
\textcolor{lightgray}{7}    \PY{k}{dependency} \PY{n+nl}{rd} \PY{o}{=} \PY{o}{[]}
\textcolor{lightgray}{8}
\textcolor{lightgray}{9}    \PY{k}{let} declassify \PY{o}{=} fun arg \PY{o}{->}
\textcolor{lightgray}{10}       \PY{o}{\{} \PY{k}{seal} arg \PY{o}{\}}
\textcolor{lightgray}{11} \PY{k}{end}
\textcolor{lightgray}{12}
\textcolor{lightgray}{13} \PY{k}{open} \PY{k+kt}{RobustDeclassify}
    \end{BVerbatim}
\end{minipage}
\begin{minipage}{0.545\textwidth}
\centering
    \begin{BVerbatim}[commandchars=\\\{\}]
\textcolor{lightgray}{14} \PY{k}{module} \PY{k+kt}{PasswordChecker} \PY{o}{:} \PY{k}{sig}
\textcolor{lightgray}{15}   \PY{k}{dependency} \PY{n+nl}{secret}
\textcolor{lightgray}{16}                                                                                                                                                      
\textcolor{lightgray}{17}   \PY{k}{val} password \PY{o}{:} \PY{o}{[}\PY{n+nl}{secret}\PY{o}{]} \PY{k+kt}{string}
\textcolor{lightgray}{18}   \PY{k}{val} check \PY{o}{:} \PY{k+kt}{string} \PY{o}{->}
\textcolor{lightgray}{19}        \PY{o}{![}\PY{n+nl}{untrusted}\PY{o}{]} \PY{k+kt}{bool}
\textcolor{lightgray}{20} \PY{k}{end} \PY{o}{=} \PY{k}{struct}
\textcolor{lightgray}{21}   \PY{k}{dependency} \PY{n+nl}{secret} \PY{o}{=} \PY{o}{[}\PY{n+nl}{rd}\PY{o}{]}
\textcolor{lightgray}{22}
\textcolor{lightgray}{23}   \PY{k}{let} password \PY{o}{=} \PY{o}{\{} \PY{l+s}{"takver"} \PY{o}{\}}
\textcolor{lightgray}{24}   \PY{k}{let} check \PY{o}{=} \PY{k}{fun} attempt \PY{o}{->}
\textcolor{lightgray}{25}       declassify \PY{o}{(}attempt \PY{o}{==} password\PY{o}{)}
\textcolor{lightgray}{26} \PY{k}{end}
    \end{BVerbatim}
\end{minipage}
\caption{Robust Password Checking}
\label{fig:robust-declassification}
\end{figure}

We begin on the left with a \PY{k}{\texttt{module}}
\PY{k+kt}{\texttt{RobustDeclassify}}. It introduces a \PY{k}{\texttt{dependency}}
\PY{n+nl}{\texttt{rd}} and a single method \texttt{declassify} which takes an
argument with dependencies \mbox{\PY{o}{\texttt{[}}\PY{n+nl}{\valpha}
\PY{n+nl}{\texttt{rd}}\PY{o}{\texttt{]}}}, peeling off the
\PY{n+nl}{\texttt{rd}} dependency while introducing an anti-dependency against
\PY{n+nl}{\texttt{untrusted}} information. This can be understood as an
existential
$\existsTy{\PY{n+nl}{\texttt{rd}}}{\PY{o}{\texttt{[}}\PY{n+nl}{\texttt{rd}}~\PY{n+nl}{\alpha}\PY{o}{\texttt{]}}~\PY{k+kt}{\texttt{t}}~\PY{o}{\texttt{->}}~\PY{o}{\texttt{[}}\PY{n+nl}{\alpha}\PY{o}{\texttt{]}}~\PY{o}{\texttt{![}}\PY{n+nl}{\texttt{untrusted}}\PY{o}{\texttt{]}}~\PY{k+kt}{\texttt{t}}}$.
Its implementation proceeds, recalling \textsc{T-Pack}, by determining an
implementation of \PY{n+nl}{\texttt{rd}}---here the empty set of dependencies
\PY{o}{\texttt{[}}\PY{o}{\texttt{]}}. We then write the body of
\texttt{declassify} against \PY{o}{\texttt{[}}\PY{o}{\texttt{]}}, taking in an
\texttt{arg} at type \PY{o}{\texttt{[}}\PY{n+nl}{\valpha}\PY{o}{\texttt{]}}
\PY{k+kt}{\texttt{t}} due to the chosen implementation of
\PY{n+nl}{\texttt{rd}}. Notably, \PY{n+nl}{\texttt{rd}} is elided. The rest
comes easily, simply enclosing \texttt{arg} inside both modalities and
returning it. Finally, we \PY{k}{\texttt{open}}
\PY{k+kt}{\texttt{RobustDeclassify}} into the current scope.

Continuing to the right, we have the \PY{k+kt}{\texttt{PasswordChecker}}
interface. It introduces a \PY{k}{\texttt{dependency}} \PY{n+nl}{\texttt{secret}} used to
represent \texttt{password} data. The type of \texttt{password} is unchanged
from \autoref{fig:password-check}. However, the type of \texttt{check} differs,
first in that its \PY{k+kt}{\texttt{bool}} result no longer depends on
\PY{n+nl}{\texttt{secret}}, but also in that it is now anti-dependent on
\PY{n+nl}{\texttt{untrusted}}. The reason for this becomes clear as we look to
the implementation for \PY{k+kt}{\texttt{PasswordChecker}}. We implement
\PY{n+nl}{\texttt{secret}} as
\PY{o}{\texttt{[}}\PY{n+nl}{\texttt{rd}}\PY{o}{\texttt{]}}. We are then forced
to invoke \texttt{declassify} in the implementation of \texttt{check} in order
to elide the dependency of its body on \PY{n+nl}{\texttt{rd}}---from
\texttt{password}---rather than having it implicitly removed by virtue of being
empty as in \PY{k+kt}{\texttt{RobustDeclassify}}. So we get
\PY{o}{\texttt{![}}\PY{n+nl}{\texttt{untrusted}}\PY{o}{\texttt{]}} in the
return type. This restriction prevents \texttt{check} from being affected by
untrusted inputs. Summarizing, \PY{k+kt}{\texttt{RobustDeclassify}} downgrades
internally in order to export the \texttt{declassify} operation, while
\PY{k+kt}{\texttt{PasswordChecker}} uses it to robustly declassify.

Consider what \autoref{cor:untrusted} means for this example. When we use
\texttt{check} in an
\PY{o}{\texttt{[}}\PY{n+nl}{\texttt{untrusted}}\PY{o}{\texttt{]}} setting we
find ourselves in a situation analogous to that on lines 4-5 of
\autoref{fig:revisiting}. As there, indistinguishability provides that any uses
of the \PY{o}{\texttt{![}}\PY{n+nl}{\texttt{untrusted}}\PY{o}{\texttt{]}}
\PY{k+kt}{\texttt{bool}} in the return value of \texttt{check} can be freely
replaced with any other value of the same type. That is, under an
\PY{o}{\texttt{[}}\PY{n+nl}{\texttt{untrusted}}\PY{o}{\texttt{]}} environment,
it is as if the declassification never occurred! The declassified data
\textit{itself} is made indistinguishable, or freely replaceable with any
other; it is effectively un-leaked. Practically, robust declassifiers used with
untrusted data cannot affect the rest of the program; they are isolated, as
before.

\begin{figure}[h]
\begin{BVerbatim}[commandchars=\\\{\}]
\textcolor{lightgray}{27} \PY{k}{let} \PY{o}{_} \PY{o}{:} \PY{o}{[}\PY{n+nl}{untrusted}\PY{o}{]} \PY{k+kt}{string} \PY{o}{\texttt{->}} \PY{o}{[}\PY{n+nl}{untrusted}\PY{o}{]} \PY{k+kt}{bool} \PY{o}{=}
\textcolor{lightgray}{28}     \PY{k}{fun} user_input \PY{o}{->} rate_limit \PY{o}{\{} check user_input \PY{o}{\}}
\end{BVerbatim}
\end{figure}

We realistically \textit{do} want to be able to pass
\PY{o}{\texttt{[}}\PY{n+nl}{\texttt{untrusted}}\PY{o}{\texttt{]}} user input to
this password checker, so we are not through with the example. It will not do
to permanently carry around the
\PY{o}{\texttt{![}}\PY{n+nl}{\texttt{untrusted}}\PY{o}{\texttt{]}} restriction.
Recalling that we specifically wanted to avoid \textit{brute force guessing}
resulting from untrusted input, the right move might be to connect
downgrading---or removing---the
\PY{o}{\texttt{![}}\PY{n+nl}{\texttt{untrusted}}\PY{o}{\texttt{]}} restriction
to a rate limiting operation. Downgrading anti-dependencies works analogously
to dependencies, using a \PY{o}{\texttt{[]}} implementation of
\PY{n+nl}{\texttt{untrusted}}. Assume an \PY{k+kt}{\texttt{Untrusted}} module
exporting \PY{n+nl}{\texttt{untrusted}} and a function
\mbox{\texttt{rate\_limit} \PY{o}{\texttt{:}}
\PY{o}{\texttt{[}}\PY{n+nl}{\valpha}\PY{o}{\texttt{]}}
\PY{o}{\texttt{![}}\PY{n+nl}{\texttt{untrusted}}\PY{o}{\texttt{]}}
\PY{k+kt}{\texttt{t}} \PY{o}{\texttt{->}}
\PY{o}{\texttt{[}}\PY{n+nl}{\valpha}\PY{o}{\texttt{]}} \PY{k+kt}{\texttt{t}}}.
The idea with \texttt{rate\_limit} is that it runs the computation passed to it
and does not allow itself to be invoked above some frequency.
It is not dependency elided because it expects to take a suspended computation
as argument, so taking the argument at the open modality is convenient.
Assuming the \PY{k+kt}{\texttt{PasswordChecker}} module is
\PY{k}{\texttt{open}}ed, we have the above. Only the
\PY{o}{\texttt{[}}\PY{n+nl}{\texttt{untrusted}}\PY{o}{\texttt{]}} from
\texttt{user\_input} remains. Note with $\PY{n+nl}{\alpha} =
\PY{n+nl}{\texttt{untrusted}}$, \texttt{rate\_limit} can downgrade data
\textit{in conflict} to data \textit{out of conflict}.
\textbf{Downgrading precludes us from always rejecting programs with
conflicts.} \autoref{sec:metatheory} will say more.

We impose here the requirement that rate limiting take place while granting
clients control over its granularity. This might be used to support some number
of rapid attempts before a timed lock-out, or perhaps logging into multiple
devices simultaneously. One must only ensure the argument to
\texttt{rate\_limit} does not already use brute force. A final modularity issue
remains.

\subsubsection{Delegating Conflicts}

It is slightly odd for \PY{k+kt}{\texttt{Untrusted}} to offer an operation
relevant largely to the \PY{k+kt}{\texttt{PasswordChecker}}. Avoiding brute
force may not be the definition of robustness against untrusted input
appropriate for other modules. To fix this, we introduce a new \PY{k}{\texttt{dependency}}
\PY{k+kt}{\texttt{PasswordChecker}}\PY{o}{\texttt{.}}\PY{n+nl}{\texttt{untrusted}}
implemented as \PY{o}{\texttt{[]}}, changing \texttt{check} to use it.
\texttt{rate\_limit} can then be made part of the interface of
\PY{k+kt}{\texttt{PasswordChecker}}, downgrading
\PY{k+kt}{\texttt{PasswordChecker}}\PY{o}{\texttt{.}}\PY{n+nl}{\texttt{untrusted}}.
So \texttt{rate\_limit} can \textit{only} be used to remove the restriction
against \PY{n+nl}{\texttt{untrusted}} from \texttt{check}.

But how is
\PY{k+kt}{\texttt{PasswordChecker}}\PY{o}{\texttt{.}}\PY{n+nl}{\texttt{untrusted}}
related to
\PY{k+kt}{\texttt{Untrusted}}\PY{o}{\texttt{.}}\PY{n+nl}{\texttt{untrusted}}?
We introduce a special operation \texttt{avoid} \PY{o}{\texttt{:}}
\PY{o}{\texttt{![}}\PY{k+kt}{\texttt{PasswordChecker}}\PY{o}{\texttt{.}}\PY{n+nl}{\texttt{untrusted}}\PY{o}{\texttt{]}}
\PY{k+kt}{\texttt{t}} \PY{o}{\texttt{->}}
\PY{o}{\texttt{![}}\PY{k+kt}{\texttt{Untrusted}}\PY{o}{\texttt{.}}\PY{n+nl}{\texttt{untrusted}}\PY{o}{\texttt{]}}
\PY{k+kt}{\texttt{t}} to the former module. Recall that when leaving the scope
of an existential, the third and fourth premises of \textsc{T-Open} from
\autoref{fig:computations} preclude mentioning the to-be-unbound dependency
variable. This means that a program using a dependency from some module leaving
scope must find a way to be rid of that dependency. Downgrading provides one
path, and \texttt{avoid} another. The intuition behind the latter is to reify
the meaning of a dependency going out of scope in terms of one still in scope.
A practical implementation would introduce syntax for generating these
\texttt{avoid} declarations, such as \PY{k}{\texttt{dependency}}
\PY{n+nl}{\texttt{untrusted}} \PY{k}{\texttt{conflicts}}
\PY{k+kt}{\texttt{Untrusted}}\PY{o}{\texttt{.}}\PY{n+nl}{\texttt{untrusted}}.
The generated declarations would be automatically called when leaving scope,
and the type checker could further use them to proactively raise errors between
conflicted dependencies. Alternate solutions to this issue, left as future
work, fall under the purview of the \textit{avoidance problem}; \citet[\S
1]{crary2020focused} gives an overview.

Under this setup, each module controls the downgrading of its own restrictions
while still referencing the global notion of that restriction. Existential
scoping as highlighted here also prevents existentials from perpetually hiding
conflicts, say, by abstracting the same dependency behind different variables.
It will force either their declassification or their \texttt{avoid}ance.

\subsubsection{Comparing to Standard Robust Downgrading, pt. 1}

We mentioned our approach exhibits a stronger variant of the robustness
reasoning explored previously by \citet{zdancewic2001robust}. Their treatment
of the preceding password checking example reveals why. Under standard robust
declassification, \texttt{declassify} would be a built-in operation of the
language imposing its own ad-hoc robustness restrictions against
\PY{o}{\texttt{[}}\PY{n+nl}{\texttt{untrusted}}\PY{o}{\texttt{]}}. These
restrictions would be satisfied by simply using \texttt{endorse} on
\texttt{attempt} before passing \mbox{\texttt{attempt} \PY{o}{\texttt{==}}
\texttt{password}} to \texttt{declassify} \cite[Fig.
1]{cecchetti2017nonmalleable}, where \textit{endorsement} acts to downgrade
\PY{o}{\texttt{[}}\PY{n+nl}{\texttt{untrusted}}\PY{o}{\texttt{]}} dependencies.
One might consider this implicit discharging of robustness restrictions for
trusted input a convenience, but given the preceding example, it is in reality
quite dangerous! Merely endorsing the attempt \PY{k+kt}{\texttt{string}} does
little to increase the safety of the ensuing declassification. That
declassification is safe from manipulation by untrusted input must be
\textit{independently} and \textit{explicitly} validated, as with
\texttt{rate\_limit}. Observe, however, that \texttt{rate\_limit} does
something quite distinct from endorsement: it downgrades the \textit{robustness
requirement itself}. This is possible because \textbf{our system possesses
\textit{internal} and therefore \textit{programmable} robustness reasoning
through the closed modality,} not specially attached to declassification
behavior or the like. In turn, we need not connect its relaxation to inbuilt
operations like \texttt{endorse}, instead deferring control to the interface-
or specification-writer. 

Beyond robust declassification, we can express robust \textit{endorsement} in
much the same way, producing restrictions on its output. One form of
this is \textit{transparent endorsement}, inducing
\PY{o}{\texttt{![}}\PY{n+nl}{\texttt{secret}}\PY{o}{\texttt{]}} restrictions.
Another variant inducing
\PY{o}{\texttt{![}}\PY{n+nl}{\texttt{untrusted}}\PY{o}{\texttt{]}} has not been
realized by prior work, despite its usefulness, because it risks breaking the
confidentiality-integrity duality. We do not mind doing so, because \textbf{our
duality is between dependencies and robustness against them.}

A few remarks follow. First, our specifications tend to simplify those of prior
work, where each annotation must carry separate confidentiality and integrity
components governed by opposing lattice operations \cite{arden2016calculus}.
Second, we have used verbose syntax throughout these examples to relate them
more closely to \autoref{sec:syntax-typing}. A practical implementation could
extend the \mbox{\PY{o}{\texttt{\{}\texttt{...}\texttt{\}}}} notation to the
closed modality, automatically unsealing and resealing data under it at the
bracket boundary. Bidirectional synthesis of the brackets would account for the
syntax of the introduction. We leave this to future work. Third, we speak here
in terms of dependencies and anti-dependencies since the call-by-value
fragment, by blurring polar behavior, resists discussion in terms of
information and flows. We now look to the metatheory of our duality which,
among other developments, will allow us to substantiate
\autoref{cor:untrusted}. This will be significantly more straightforward than
prior accounts of robustness.

\section{Metatheory}
\label{sec:metatheory}

\begin{figure}
\begin{align*}
    \exact{e}{e'}{\fTy{A}}{\Delta}{\phi} \triangleq&~ e \mapsto^* \retEx{v},\, e' \mapsto^* \retEx{v'},\, \exact{v}{v'}{A}{\Delta}{\phi} \\
    \exact{e}{e'}{\arrTy{A}{\ul{B}}}{\Delta}{\phi} \triangleq&~ \Delta \subseteq \Delta',\, \phi'_\open \vdash_{\textcolor{gray}{(\supseteq)}} \phi_\open,\, \exact{v}{v'}{A}{\Delta'}{\phi'} \implies \\
                                                             &~ \exact{\apEx{e}{v}}{\apEx{e'}{v'}}{\ul{B}}{\Delta'}{\phi'} \\
    \exact{e}{e'}{\forallTy{\PY{n+nl}{\alpha}}{\ul{B}}}{\Delta}{\phi} \triangleq&~ \Delta \subseteq \Delta',\, \Delta' \vdash \phi' \implies \\
        &~ \exact{\instEx{e}{\phi'}}{\instEx{e'}{\phi'}}{\Sub{\phi'}{\PY{n+nl}{\alpha}}{\ul{B}}}{\Delta'}{\phi} \\
    \exact{e}{e'}{\opTy{\phi'}{\ul{B}}}{\Delta}{\phi} \triangleq&~ \exact{\produceEx{e}}{\produceEx{e'}}{\ul{B}}{\Delta}{\phi\, \phi'} \\
    \exact{v}{v'}{\uTy{\ul{B}}}{\Delta}{\phi} \triangleq&~ \exact{\forceEx{v}}{\forceEx{v'}}{\ul{B}}{\Delta}{\phi} \\
    \exact{\unitEx}{\unitEx}{\unitTy}{\Delta}{\phi} \triangleq&~ \mathsf{true} \\
    \exact{\injlEx{v}}{\injlEx{v'}}{\sumTy{A_1}{A_2}}{\Delta}{\phi} \triangleq&~ \exact{v}{v'}{A_1}{\Delta}{\phi} \\
    \exact{\injrEx{v}}{\injrEx{v'}}{\sumTy{A_1}{A_2}}{\Delta}{\phi} \triangleq&~ \exact{v}{v'}{A_2}{\Delta}{\phi} \\
    \exact{\pairEx{v_1}{v_2}}{\pairEx{v_1'}{v_2'}}{\pairTy{A_1}{A_2}}{\Delta}{\phi} \triangleq&~ \exact{v_1}{v_1'}{A_1}{\Delta}{\phi},\, \exact{v_2}{v_2'}{A_2}{\Delta}{\phi} \\
    \exact{\packEx{\phi'}{v}}{\packEx{\phi'}{v'}}{\existsTy{\PY{n+nl}{\alpha}}{A}}{\Delta}{\phi} \triangleq&~ \Delta \vdash \phi',\, \exact{v}{v'}{\Sub{\phi'}{\PY{n+nl}{\alpha}}{A}}{\Delta}{\phi} \\
    \exact{\sealEx{\phi'}{v}}{\sealEx{\phi'}{v'}}{\clTy{\phi'}{A}}{\Delta}{\phi} \triangleq&~ \phi_\open \vdash \phi'_\closed,\, \exact{v}{v}{A}{\Delta}{\phi},\, \exact{v'}{v'}{A}{\Delta}{\phi} \mathrel{\mathsf{or}} \\
        &~ \exact{v}{v'}{A}{\Delta}{\phi}
\end{align*}
\caption{Logical Relation for Robust Non-Interference}
\label{fig:non-interference}
\end{figure}

Soundness for our system will take the form of an indistinguishability or
\textit{non-interference} property. We capture it through a binary logical
relation, shown in \autoref{fig:non-interference}. Following the type system in
\autoref{sec:syntax-typing}, it is bifurcated into two mutually recursive
sub-relations for values and computations. This relation will define
program equivalence as used by \autoref{cor:untrusted}. All results in this
section have been mechanized in Lean; proof scripts are provided in the
supplemental materials.

Starting with that for computations, the form of the relation is
$\exact{e}{e'}{\ul{B}}{\Delta}{\phi}$, read as ``$e$ is semantically related to
$e'$ at computation type $\ul{B}$ and ambient dependencies $\phi$, all scoped
under $\Delta$.'' The first case of the relation, at $\fTy{A}$, evaluates both
sides to $\retEx{v}$ and checks their membership in the value relation. The
evaluation relation used is entirely standard, so we elide it. This case gives
rise to the following two properties stating closure under forward and reverse
evaluation.

\begin{lemma}[Head Reduction]
    If $\exact{e_1}{e_2}{\ul{B}}{\Delta}{\phi}$ and $e_1 \evals^* e_1'$ then $\exact{e_1'}{e_2}{\ul{B}}{\Delta}{\phi}$.
\end{lemma}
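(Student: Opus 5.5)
We prove the statement by induction on the computation type $\ul{B}$, generalizing over $e_1$, $e_2$, $e_1'$, $\Delta$ and $\phi$, since the clauses for $\opTy{\phi'}{\ul{B}}$ and $\forallTy{\PY{n+nl}{\alpha}}{\ul{B}}$ change the ambient set or scoping context at which the recursive relation is taken. Because the value clause for $\uTy{\ul{B}}$ only mentions computations of the form $\forceEx{v}$, and values do not step, no simultaneous induction over value types is needed. We also first record two standard facts about the (elided, deterministic) operational semantics: evaluation is deterministic, and it is compatible with the evaluation contexts $\apEx{-}{v}$, $\instEx{-}{\phi'}$ and $\produceEx{-}$. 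That is, $e_1 \evals^* e_1'$ implies $\apEx{e_1}{v} \evals^* \apEx{e_1'}{v}$, and likewise for the other two.

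\textbf{Base case $\fTy{A}$.} From the hypothesis we have $e_1 \evals^* \retEx{v}$, $e_2 \evals^* \retEx{v'}$, and $\exact{v}{v'}{A}{\Delta}{\phi}$. Since $e_1 \evals^* e_1'$, determinism gives $e_1' \evals^* \retEx{v}$, because $\retEx{v}$ is terminal and so $e_1'$ must lie on the unique reduction path from $e_1$ to it. The same witnesses $v$ and $v'$ then establish $\exact{e_1'}{e_2}{\fTy{A}}{\Delta}{\phi}$. This step is where determinism is genuinely needed; I expect it to be the main obstacle, though only in the mild sense of having to state and prove determinism for the elided semantics, presumably a short induction on the step relation.

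\textbf{Negative cases.} For $\arrTy{A}{\ul{B}}$, we fix an extended $\Delta'$, an ambient set $\phi'$ with $\phi'_\open \vdash \phi_\open$, and related arguments $v$, $v'$. The hypothesis yields $\exact{\apEx{e_1}{v}}{\apEx{e_2}{v'}}{\ul{B}}{\Delta'}{\phi'}$. Compatibility gives $\apEx{e_1}{v} \evals^* \apEx{e_1'}{v}$, and the induction hypothesis at $\ul{B}$ concludes. The case $\forallTy{\PY{n+nl}{\alpha}}{\ul{B}}$ is identical, using $\instEx{-}{\phi'}$ and the induction hypothesis at $\Sub{\phi'}{\PY{n+nl}{\alpha}}{\ul{B}}$. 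For this to be well-founded, the induction must be on type size (or on a measure invariant under dependency substitution) rather than on structural subterms, which is fine because substituting a dependency set for $\PY{n+nl}{\alpha}$ does not change the type's shape. The case $\opTy{\phi'}{\ul{B}}$ uses the context $\produceEx{-}$ and the induction hypothesis at ambient set $\phi\,\phi'$; this is why $\phi$ must be generalized.

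The symmetric statement, with reduction on the right-hand side, follows by the same argument. The reverse direction (closure under backward evaluation) follows similarly, without needing determinism in the $\fTy{A}$ case, since there one simply prepends the given reduction steps.
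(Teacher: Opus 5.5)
Your proof is correct and takes the same approach as the paper: induction on $\ul{B}$, with determinism of evaluation doing the work in the $\fTy{A}$ case, and compatibility of reduction with the elimination contexts $\apEx{-}{v}$, $\instEx{-}{\phi'}$ and $\produceEx{-}$ handling the negative cases. Your generalization over $\Delta$ and $\phi$, and your use of a substitution-invariant measure for the $\forallTy{\alpha}{\ul{B}}$ case, simply spell out details that the paper's one-line proof leaves implicit.
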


\begin{proof}
    By induction on $\ul{B}$ and use of evaluation in the case for $\fTy{A}$.
\end{proof}

\begin{lemma}[Head Expansion]
    If $\exact{e_1}{e_2}{\ul{B}}{\Delta}{\phi}$ and $e_1' \evals^* e_1$ then $\exact{e_1'}{e_2}{\ul{B}}{\Delta}{\phi}$.
\end{lemma}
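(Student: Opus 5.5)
We argue by induction on $\ul{B}$, generalizing over $e_1$, $e_1'$, $e_2$, $\Delta$ and $\phi$, mirroring the proof of Head Reduction. Throughout we use two standard facts about the elided evaluation relation: it is deterministic, and it is closed under evaluation contexts. The evaluation contexts we need are application $\apEx{-}{v}$, instantiation $\instEx{-}{\phi'}$, and production $\produceEx{-}$. Closure under these contexts means that $e_1' \evals^* e_1$ implies $\apEx{e_1'}{v} \evals^* \apEx{e_1}{v}$, and likewise for the other two.

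In the base case $\fTy{A}$, the hypothesis gives $e_1 \mapsto^* \retEx{v}$, $e_2 \mapsto^* \retEx{v'}$ and $\exact{v}{v'}{A}{\Delta}{\phi}$. Composing $e_1' \evals^* e_1$ with $e_1 \evals^* \retEx{v}$ yields $e_1' \mapsto^* \retEx{v}$. The same witnesses $v, v'$ then establish $\exact{e_1'}{e_2}{\fTy{A}}{\Delta}{\phi}$. Determinism is not even needed here, since the relation only asks for the existence of such a reduction.

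The negative cases each reduce to the induction hypothesis through a congruence step.

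\emph{Case $\arrTy{A}{\ul{B}}$.} Fix $\Delta \subseteq \Delta'$, $\phi'_\open \vdash \phi_\open$, and related arguments $\exact{v}{v'}{A}{\Delta'}{\phi'}$. The hypothesis yields $\exact{\apEx{e_1}{v}}{\apEx{e_2}{v'}}{\ul{B}}{\Delta'}{\phi'}$. Congruence gives $\apEx{e_1'}{v} \evals^* \apEx{e_1}{v}$, and the induction hypothesis at $\ul{B}$ concludes.

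\emph{Case $\forallTy{\PY{n+nl}{\alpha}}{\ul{B}}$.} The argument is the same, using $\instEx{-}{\phi'}$. The induction hypothesis is now applied at $\Sub{\phi'}{\PY{n+nl}{\alpha}}{\ul{B}}$, which is not a structural subterm. We therefore induct on a size measure of $\ul{B}$ that is invariant under substitution of dependency sets, such as the number of type constructors. Substituting a dependency set for a dependency variable changes only indices and never adds constructors, so the measure is preserved.

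\emph{Case $\opTy{\phi'}{\ul{B}}$.} We use the context $\produceEx{-}$ and apply the induction hypothesis at ambient $\phi\,\phi'$. This is why the induction generalizes over $\phi$.

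The main obstacle is the $\fTy{A}$ case if the relation is read with $e_1 \mapsto^* \retEx{v}$ meaning evaluation \emph{to a terminal value}. In that reading, Head Reduction needs determinism to conclude that $e_1'$ still reaches $\retEx{v}$. For Head Expansion, however, transitivity of $\mapsto^*$ suffices, so this direction is the easier one. The only real care needed is the termination measure for the $\forall$ case. If that measure proves awkward, the fallback is to induct on the derivation of $\Delta \vdash \ul{B}$ quantified over all substitution instances. Value types never arise, since the statement concerns only computations and the value relation is carried through the $\fTy{A}$ case unchanged.
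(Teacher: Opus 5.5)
Your proposal is correct and matches the paper's proof: induction on $\ul{B}$, with evaluation (here, transitivity of $\evals^*$) doing the work in the $\fTy{A}$ case and the negative cases following from the induction hypothesis. Two of your steps fill in details the paper's one-line proof leaves implicit: the appeal to congruence of evaluation under $\apEx{-}{v}$, $\instEx{-}{\phi'}$ and $\produceEx{-}$, and the size measure invariant under dependency substitution for the $\forall$ case.
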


\begin{proof}
    By induction on $\ul{B}$ and use of evaluation in the case for $\fTy{A}$.
\end{proof}

Beyond this case, negative types are defined by their elimination forms, so the
relation for computations is defined by eliminating the expressions being
related. For instance, the definition of relatedness at the open modality
invokes $\produceEx{-}$ on each side, raising the ambient security level, and
terms at function type are related extensionally. Note that when the latter
assumes related arguments at $A$, it is at a raised ambient security level
$\phi'$. This is required to satisfy the following properties. $t$ and
$\mathbb{T}$ respectively denote terms and types generic over values and
computations.

\begin{lemma}[Dependency Monotonicity]
    \label{lem:dependency-mono}
    If $\exact{t}{t'}{\mathbb{T}}{\Delta}{\phi}$ and $\phi'_\open
    \vdash_{\textcolor{gray}{(\supseteq)}} \phi_\open$ then
    $\exact{t}{t'}{\mathbb{T}}{\Delta}{\phi'}$.
\end{lemma}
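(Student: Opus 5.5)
We prove \autoref{lem:dependency-mono} by induction on the type $\mathbb{T}$, simultaneously for the value and computation relations, generalizing over the ambient sets $\phi, \phi'$, the terms, and $\Delta$. The induction must be on types rather than on the relation, since the relation is defined by recursion on types. The case at $\Sub{\phi''}{\PY{n+nl}{\alpha}}{-}$ is handled by a measure that ignores dependency sets, e.g.\ the size of the type, which substitution of dependency sets preserves. Throughout we use only two facts about $\phi'_\open \vdash \phi_\open$: it is transitive, and it is preserved by appending the same set on both sides, so that $\phi'_\open \vdash \phi_\open$ implies $(\phi'\,\phi'')_\open \vdash (\phi\,\phi'')_\open$. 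Both are immediate from reading it as $\phi' \supseteq \phi$.

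The structural cases are routine.
\begin{itemize}
\item At $\fTy{A}$, we keep the same evaluations to $\retEx{v}$ and $\retEx{v'}$ and apply the value induction hypothesis at $A$.
\item At $\unitTy$ there is nothing to show.
\item Sums, pairs and $\existsTy{\PY{n+nl}{\alpha}}{A}$ recurse directly. For $\existsTy{\PY{n+nl}{\alpha}}{A}$, the scoping side condition $\Delta \vdash \phi''$ does not mention the ambient set, and the induction hypothesis is applied at $\Sub{\phi''}{\PY{n+nl}{\alpha}}{A}$, which is smaller by our measure.
\item At $\uTy{\ul{B}}$ we appeal to the computation induction hypothesis on the forced terms.
\item At $\forallTy{\PY{n+nl}{\alpha}}{\ul{B}}$, the ambient $\phi$ is passed through unchanged after instantiation, so the hypothesis at the instantiated body suffices.
\item At $\opTy{\phi''}{\ul{B}}$ we have $\exact{\produceEx{e}}{\produceEx{e'}}{\ul{B}}{\Delta}{\phi\,\phi''}$. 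By the appending fact, $(\phi'\,\phi'')_\open \vdash (\phi\,\phi'')_\open$, and the induction hypothesis at $\ul{B}$ gives the result.
\end{itemize}

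The function case $\arrTy{A}{\ul{B}}$ needs no induction hypothesis at all, and this is exactly why the relation quantifies over raised levels. Assume $\Delta \subseteq \Delta''$, $\phi''_\open \vdash \phi'_\open$, and related arguments at level $\phi''$. Transitivity gives $\phi''_\open \vdash \phi_\open$, so the original hypothesis applies directly at the extended world.

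The delicate case is $\clTy{\phi''}{A}$, because it is the only place where the ambient set appears in a non-monotone-looking premise, $\phi_\open \vdash \phi''_\closed$. There are two disjuncts.
\begin{itemize}
\item In the \emph{indistinguishability disjunct} we have $\phi_\open \vdash \phi''_\closed$ together with reflexive relatedness of $v$ and of $v'$ at $\phi$. We need $\phi'_\open \vdash \phi''_\closed$, which follows by cut from $\phi'_\open \vdash \phi_\open$. Concretely, a conjunction over a superset entails the conjunction over the subset, which entails the disjunction over $\phi''$; this is the overlap condition read as sets. The two reflexive facts lift by the induction hypothesis at $A$.
\item The \emph{plain disjunct} lifts by the induction hypothesis at $A$ directly.
\end{itemize}
Hence raising the ambient level can only move a pair from the second disjunct into the first, never out of it, and monotonicity holds.

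I expect the main obstacle to be bookkeeping rather than mathematics. The first issue is the well-foundedness measure across the mutually recursive relations and substitution, which is why I fix a size measure that is invariant under substituting dependency sets. The second is making the entailment lemmas for $\phi_\open$ and $\phi_\closed$ explicit: transitivity, compatibility with appending, and cut against the closed side.
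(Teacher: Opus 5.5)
Your proposal is correct and follows the paper's proof: induction on $\mathbb{T}$, with the function case $\arrTy{A}{\ul{B}}$ closed without the induction hypothesis, by composing the Kripke-quantified $\phi''_\open \vdash \phi'_\open$ with the assumption $\phi'_\open \vdash \phi_\open$ and applying the original hypothesis at the extended world. The extra details you give are faithful elaborations of the paper's one-line sketch: the substitution-invariant size measure for the $\forall$ and $\exists$ cases, cut in the indistinguishability disjunct of $\clTy{\phi''}{A}$, and compatibility with appending in the $\opTy{\phi''}{\ul{B}}$ case.
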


\begin{proof}
    By induction on $\mathbb{T}$, using the further assumption $\phi'_\open
    \vdash \phi_\open$ in the case for $\arrTy{A}{\ul{B}}$.
\end{proof}

\begin{lemma}[Type Monotonicity]
    If $\exact{t}{t'}{\mathbb{T}}{\Delta}{\phi}$ and $\mathbb{T} \subseteq \mathbb{T'}$
    then $\exact{t}{t'}{\mathbb{T'}}{\Delta}{\phi}$.
\end{lemma}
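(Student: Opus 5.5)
We prove the statement by induction on the derivation of $\mathbb{T} \subseteq \mathbb{T}'$. Since the relation is defined by recursion on types, we carry out the induction simultaneously for values and computations, generalizing over $t, t', \Delta, \phi$. Reflexivity is immediate. If the subtyping relation includes transitivity as a rule, that case follows by chaining the two induction hypotheses. The structural congruence cases unfold the definition of the relation at both types and apply the induction hypothesis to the components.
\begin{itemize}
\item For $\pairTy{A_1}{A_2}$ and the sum cases, the relation is defined componentwise, so we apply the hypothesis to each side.
\item For $\fTy{A}$, we reuse the same evaluations $e \mapsto^* \retEx{v}$ and $e' \mapsto^* \retEx{v'}$ and lift $\exact{v}{v'}{A}{\Delta}{\phi}$.
\item For $\uTy{\ul{B}}$ and $\opTy{\phi}{\ul{B}}$ with the same index, we unfold to $\forceEx{-}$ or $\produceEx{-}$ and apply the hypothesis.
\end{itemize}

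The quantifier cases need a little more care.
\begin{itemize}
\item For $\forallTy{\PY{n+nl}{\alpha}}{\ul{B}}$, the hypothesis must hold under every instantiation $\Sub{\phi'}{\PY{n+nl}{\alpha}}{\ul{B}} \subseteq \Sub{\phi'}{\PY{n+nl}{\alpha}}{\ul{B}'}$. We therefore need a small auxiliary lemma that subtyping is stable under substitution of dependency sets. This holds because the entailments $\phi_\open \vdash \phi'_\open$ and $\phi_\closed \vdash \phi'_\closed$ are stable under replacing an atom by a conjunction (respectively disjunction) of atoms. The existential case is handled the same way.
\item For function types, if subtyping is contravariant in the argument, the argument case uses the hypothesis on $A' \subseteq A$ to move the assumed related arguments back to $A$. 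If arrows are only covariant in $\ul{B}$, the case is simpler.
\end{itemize}

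The two interesting cases are the modality rules \textsc{Sub-Open} and \textsc{Sub-Closed}.
\begin{itemize}
\item \textbf{Open modality.} From $\exact{e}{e'}{\opTy{\phi_1}{\ul{B}}}{\Delta}{\phi}$ we have $\exact{\produceEx{e}}{\produceEx{e'}}{\ul{B}}{\Delta}{\phi\,\phi_1}$. Since $\phi_2 \supseteq \phi_1$, we get $(\phi\,\phi_2)_\open \vdash (\phi\,\phi_1)_\open$. \autoref{lem:dependency-mono} then raises the ambient set to $\phi\,\phi_2$, and the induction hypothesis on $\ul{B} \subseteq \ul{B}'$ finishes the case.
\item \textbf{Closed modality} (from $\clTy{\phi_1}{A}$ to $\clTy{\phi_2}{A'}$ with $\phi_{1\closed} \vdash \phi_{2\closed}$). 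The relation there is a disjunction.
  \begin{itemize}
  \item In the non-conflicted branch we have $\exact{v}{v'}{A}{\Delta}{\phi}$, and the hypothesis gives the same at $A'$.
  \item In the conflicted branch we have $\phi_\open \vdash \phi_{1\closed}$. Composing with $\phi_{1\closed} \vdash \phi_{2\closed}$ gives $\phi_\open \vdash \phi_{2\closed}$. The diagonal facts $\exact{v}{v}{A}{\Delta}{\phi}$ and $\exact{v'}{v'}{A}{\Delta}{\phi}$ lift to $A'$ by the hypothesis.
  \end{itemize}
\end{itemize}

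I expect the main obstacle to be a mismatch between the seal index and the type index. The relation is stated for $\sealEx{\phi'}{v}$ at $\clTy{\phi'}{A}$ with the same $\phi'$. After subsumption, however, a value $\sealEx{\phi_1}{v}$ must be related at $\clTy{\phi_2}{A'}$. I would handle this first by checking whether the mechanized relation ignores the seal annotation, for instance by pattern matching on $\sealEx{\_}{v}$ at any closed type. If it does not, I would instead make the operational semantics erase seal annotations, so that the index on the term is irrelevant and the clause can be read with $\phi_2$. Failing both, I would generalize the clause to allow any seal index below the type's index.
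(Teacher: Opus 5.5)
Your proposal is correct and follows the paper's proof. The paper does a structural induction on $\mathbb{T}$; your induction on the syntax-directed subtyping derivation is the same argument and handles any contravariant argument position more cleanly. Both use \autoref{lem:dependency-mono} exactly in the quasi-open case, to raise the ambient set from $\phi\,\phi_1$ to $\phi\,\phi_2$ before applying the hypothesis. Your handling of the quasi-closed case, cutting $\phi_\open \vdash (\phi_1)_\closed$ against $(\phi_1)_\closed \vdash (\phi_2)_\closed$, spells out a detail the paper's one-line proof leaves implicit. So does your observation that the literal clause for $\sealEx{\phi'}{v}$ at $\clTy{\phi'}{A}$ no longer matches the seal annotation to the type index after subsumption.
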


\begin{proof}
    By induction on $\mathbb{T}$ and application of \autoref{lem:dependency-mono} in the case for $\opTy{\phi'}{\ul{B}}$.
\end{proof}

This means we can freely upgrade the relation along the subtyping relationship
from \autoref{sec:syntax-typing}. Inspecting the relation, the cases for the
computation types $\opTy{\phi'}{\ul{B}}$ and $\arrTy{A}{\ul{B}}$ are the only
ones modifying the ambient dependencies, as promised in
\autoref{sec:simultaneous}. A further monotonicity property arises from the
cases for functions and universal quantifiers extending the environment
$\Delta$.

\begin{lemma}[Kripke Monotonicity]
    If $\exact{t}{t'}{\mathbb{T}}{\Delta}{\phi}$ and $\Delta \subseteq \Delta'$
    then $\exact{t}{t'}{\mathbb{T}}{\Delta'}{\phi}$.
\end{lemma}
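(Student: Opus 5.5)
We proceed by induction on the type $\mathbb{T}$, handling the value and computation relations simultaneously as they are defined mutually. The statement is generalized over $\phi$ and the related terms $t, t'$: in the $\forall$ case, for example, the substituted type $\Sub{\phi''}{\PY{n+nl}{\alpha}}{\ul{B}}$ is not a structural subterm of $\mathbb{T}$. The induction will therefore be on a size measure of types that ignores dependency sets, which substitution of a dependency set for a dependency variable preserves. This is the same measure that must already underlie the well-foundedness of the relation in \autoref{fig:non-interference}, so it is available.

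The cases that \emph{quantify} over extensions of $\Delta$, namely $\arrTy{A}{\ul{B}}$ and $\forallTy{\PY{n+nl}{\alpha}}{\ul{B}}$, are immediate by transitivity of $\subseteq$. For these we are given the relation at $\Delta$ and must show it at $\Delta'$. Let $\Delta'' \supseteq \Delta'$ be the further extension supplied by the definition, with arguments related at $\Delta''$ (or an instantiation $\phi''$ satisfying $\Delta'' \vdash \phi''$). Since $\Delta \subseteq \Delta''$, the hypothesis applies directly, and no appeal to the induction hypothesis is needed. This is exactly the reason the definition is phrased in Kripke style.

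The cases $\fTy{A}$, $\uTy{\ul{B}}$, $\opTy{\phi'}{\ul{B}}$, $\unitTy$, sums, products and $\clTy{\phi'}{A}$ do not inspect $\Delta$ except by passing it to subcomponents, so each follows by the induction hypothesis at the immediate subtype. For $\fTy{A}$ the evaluations are reused unchanged; for $\opTy{\phi'}{\ul{B}}$ we apply the induction hypothesis at $\ul{B}$ with ambient set $\phi\,\phi'$. For the closed modality both disjuncts are handled. The left disjunct's side condition $\phi_\open \vdash \phi'_\closed$ is independent of $\Delta$, and its two reflexive relatedness facts lift by the induction hypothesis.

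The only case with a genuine side condition, and the one I expect to need care, is the existential. There we must show $\Delta' \vdash \phi'$ from $\Delta \vdash \phi'$, which is a weakening lemma for scope checking. It holds by a straightforward induction on $\phi'$, since every dependency variable in scope under $\Delta$ is in scope under $\Delta'$. We then apply the induction hypothesis at $\Sub{\phi'}{\PY{n+nl}{\alpha}}{A}$, which is again smaller under the measure. In the mechanization with De Bruijn indices, the extension $\Delta \subseteq \Delta'$ presumably becomes a renaming, so types and terms must be shifted accordingly. The main obstacle is then the commutation of renaming with dependency substitution in the $\forall$ and $\exists$ cases, and I would discharge it with a standard renaming--substitution lemma.
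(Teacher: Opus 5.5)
Your proposal is correct and follows the paper's proof. Like the paper, you induct on $\mathbb{T}$, discharge the arrow and $\forall$ cases directly from the hypothesis via $\Delta \subseteq \Delta' \subseteq \Delta''$, and handle all remaining cases by the induction hypothesis; your treatment of the existential case (weakening $\Delta \vdash \phi'$ to $\Delta' \vdash \phi'$, and using a size measure that ignores dependency sets to justify the induction hypothesis at $\Sub{\phi'}{\alpha}{A}$) fills in details that the paper's one-line proof leaves implicit.
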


\begin{proof}
    By induction on $\mathbb{T}$, using the assumption $\Delta \subseteq \Delta'$ in the cases for $\arrTy{A}{\ul{B}}$ and $\forallTy{\PY{n+nl}{\alpha}}{\ul{B}}$.
\end{proof}

The above properties establish $\phi$ and $\Delta$ as \textit{Kripke worlds}
within the relation, with $\phi'_\open \vdash \phi_\open$ and $\Delta \subseteq
\Delta'$ known as \textit{Kripke quantification}. Importantly, the dependencies
\PY{n+nl}{\valpha} contained in $\Delta$ are exactly \textit{included middles}
from a semantic perspective, as noted in \autoref{sec:versus}. They will be
those relevant to indistinguishability as deployed by the relation.

Moving to the value relation $\exact{v}{v'}{A}{\Delta}{\phi}$, it is read as
``$v$ is semantically related to $v'$ at value type $A$ and ambient
dependencies $\phi$, all scoped under $\Delta$.'' Positive types are defined by
their introduction forms, so the value relation---but for $\uTy{\ul{B}}$---is
defined by pattern matching the values to be related with the introduction
forms expected of their types. Values not of the expected form are considered
unrelated. Any values inside matching introduction forms are further required
to be related. For instance, the case for pairs $A_1 \otimes A_2$ requires both
sides match the pattern $\pairEx{\_}{\_}$ and that their components be
pointwise related at $A_1$ and $A_2$.

The exception to the requirement that inner values be related occurs in the
all-important case for the quasi-closed modality $\clTy{\phi'}{A}$. There are
\textit{two} paths for values $v, v'$ to be related under it: (1) to satisfy
$\phi_\open \vdash \phi'_\closed$ with $v$ and $v'$ \textit{reflexively
related} at the inner type, or (2) for $v$ and $v'$ to simply be related at the
inner type as usual. The former uses the entailment $\phi_\open \vdash
\phi'_\closed$, mentioned in \autoref{sec:simultaneous} as the
indistinguishability condition, to relate values under the modality
nearly vacuously. Reflexive relatedness merely ensures that each value still
individually behaves according to its type. The property resulting from this
case relies on certain auxiliary lemmas.

\begin{lemma}[Symmetry]
    \label{thm:symm}
    If $\exact{t}{t'}{\mathbb{T}}{\Delta}{\phi}$ then $\exact{t'}{t}{\mathbb{T}}{\Delta}{\phi}$.
\end{lemma}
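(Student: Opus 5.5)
We argue by induction on the type $\mathbb{T}$, proving the value and computation cases simultaneously because the two relations are mutually recursive. In every case the goal is to reuse the definitional clause of \autoref{fig:non-interference} with its two sides swapped. The statement is quantified over all $\Delta$, $\phi$, $t$ and $t'$, so the induction hypothesis must be stated generally enough to be applied at the extended worlds $\Delta'$, $\phi'$ and at the substituted types that the clauses introduce.

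For the computation types the argument is routine.
\begin{itemize}
\item At $\fTy{A}$ we take the two evaluations $e \mapsto^* \retEx{v}$ and $e' \mapsto^* \retEx{v'}$ in the opposite order and apply the hypothesis at $A$ to the values.
\item At $\opTy{\phi'}{\ul{B}}$ we apply the hypothesis at $\ul{B}$ to the $\produceEx{-}$ forms.
\item At $\uTy{\ul{B}}$ we do the same with the $\forceEx{-}$ forms.
\item At $\arrTy{A}{\ul{B}}$ we are given a related pair $v \sim v'$ at $A$ in some extended world $\Delta', \phi'$. We first apply the hypothesis at $A$ to obtain $v' \sim v$, then feed this to the assumed relation to get $\apEx{e}{v'} \sim \apEx{e'}{v}$, and finally apply the hypothesis at $\ul{B}$.
\item At $\forallTy{\alpha}{\ul{B}}$ we instantiate at $\phi'$ and use the hypothesis at $\Sub{\phi'}{\alpha}{\ul{B}}$.
\end{itemize}

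For the value types we argue by case analysis on the introduction forms. The cases for $\unitTy$, sums and pairs follow componentwise from the hypothesis. The $\existsTy{\alpha}{A}$ case is handled the same way, since both sides carry the same witness $\phi'$ and the scoping premise $\Delta \vdash \phi'$ does not depend on the order of the sides.

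The quasi-closed case $\clTy{\phi'}{A}$ is the only one with real content. In its first disjunct we have $\phi_\open \vdash \phi'_\closed$ together with the two reflexive facts $v \sim v$ and $v' \sim v'$. The entailment does not mention the terms at all, and the conjunction of the two reflexive facts is literally symmetric, so we simply swap the two conjuncts; no induction hypothesis is needed. In the second disjunct we apply the hypothesis at $A$.

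The main obstacle is justifying the induction, because substituted types appear in the $\forall$ and $\exists$ clauses. $\Sub{\phi'}{\alpha}{A}$ is not a syntactic subterm of $\existsTy{\alpha}{A}$, so plain structural induction does not directly apply. However, substituting a dependency set for a dependency variable does not change the shape of the type: it only changes the indices of the modalities. We therefore plan to induct on a size measure that ignores dependency indices, equivalently on the type's structure modulo dependency substitution, after first checking that substitution preserves this size. This mirrors how the relation itself must be shown well-founded, so the same measure should be available in the mechanization.
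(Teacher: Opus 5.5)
Your proof is correct and follows the paper's approach: the paper's proof is just induction on $\mathbb{T}$, taken jointly over value and computation types. In each clause of the relation you swap the two sides, and in the quasi-closed case the first disjunct is already symmetric. Your size measure that is invariant under dependency substitution, which handles $\Sub{\phi'}{\alpha}{A}$ in the $\forall$ and $\exists$ cases, makes explicit a well-foundedness argument that the paper's one-line proof leaves implicit.
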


\begin{proof}
    By induction on $\mathbb{T}$.
\end{proof}

\begin{lemma}[Transitivity]
    \label{thm:trans}
    If $\exact{t}{t'}{\mathbb{T}}{\Delta}{\phi}$ and $\exact{t'}{t''}{\mathbb{T}}{\Delta}{\phi}$ then $\exact{t}{t''}{\mathbb{T}}{\Delta}{\phi}$.
\end{lemma}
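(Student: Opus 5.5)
We prove the statement by induction on $\mathbb{T}$, generalizing over $\Delta$, $\phi$, and all three terms at once. This generality is needed because the function and quantifier cases move to extended worlds. The value and computation relations are handled together by mutual induction, mirroring the structure of the proofs of \autoref{thm:symm} and the monotonicity lemmas.

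\textbf{Computation types.} At $\fTy{A}$ we have $e \mapsto^* \retEx{v}$, $e' \mapsto^* \retEx{v'}$, $e' \mapsto^* \retEx{w'}$, and $e'' \mapsto^* \retEx{v''}$. Since evaluation is deterministic, $v' = w'$, and the induction hypothesis at $A$ gives $\exact{v}{v''}{A}{\Delta}{\phi}$.

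At $\arrTy{A}{\ul{B}}$ we are given an extended world $\Delta', \phi'$ and related arguments $\exact{v}{v'}{A}{\Delta'}{\phi'}$. The obstacle is that the hypotheses only relate the pairs of applications $(\apEx{e}{v}, \apEx{e'}{v'})$ and $(\apEx{e'}{?}, \apEx{e''}{v'})$; the middle term needs an argument of its own. The standard move is to first derive $\exact{v'}{v'}{A}{\Delta'}{\phi'}$ from \autoref{thm:symm} and the induction hypothesis for transitivity at $A$, since $v' \sim v \sim v'$. We then use the first hypothesis with $(v, v')$ and the second with $(v', v')$, and chain the results with the induction hypothesis at $\ul{B}$. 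The cases $\forallTy{\PY{n+nl}{\alpha}}{\ul{B}}$ and $\opTy{\phi'}{\ul{B}}$ are direct appeals to the induction hypothesis on the instantiated, respectively produced, terms; the substituted body $\Sub{\phi'}{\PY{n+nl}{\alpha}}{\ul{B}}$ is handled by inducting on a size measure that ignores dependency sets. $\uTy{\ul{B}}$ reduces to $\ul{B}$ under $\forceEx{-}$.

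\textbf{Value types.} For $\unitTy$, sums, pairs, and existentials, the introduction forms on both sides must coincide: both injections agree, and both $\texttt{pack}$s carry the same $\phi'$ because the definition forces syntactic equality. The claim then follows componentwise from the induction hypothesis.

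\textbf{Quasi-closed modality.} This is the case I expect to be the main obstacle. Let $v, v', v''$ be the contents, with the two disjuncts written (I) for the indistinguishability branch and (R) for the ordinary related branch. If either hypothesis is (I), we know $\phi_\open \vdash \phi'_\closed$, and we aim for (I) in the conclusion. This only requires reflexivity of $v$ and of $v''$. When a hypothesis is (I), that reflexivity is given directly. When a hypothesis is (R), say $v \sim v'$, we obtain $v \sim v$ as $v \sim v' \sim v$ using \autoref{thm:symm} and the induction hypothesis for transitivity at $A$; the case for $v''$ is the same. If both hypotheses are (R), we conclude (R) by the induction hypothesis at $A$.

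The key point throughout is that transitivity on the inner type also supplies the reflexivity facts needed to restore the diagonal. So the induction hypothesis must be available at $A$ for arbitrary terms, which the generalized statement provides.
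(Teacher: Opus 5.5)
Your proof is correct and follows the paper's own route. You induct on $\mathbb{T}$, generalized over the Kripke world, and use Symmetry together with the inner transitivity hypothesis to obtain the reflexive fact $v' \sim v'$ that lets you chain the two applications in the $\arrTy{A}{\ul{B}}$ case. Your treatment of the quasi-closed case, where you recover reflexivity of the side not covered by the indistinguishability branch in the same way, and your size measure for the substituted bodies in the $\forall$ and $\exists$ cases, fill in details that the paper's one-line proof leaves implicit.
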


\begin{proof}
    By induction on $\mathbb{T}$, using \autoref{thm:symm} for reflexivity in the $\arrTy{A}{\ul{B}}$ case.
\end{proof}

We use these to state the \textit{sealing lemma}, which captures the essence of
indistinguishability.

\begin{lemma}[Seal]
    \label{lem:seal}
    If \textcolor{Red}{$\exact{t}{t}{\mathbb{T}}{\Delta}{\phi}$, $\exact{t'}{t'}{\mathbb{T}}{\Delta}{\phi}$,} \textcolor{Purple}{$\mathbb{T} \geq \phi'$,} and \textcolor{Blue}{$\phi_\open \vdash \phi'_\closed$} then $\exact{t}{t'}{\mathbb{T}}{\Delta}{\phi}$.
\end{lemma}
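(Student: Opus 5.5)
The argument is by induction on the derivation of the sealing judgement $\mathbb{T} \geq \phi'$, generalizing over $t, t'$ and over the world $(\Delta, \phi)$. The computation cases need the extra generality because the relation moves to larger worlds. Throughout, the hypotheses $\exact{t}{t}{\mathbb{T}}{\Delta}{\phi}$ and $\exact{t'}{t'}{\mathbb{T}}{\Delta}{\phi}$ do two jobs: they force each side into the introduction form the value relation expects, and they supply reflexive relatedness of the components.

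\textbf{Value base cases.}
\begin{itemize}
\item \textsc{S-Unit}: reflexivity forces $t = t' = \unitEx$, so the goal holds trivially.
\item \textsc{S-Sealed}: we have $\phi'_\closed \vdash \phi''_\closed$ for $\mathbb{T} = \clTy{\phi''}{A}$. Reflexivity gives $t = \sealEx{\phi''}{v}$ and $t' = \sealEx{\phi''}{v'}$. Unfolding either disjunct of the closed case yields $\exact{v}{v}{A}{\Delta}{\phi}$, and likewise for $v'$. Since entailment is transitive, $\phi_\open \vdash \phi'_\closed \vdash \phi''_\closed$, so we take the first disjunct of the relation. This is the one place the modality's indistinguishability clause is used.
\end{itemize}

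\textbf{Value recursive cases.}
\begin{itemize}
\item \textsc{S-Pair}: reflexivity gives pair shapes with reflexively related components. The induction hypothesis applies componentwise.
\item \textsc{S-Unsealed}: reflexivity again exposes $\sealEx{\phi''}{v}$ and $\sealEx{\phi''}{v'}$ with $v, v'$ reflexively related at $A$. The induction hypothesis on $A \geq \phi'$ gives $\exact{v}{v'}{A}{\Delta}{\phi}$, which is the second disjunct.
\item \textsc{S-Thunk}: this reduces immediately to the computation case on $\forceEx{t}$ and $\forceEx{t'}$.
\end{itemize}
There is no sum or existential case to handle, since those types are never sealed.

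\textbf{Computation cases.} We unfold the elimination-form definitions.
\begin{itemize}
\item $\arrTy{A}{\ul{B}}$: given $\Delta \subseteq \Delta''$, an entailment $\psi_\open \vdash \phi_\open$, and $\exact{v}{v'}{A}{\Delta''}{\psi}$, we must relate $\apEx{t}{v}$ and $\apEx{t'}{v'}$. Applying the reflexive hypotheses to $\exact{v}{v}{A}{\Delta''}{\psi}$ and $\exact{v'}{v'}{A}{\Delta''}{\psi}$ gives reflexivity of $\apEx{t}{v}$ and $\apEx{t'}{v'}$ at $\ul{B}$. The two diagonal argument instances come from \autoref{thm:symm} and \autoref{thm:trans}, which together make any related value reflexively related. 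Since $\psi_\open \vdash \phi_\open \vdash \phi'_\closed$, the induction hypothesis gives $\exact{\apEx{t}{v}}{\apEx{t'}{v'}}{\ul{B}}{\Delta''}{\psi}$. Here the argument on the right is $v'$ while the hypothesis fixed $v$, so this is not yet the goal. To finish, combine via \autoref{thm:trans} with $\exact{\apEx{t}{v}}{\apEx{t}{v'}}{\ul{B}}{\Delta''}{\psi}$, which is the reflexive hypothesis on $t$ applied to the related pair $v, v'$.
\item $\opTy{\phi''}{\ul{B}}$: the world becomes $\phi\,\phi''$, and the indistinguishability premise is preserved because its conjunction only grows.
\item $\forallTy{\PY{n+nl}{\alpha}}{\ul{B}}$: instantiating sends the sealing premise to $\Sub{\phi''}{\PY{n+nl}{\alpha}}{\ul{B}} \geq \phi'$. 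This requires the induction to be on a measure that is stable under dependency substitution, such as type size, together with a substitution lemma for sealing.
\item $\fTy{A}$: both sides evaluate to $\retEx{v}$ and $\retEx{v'}$ by reflexivity, and the value case applies.
\end{itemize}

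\textbf{Expected main obstacle.} I expect the function case to be hardest: the mismatched arguments have to be bridged by transitivity, and the reflexivity facts must be obtained at the extended Kripke world. A secondary difficulty is weakening $\phi_\open \vdash \phi'_\closed$ along Kripke extensions of $\Delta$ and $\phi$. This should follow because entailment is stable under adding atoms and strengthening conjunctive assumptions.
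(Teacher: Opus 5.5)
Your proposal is correct and is essentially the paper's proof. Like the paper, you induct on the type and use \autoref{thm:symm} and \autoref{thm:trans} to make the related arguments $v, v'$ reflexively related in the $\arrTy{A}{\ul{B}}$ case; your note that the $\forall$ case needs a substitution-stable measure and a substitution lemma for sealing simply makes explicit what ``induction on $\mathbb{T}$'' leaves implicit.

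One small correction in that case: applying the induction hypothesis to $\apEx{t}{v}$ and $\apEx{t'}{v'}$ already yields the goal $\exact{\apEx{t}{v}}{\apEx{t'}{v'}}{\ul{B}}{\Delta''}{\psi}$, so your closing transitivity step is unnecessary (as written it does not chain, and it would only be needed had you applied the hypothesis to $\apEx{t}{v'}$ and $\apEx{t'}{v'}$ instead).
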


\begin{proof}
    By induction on $\mathbb{T}$, using \autoref{thm:symm} and \autoref{thm:trans} for reflexivity in the $\arrTy{A}{\ul{B}}$ case.
\end{proof}

This lemma relates two arbitrary terms $t$ and $t'$ which
\textcolor{Red}{\textbf{(1)} satisfy the relation reflexively}
\textcolor{Purple}{\textbf{(2)} at a type $\mathbb{T}$ sealed at $\phi'$}
\textcolor{Blue}{\textbf{(3)} under ambient dependencies $\phi$ where
$\phi_\open \vdash \phi'_\closed$.} In effect, we can freely equate terms where
all the information in them is guaranteed to be sealed at some $\phi'$
conflicting with the ambient dependencies $\phi$. Relatedness being trivialized
for terms satisfying certain conditions prevents terms not satisfying those
conditions from depending on them while remaining in the relation. The method
by which the former are related is unavailable to the latter. In turn,
terms satisfying the antecedents of \autoref{lem:seal} are isolated from the
rest and cannot affect their behavior. This isolation property is known as
\textit{non-interference} \cite{goguen1982security}. That the isolated terms
are those violating their robustness requirements\footnote{Excepting
those vacuously satisfying $A \geq \phi'$ by containing no information, like
\texttt{unit}.} means that \textbf{non-interference is defined by robustness
within our system.} Misbehaving terms cannot interfere with those well-behaved.

Being derived from the semantics of the open and closed modalities,
\autoref{lem:seal} relies crucially on the case for closed being aware of the
ambient dependencies---specifically the value-level \textit{flows-to-be}---as
noted in \autoref{sec:syntax-typing-flows}. Importantly, this case is the
\textit{only} one making any concessions to indistinguishability; all other
types have a typical treatment of equality. \textbf{Our definition of program
equality does not hold indistinguishability in special regard. It derives from
the definition of equality at the closed modality under the ambient
dependencies of the open modality.} The isolation behavior we exploit in this
work arises inevitably from here.

Finally, observe that \autoref{lem:seal} forces the faithful tracking of both
$\opTy{\phi}{\ul{B}}$ and $\clTy{\phi'}{A}$ within the semantics. Failing to do
so would allow for terms vacuously related by it which do not actively satisfy
its antecedents. However, the relation would then require such terms to be truly
related according to their type, which is impossible in general. As an aside,
note that \autoref{lem:seal} embodies the equation from
\autoref{sec:simultaneous}. And through another lens, it states that if
$\mathbb{T}$ is $\closed_{\phi'}$-modal then it is $\open_{\phi}$-connected,
corresponding to the forward direction of \citet[Lem.
3.6]{sterling2021logical}.

A minor limitation is that the case for existentials requires both sides to
have the same implementing dependencies $\phi'$. We might relax this
restriction by interpreting dependency variables as relations to be substituted
into the type $A$, following the interpretation of type variables under
relational parametricity \cite{reynolds_1984}. Existentials could then be
included in the sealing judgement $A \geq \phi$, as forecast in
\autoref{sec:syntax-typing-info}. However, this induces a steep increase in
complexity and departs from all other accounts of non-interference. We leave
such a generalization to the future.

The relation so far has been defined on terms closed in term variables $x$. The
type system from \autoref{sec:syntax-typing} works on open terms. To bridge the
divide, we must generalize the former to open terms. Assume contexts $\Delta$,
$\Delta'$, and $\Gamma$, and assume $\phi$, $\phi'$ such that $\Delta \vdash
\phi$ and $\Delta' \vdash \phi'$. Define a map $\delta$ from dependency
variables $\PY{n+nl}{\alpha} \in \Delta$ to $\Delta' \vdash \phi$, and a map
$\gamma$ from term variables $x : A \in \Gamma$ to values $v$ closed in term
variables and scoped under $\Delta'$. Then define $\gamma \sim \gamma'$ such
that $\forall x : A \in \Gamma \have
\exact{\gamma(x)}{\gamma'(x)}{\delta(A)}{\Delta'}{\phi'}$, that is, mapping the
same variable to related values. When $\delta$ and $\gamma$ are passed types
and terms, assume they recurse through their structure, substituting for all
variables in their domain. The definition of the open relation is that under
the preceding assumptions, the following must hold:
$\exact{\delta(\gamma(t))}{\delta(\gamma'(t'))}{\delta(\mathbb{T})}{\Delta'}{\delta(\phi)
\cup \phi'}$. We write this concisely as
$\exactOpen{\Delta}{\Gamma}{\phi}{t}{t'}{\mathbb{T}}{\Delta'}{\phi'}$. We can
now state the fundamental theorem.

\begin{theorem}[Fundamental Theorem]
    \label{thm:ftlr}
    If $\typing{\Delta}{\Gamma}{\phi}{t}{\mathbb{T}}$ then
    $\exactOpen{\Delta}{\Gamma}{\phi}{t}{t}{\mathbb{T}}{\Delta'}{\phi'}$.
\end{theorem}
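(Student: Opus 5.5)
The plan is to proceed by induction on the typing derivation $\typing{\Delta}{\Gamma}{\phi}{t}{\mathbb{T}}$, mutually over the value and computation judgements. For each rule we fix arbitrary $\Delta'$, $\phi'$, $\delta$ and related substitutions $\gamma \sim \gamma'$, and show $\exact{\delta(\gamma(t))}{\delta(\gamma'(t))}{\delta(\mathbb{T})}{\Delta'}{\delta(\phi) \cup \phi'}$. Before starting we need a substitution lemma stating that $\delta$ commutes with $\Sub{\phi''}{\alpha}{-}$ on types, with the sealing judgement (so $\ul{B} \geq \phi''$ implies $\delta(\ul{B}) \geq \delta(\phi'')$), with subtyping, and with the entailments $\phi_\open \vdash \phi'_\open$ and $\phi_\closed \vdash \phi'_\closed$. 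These hold because $\delta$ maps dependency variables to sets and both entailments are monotone under such substitution.

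The introduction and elimination rules for $\unitTy$, sums, pairs, $\uTy{-}$, $\fTy{-}$, and functions are routine. \textsc{T-Var} is immediate from $\gamma \sim \gamma'$. Eliminators such as \textsc{T-Case}, \textsc{T-Split} and \textsc{T-Bind} invert the value relation to learn the matching introduction forms, extend $\gamma,\gamma'$ with the related components, and finish using the induction hypothesis together with Head Expansion. \textsc{T-Lam} and \textsc{T-Ap} use Kripke Monotonicity and Dependency Monotonicity (\autoref{lem:dependency-mono}) to move the assumed substitutions into the extended world $\Delta'',\phi''$ required by the arrow case. \textsc{T-Sub} follows from Dependency Monotonicity and \textsc{T-Sub-Type} from Type Monotonicity.

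For \textsc{T-Consume}, the relation at $\opTy{\phi_0}{\ul{B}}$ asks for $\produceEx{-}$ at ambient $\delta(\phi)\cup\phi'\cup\delta(\phi_0)$. The induction hypothesis supplies this up to reordering and associativity of the sets, and Head Expansion absorbs the step $\produceEx{\consumeEx{e}} \evals e$. For \textsc{T-Produce}, unfold the hypothesis at the open type and note that the ambient sets coincide. For \textsc{T-All} and \textsc{T-Inst}, extend $\delta$ with $\alpha \mapsto \phi''$ and use the substitution lemma; the premise $\Delta \vdash \phi$ ensures the ambient set is unaffected. \textsc{T-Pack} and \textsc{T-Open} work the same way. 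Since both sides carry the same $\delta(\phi'')$, they agree syntactically, matching the existential case of the relation. The scoping premises of \textsc{T-Open} ensure $\delta(\ul{B})$ does not mention the bound variable, so the result lands at the correct type. \textsc{T-Seal} takes the second disjunct of the closed case.

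The main obstacle is \textsc{T-Unseal}. The hypothesis on $v$ gives one of two disjuncts. In the second, the unsealed contents are related as usual, and we conclude directly from the induction hypothesis on $e$ with $\gamma, x\mapsto w$ and $\gamma', x\mapsto w'$. In the first, we only know $\delta(\phi)\cup\phi' \vdash_\open \delta(\phi_0)_\closed$ together with reflexive relatedness of $w$ and $w'$. The plan here is to apply the induction hypothesis twice, once with $(\gamma, x\mapsto w)$ against itself and once with $(\gamma', x\mapsto w')$ against itself. This yields reflexive relatedness of each unsealed body; we also need $\gamma\sim\gamma$ and $\gamma'\sim\gamma'$, which follow from Symmetry and Transitivity (\autoref{thm:symm}, \autoref{thm:trans}). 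Then we invoke the Seal lemma (\autoref{lem:seal}) with the substituted premise $\delta(\ul{B}) \geq \delta(\phi_0)$ to relate the two bodies, and close with Head Expansion. This is where the substitution-stability of sealing and the entailment bookkeeping must be checked carefully.
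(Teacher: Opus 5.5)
Your proposal is correct and matches the paper's proof: induction on the typing derivation, with the Seal lemma (\autoref{lem:seal}) doing the work in the \textsc{T-Unseal} case. There, as you describe, the reflexive substitutions obtained via Symmetry and Transitivity feed two self-related applications of the induction hypothesis. The remaining bookkeeping is routine: in \textsc{T-Var}, and whenever you extend $\gamma,\gamma'$, you raise $\gamma \sim \gamma'$ from $\phi'$ to $\delta(\phi) \cup \phi'$ by Dependency Monotonicity and take that set as the new ambient parameter.
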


\begin{proof}
    By induction on a derivation of
    $\typing{\Delta}{\Gamma}{\phi}{t}{\mathbb{T}}$, using \autoref{lem:seal} in
    the \textsc{T-Unseal} case.
\end{proof}

This property connects static well-typedness to membership in the semantics. It
states that any well-typed program is reflexively related under the open
relation. This is nearly the same as being reflexively related under the closed
relation, except that each side receives \textit{related} rather than
\textit{equivalent} substitutions for its term variables. This is important for
the final corollary.

\begin{corollary}[Robust Non-Interference]
    \label{cor:rni}
    If $\typing{\Delta}{x : A}{\phi}{t}{\mathbb{T}}$ and $A \geq \phi$ and
    $\typing{\Delta}{\varnothing}{\phi}{v_1, v_2}{A}$ with $\phi$ nonempty then
    $\exact{\Sub{v_1}{x}{t}}{\Sub{v_2}{x}{t}}{\mathbb{T}}{\Delta}{\phi}$.
\end{corollary}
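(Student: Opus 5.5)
We derive the corollary from \autoref{thm:ftlr}, applied to $t$ under the one-variable context $x : A$. We instantiate the open relation with $\Delta' = \Delta$, the identity map $\delta$ on dependency variables, and the ambient set $\phi' = \phi$. For the two term substitutions we take $\gamma = [v_1/x]$ and $\gamma' = [v_2/x]$. The conclusion of \autoref{thm:ftlr} then reads $\exact{\Sub{v_1}{x}{t}}{\Sub{v_2}{x}{t}}{\mathbb{T}}{\Delta}{\phi \cup \phi}$. Since $\phi \cup \phi$ and $\phi$ denote the same conjunction under $(-)_\open$, this is the stated relation; if needed, \autoref{lem:dependency-mono} bridges the syntactic difference. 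The whole argument therefore reduces to discharging the premise $\gamma \sim \gamma'$, namely $\exact{v_1}{v_2}{A}{\Delta}{\phi}$.

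To establish that premise we use \autoref{lem:seal} with $\mathbb{T} := A$ and $\phi' := \phi$. The lemma has three antecedents.

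\textbf{Reflexivity.} We need $\exact{v_i}{v_i}{A}{\Delta}{\phi}$ for $i = 1, 2$. Each follows from \autoref{thm:ftlr} applied to the closed typing $\typing{\Delta}{\varnothing}{\phi}{v_i}{A}$. Here the empty context makes $\gamma, \gamma'$ trivial, we take $\delta$ to be the identity, and we again collapse $\phi \cup \phi$ to $\phi$.

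\textbf{Sealing.} The judgement $A \geq \phi$ is given as a hypothesis.

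\textbf{Indistinguishability.} We need $\phi_\open \vdash \phi_\closed$, that is, that the conjunction of the atoms of $\phi$ entails their disjunction. This is exactly where the nonemptiness hypothesis enters. If $\phi = \PY{n+nl}{\alpha}; \phi''$, then $\llbracket\phi\rrbracket.\pi_1 = p \wedge \ldots$ entails $p$, which in turn entails $p \vee \ldots = \llbracket\phi\rrbracket.\pi_2$. By contrast, for $\epsilon$ we would need $\top \vdash \bot$, which fails; this is the downgrading escape hatch noted in \autoref{sec:versus}.

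With all three antecedents in hand, \autoref{lem:seal} yields $\exact{v_1}{v_2}{A}{\Delta}{\phi}$, completing the argument.

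The main obstacle is bookkeeping in the instantiation of the open relation rather than any deep step. The definition requires $\delta$ to map into $\Delta'$ and the substituted values to be scoped under $\Delta'$. It also forms the ambient set as $\delta(\phi) \cup \phi'$, so we must check that choosing $\phi' = \phi$ and the identity $\delta$ is admissible and that the union does not disturb the entailment. Should the union be represented as a concatenation rather than a set, we would close the gap with \autoref{lem:dependency-mono} in both directions, using that $(\phi\,\phi)_\open$ and $\phi_\open$ are interderivable.
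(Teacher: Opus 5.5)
Your proposal is correct and follows the paper's proof, which is stated only as ``immediate from \autoref{thm:ftlr}, using \autoref{lem:seal} to obtain that $v_1$ and $v_2$ are related.'' You fill in the details the paper leaves implicit: the identity $\delta$ with ambient $\phi' = \phi$ (taking $\phi' = \epsilon$ would not work, since $\epsilon_\open \not\vdash \epsilon_\closed$ blocks \autoref{lem:seal}), reflexivity of $v_1$ and $v_2$ via \autoref{thm:ftlr} on the closed values, $A \geq \phi$ by hypothesis, and $\phi_\open \vdash \phi_\closed$ from nonemptiness of $\phi$.
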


\begin{proof}
    Immediate from \autoref{thm:ftlr}, using \autoref{lem:seal} to obtain that $v_1$ and $v_2$ are related.
\end{proof}

Set $\Delta = \PY{n+nl}{\alpha}$, $\phi = \PY{n+nl}{\alpha}$, $A =
\clTy{\PY{n+nl}{\alpha}}{\ldots}$, and $\mathbb{T} =
\uTy{\opTy{\PY{n+nl}{\alpha}}{\fTy{\ldots}}}$ to obtain
\autoref{cor:untrusted}, using \autoref{fig:non-interference} as the definition
of equivalence. We elucidate this result in two parts. First, we briefly recap
the reconciliation of declassification and non-interference by
\citet{gouni2025structural} and review how this result extends it. We then
compare to the standard definition of robust downgrading.

\subsection{Downgrading for Cheap}

Where do we account for downgrading behavior? Notice in the definition of the
open relation for \autoref{thm:ftlr} that there are \textit{two} dependency
variable environments $\Delta$ and $\Delta'$ at hand. The first is the
environment coming from the static typing, and the second is that intended for
use in the closed relation. When we close the relation under $\Delta'$ as
there, we fix the dependencies relevant to the condition $\phi_\open \vdash
\phi'_\closed$, and therefore to non-interference. The definition of the open
relation then assumes related substitutions $\gamma \sim \gamma'$ for the term
variables in $\Gamma$, which has the effect of assuming
non-interference-preserving implementations of all in-scope downgraders.

For instance, we may pick \PY{n+nl}{\texttt{untrusted}} to be in $\Delta'$ with
$\delta$ mapping \PY{n+nl}{\texttt{untrusted}} to
$\PY{n+nl}{\texttt{untrusted}}; \epsilon$, making it able to be used to trigger
indistinguishability, and elide all other currently in-scope dependency
variables by mapping them to $\epsilon$. And with \mbox{\texttt{rate\_limit}
\PY{o}{\texttt{:}} \PY{o}{\texttt{[}}\PY{n+nl}{\valpha}\PY{o}{\texttt{]}}
\PY{o}{\texttt{![}}\PY{n+nl}{\texttt{untrusted}}\PY{o}{\texttt{]}}
\PY{k+kt}{\texttt{t}} \PY{o}{\texttt{->}}
\PY{o}{\texttt{[}}\PY{n+nl}{\valpha}\PY{o}{\texttt{]}} \PY{k+kt}{\texttt{t}}}
in scope we assume related implementations for it at its type. When downgrading
\PY{o}{\texttt{![}}\PY{n+nl}{\texttt{untrusted}}\PY{o}{\texttt{]}}, one might
think we risk bringing the input data out of conflict, which would be dangerous
for the reasons previously outlined. Since we have \textit{assumed} related
implementations of \texttt{rate\_limit}, however, we can be guaranteed related
outputs from related inputs. That is, \texttt{rate\_limit} and \textit{only}
\texttt{rate\_limit} is permitted to downgrade
\PY{o}{\texttt{![}}\PY{n+nl}{\texttt{untrusted}}\PY{o}{\texttt{]}} data while
preserving relatedness.

The idea of non-interference we have conveyed is thus \textit{already} stated
up to downgrading, without extra effort on our part. The shape of downgrading
is determined by which dependencies $\delta$ maps into $\Delta'$, and the
operations for which $\gamma \sim \gamma'$ provides related implementations.
Our work recognizes that this pattern continues to function when both the
quasi-open and quasi-closed modalities are in play, rather than just the former
as in \citet{gouni2025structural}. We defer to them for further exposition.
Note that when we transform a dependency to $\epsilon$, it is downgraded
because $\epsilon_\open \not\vdash \epsilon_\closed$ by \autoref{fig:pairing};
it can no longer be used to trigger indistinguishability, or as a justification
to trivialize equality. The non-emptiness assumption in \autoref{cor:rni} is
due to this. Further, when we fix the dependencies in $\Delta'$ these can be
thought of as the \textit{included middle} truth values interpreted in
\autoref{sec:versus}.

\subsection{Comparing to Standard Robust Downgrading, pt. 2}

Let us review what this all means for the conception of robustness from prior
work. Previous accounts of robustness have cast it as a
\textit{4-hyperproperty}, or a matter of 4 related traces of a program. Our
binary logical relation, however, only concerns 2 traces. The reason for the
simplification is due to our \textit{explicit} and \textit{typed} treatment of
robustness. We offer justification through the words of prior work in robust
declassification.

\begin{quote}
    If we could precisely identify the release events, this would allow us to
    specify robust declassification as a 2-safety property on those release
    events.
    \hfill \citet{cecchetti2017nonmalleable}
\end{quote}

Conventional work integrating robustness into information flow specifically
focuses on robust \textit{declassification} and its release of secrets.
Declassifiers are primitive to the language, and they are the focus of
robustness. As such, stating robust declassification requires first detecting
the use of declassification, and subsequently ensuring the robustness of its
uses. However, as shown in \autoref{sec:robust-downgrading}, we defer control
over robustness to authors of module interfaces. Specifically, we do this by
reifying robustness as its own connective, the closed modality. So the type of
the program directly states its robustness requirements, allowing our semantics
to be formulated around it.

Beyond robustness being made explicit in the type system, the \textit{typed}
nature of our semantics is the second key piece of the simplification. Prior
approaches to the semantics of robust declassification attempt to do so in an
\textit{untyped} way, purely reasoning over traces of programs. We take the
perspective that \textbf{the semantics of robustness is driven by types,}
giving our semantic definition access to the type of the computation over which
it reasons. Taking explicit specifications and typed semantics together,
robustness may be defined by interpreting the closed modality. This results in
a system which makes strides in expressiveness while presenting a significantly
simpler metatheory. \autoref{cor:rni} provides that data with robustness
restrictions against certain contexts is unavailable to computations in those
contexts, which is analogous to the separation properties pursued by the more
complex formulations. The above argument also applies to transparent
endorsement.

\subsection{Mechanization Notes}

Proofs of the theorems just stated can be found in
\texttt{Mechanize/Semantics.lean}, in the attached artifact. We occasionally
use set-theoretic subsetting and intersection operations in place of
entailments on $\phi_\open, \phi_\closed$ for ease of mechanization, where the
two are proven equivalent. We used free large language models to search the
Lean and Mathlib documentation. They were also used to complete routine cases
and fill holes in auxiliary lemmas while mechanizing the paper proofs of
\citet{gouni2025structural}, which ultimately provided helpful structure for
this development. We otherwise avoided the use of such tools in order to
extract the maximal metatheoretic insights.

\section{Related Work}
\label{sec:related}

Related work has structured our exposition throughout. We review here the few
remaining points.

\subsubsection*{Graded Effects}

Our framework may be understood as one of \textit{graded monads} or
\textit{effects}, grading $\opTy{\phi}{A}$ and $\clTy{\phi}{A}$ with the free
semilattice $\phi$. Of note, \citet{torczon2024cbpv} sets up graded effects in
a Call-By-Push-Value setting. There, the judgemental effect $\phi$ grades the
$\uTy{\ul{B}}$ connective and resides only within computations. This would be
the wrong move in our case because the open modality is \textit{idempotent},
meaning it is of no object how many time the open `effect' runs. So we need not
restrict its $\phi$ to the computation judgement; indeed, this would break the
closed modality.

\subsubsection*{Open and Closed}

We mentioned the forward direction of \citet[Lem. 3.6]{sterling2021logical}
corresponds to \autoref{lem:seal}. What about the reverse? This does not hold
in our case, due to the choices made in \autoref{sec:versus} to diverge from
the standard modalities. Fortunately, it is also not needed for our current
goals. A \textit{singleton} dependency context, in the style of the enriched
effect calculus \cite{egger2014enriched}, seems to be able to recover the full
theorem; we consider it future work.

Other work with the open and closed modalities has exploited the idea of
\textit{included middles} towards declassification-like behavior, if
unacknowledged. The \textit{abstraction function} of
\citet{grodin2026abstraction} appears to operate distantly similarly. More work
is needed to determine the relationship.

\subsubsection*{Confidentiality and Integrity}

The best-known duality within information flow is, of course, that between
confidentiality and integrity, introduced by \citet{biba1977integrity}.
\citet{cecchetti2017nonmalleable} maintains it in the presence of advanced
mechanisms like robust downgrading and transparent endorsement, though at the
cost of metatheoretic complexity. Future work may explore a translation of this
perspective into our system, solidifying the relationship between the two.

\subsubsection*{Substructurality}

\citet{gouni2026security} works in a system which extends
\citet{gouni2025structural} to be able to speak about $\phi$s which are not
sets. Specifically, they are able to speak about the order and quantity of
dependencies within the quasi-open modality. Whether the quasi-closed modality
can be made substructural in the same sense as done for the quasi-open modality
there is future work.

\section{Conclusion}
\label{sec:conclusion}

We have offered a number of different perspectives on the duality at the heart
of information flow, identifying with it several surprising and useful
phenomena spanning theory and practice. The type-theoretic consonance of our
analysis will likely continue to bear fruit beyond what has been discussed.
Shall secure information flow focus on tracking \textit{information} or
\textit{flows}? We say both.

\subsection*{Acknowledgements}

We thank Corinthia Aberl\'e, Harrison Grodin, and Jonathan Sterling for their
guidance in grasping the background material for this work. This research was
supported by the Department of Defense under Grant No. H98230-23-C-0275 and by
seed funding from the CyLab Security and Privacy Institute at Carnegie Mellon
University. The lead author was supported by a Sansom Endowed Presidential
Fellowship. The writing and insights in this paper were produced exclusively by
the minds of the authors, in the hope that others might benefit from them.

\vfill

\pagebreak

\bibliography{main}

@inproceedings{abadi1999,
author = {Abadi, Mart\'{\i}n and Banerjee, Anindya and Heintze, Nevin and Riecke, Jon G.},
title = {A core calculus of dependency},
year = {1999},
doi = {10.1145/292540.292555},
booktitle = {26th ACM SIGPLAN-SIGACT Symposium on Principles of Programming Languages},
series = {POPL},
}

@article{shikuma2008proving,
  author={Shikuma, Naokata and Igarashi, Atsushi},
  title={Proving Noninterference by a Fully Complete Translation to the Simply Typed lambda-calculus},
  year={2008},
  doi={10.2168/LMCS-4(3:10)2008},
  journal={Logical Methods in Computer Science},
  volume={4},
  number={3},
  pages={1--31}
}

@inproceedings{miyamoto2004modal,
  author={Miyamoto, Kenji and Igarashi, Atsushi},
  title={A Modal Foundation for Secure Information Flow},
  year={2004},
  url={https://www.fos.kuis.kyoto-u.ac.jp/~igarashi/papers/pdf/modal-FCS04.pdf},
  booktitle={Workshop on Foundations of Computer Security},
  series={FCS},
}

@article{liu2024internalizing,
  author={Liu, Yiyun and Chan, Jonathan and Shi, Jessica and Weirich, Stephanie},
  title={Internalizing Indistinguishability with Dependent Types},
  year={2024},
  doi={10.1145/3632886},
  journal={Proceedings of the ACM on Programming Languages},
  volume={8},
  number={POPL},
  pages={1298--1325},
}

@inproceedings{choudhury2022dependent,
  author={Choudhury, Pritam and Eades III, Harley and Weirich, Stephanie},
  title={A Dependent Dependency Calculus},
  year={2022},
  doi={10.1007/978-3-030-99336-8_15},
  booktitle={European Symposium on Programming},
  series={ESOP},
}

@inproceedings{reynolds_1984,
  author    = {Reynolds, John C.},
  title     = {{Types, Abstraction and Parametric Polymorphism}},
  year      = 1983,
  url = {https://people.mpi-sws.org/~dreyer/tor/papers/reynolds.pdf},
  booktitle = {Information Processing 83},
  series    = {IFIP Congress Series},
}

@article{torczon2024cbpv,
author = {Torczon, Cassia and Su\'{a}rez Acevedo, Emmanuel and Agrawal, Shubh and Velez-Ginorio, Joey and Weirich, Stephanie},
title = {Effects and Coeffects in Call-by-Push-Value},
year = {2024},
doi = {10.1145/3689750},
journal = {Proceedings of the ACM on Programming Languages},
volume = {8},
number = {OOPSLA2},
pages = {1108--1134},
}

@inproceedings{bowman2015noninterference,
author = {Bowman, William J. and Ahmed, Amal},
title = {Noninterference for Free},
year = {2015},
doi = {10.1145/2784731.2784733},
booktitle = {20th ACM SIGPLAN International Conference on Functional Programming},
series = {ICFP}
}

@inproceedings{myers1999mostly,
  author={Myers, Andrew C},
  title={JFlow: practical mostly-static information flow control},
  year={1999},
  doi={10.1145/292540.292561},
  booktitle={26th ACM SIGPLAN-SIGACT symposium on Principles of programming language},
  series={POPL},
}

@article{gouni2025structural,
author = {Gouni, Hemant and Pfenning, Frank and Aldrich, Jonathan},
title = {Structural Information Flow: A Fresh Look at Types for Non-interference},
year = {2025},
doi = {10.1145/3764116},
journal = {Proceedings of the ACM on Programming Languages},
volume = {9},
number = {OOPSLA2},
pages={3954--3980},
}

@inproceedings{moggi1989computational,
author = {Moggi, Eugenio},
title = {Computational lambda-calculus and monads},
year = {1989},
doi = {10.1109/LICS.1989.39155},
booktitle = {4th Annual Symposium on Logic in Computer Science},
series = {LICS},
}

@article{sterling2021logical,
  author={Sterling, Jonathan and Harper, Robert},
  title={Logical Relations as Types: Proof-Relevant Parametricity for Program Modules},
  year={2021},
  doi={10.1145/3474834},
  journal={Journal of the ACM},
  volume={68},
  number={6},
  pages={1--47},
}

@article{rijke2020modalities,
  author={Rijke, Egbert and Shulman, Michael and Spitters, Bas},
  title={Modalities in homotopy type theory},
  year={2020},
  doi={10.23638/LMCS-16(1:2)2020},
  journal={Logical Methods in Computer Science},
  volume={16},
  number={1},
  pages={1--79},
}

@inproceedings{sterling2022sheaf,
  author={Sterling, Jonathan and Harper, Robert},
  title={Sheaf Semantics of Termination-Insensitive Noninterference},
  year={2022},
  doi={10.4230/LIPIcs.FSCD.2022.5},
  booktitle={7th International Conference on Formal Structures for Computation and Deduction},
  series={FSCD},
}

@inproceedings{cecchetti2017nonmalleable,
  author={Cecchetti, Ethan and Myers, Andrew C and Arden, Owen},
  title={Nonmalleable Information Flow Control},
  year={2017},
  booktitle={ACM SIGSAC Conference on Computer and Communications Security},
  series={CCS},
  doi={10.1145/3133956.3134054}
}

@book{girard1989proofs,
    author = {Girard, Jean-Yves and Lafont, Yves and Taylor, Paul},
    title = {Proofs and {Types}},
    year = {1989},
    isbn = {9780521371810},
    series = {Cambridge {Tracts} in {Theoretical} {Computer} {Science}},
    publisher = {Cambridge University Press},
}

@article{choudhury2022monadic,
author = {Choudhury, Pritam},
title = {Monadic and Comonadic Aspects of Dependency Analysis},
year = {2022},
doi = {10.1145/3563335},
journal = {Proceedings of the ACM on Programming Languages},
volume = {6},
number = {OOPSLA2},
pages = {1320--1348},
}

@inproceedings{rajani2025graded,
  title={A Graded Modal Approach to Relaxed Semantic Declassification},
  author={Rajani, Vineet and Coleman, Alex and Kanabar, Hrutvik},
  year={2025},
  doi={10.1109/CSF64896.2025.00032},
  booktitle={38th Computer Security Foundations Symposium},
  series={CSF},
}

@inproceedings{sterling2021metalanguage,
  title={A metalanguage for multi-phase modularity},
  author={Sterling, Jonathan and Harper, Robert},
  year={2021},
  url={https://www.cs.cmu.edu/~rwh/papers/multiphase/mlw.pdf},
  booktitle={ML Workshop},
  series={MLW},
}

@article{hirsch2021giving,
  author={Hirsch, Andrew K and Cecchetti, Ethan},
  title={Giving semantics to program-counter labels via secure effects},
  year={2021},
  doi={10.1145/3434316},
  journal={Proceedings of the ACM on Programming Languages},
  volume={5},
  number={POPL},
  pages={1--29},
}

@book{levy2012call,
  title={Call-By-Push-Value: A Functional/Imperative Synthesis},
  author={Levy, Paul Blain},
  year={2003},
  doi={10.1007/978-94-007-0954-6},
  publisher={Springer Dordrecht}
}

@article{grodin2026abstraction,
  author={Grodin, Harrison and Li, Runming and Harper, Robert},
  title={Abstraction Functions as Types: Modular Verification of Cost and Behavior in Dependent Type Theory},
  year={2026},
  doi={10.1145/3776673},
  journal={Proceedings of the ACM on Programming Languages},
  volume={10},
  number={POPL},
  pages={895--922},
}

@Book{hottbook,
  author =    {The {Univalent Foundations Program}},
  title =     {Homotopy Type Theory: Univalent Foundations of Mathematics},
  publisher = {\url{https://homotopytypetheory.org/book}},
  address =   {Institute for Advanced Study},
  year =      2013
}

@INPROCEEDINGS{bugliesi2011resource,
  author={Bugliesi, Michele and Calzavara, Stefano and Eigner, Fabienne and Maffei, Matteo},
  title={Resource-Aware Authorization Policies for Statically Typed Cryptographic Protocols}, 
  year={2011},
  booktitle={24th Computer Security Foundations Symposium}, 
  series={CSF},
  doi={10.1109/CSF.2011.13}
}

@INPROCEEDINGS{gordon2001authenticity,
  author={Gordon, A.D. and Jeffrey, A.},
  title={Authenticity by typing for security protocols}, 
  year={2001},
  doi={10.1109/CSFW.2001.930143},
  booktitle={14th IEEE Computer Security Foundations Workshop}, 
  series={CSFW},
}

@article{swasey2017robust,
author = {Swasey, David and Garg, Deepak and Dreyer, Derek},
title = {Robust and compositional verification of object capability patterns},
year = {2017},
doi = {10.1145/3133913},
journal = {Proceedings of the ACM on Programming Languages},
volume = {1},
number = {OOPSLA},
pages = {1--26},
}

@ARTICLE{ieee1990standard,
  author={{IEEE Standards Board}},
  journal={IEEE Std 610.12-1990}, 
  title={IEEE Standard Glossary of Software Engineering Terminology}, 
  year={1990},
  volume={},
  number={},
  pages={1-84},
  doi={10.1109/IEEESTD.1990.101064}}

@article{mackay2022necessity,
author = {Mackay, Julian and Eisenbach, Susan and Noble, James and Drossopoulou, Sophia},
title = {Necessity Specifications for Robustness},
year = {2022},
doi = {10.1145/3563317},
journal = {Proceedings of the ACM on Programming Languages},
volume = {6},
number = {OOPSLA2},
pages = {811--840},
}

@article{mcbride2008applicative,
author = {McBride, Conor and Paterson, Ross},
title = {Applicative programming with effects},
year = {2008},
doi = {10.1017/S0956796807006326},
journal = {Journal of Functional Programming},
volume = {18},
number = {1},
pages = {1–13},
}

@inproceedings{zdancewic2001robust,
  title={Robust Declassification},
  author={Zdancewic, Steve and Myers, Andrew C},
  year={2001},
  doi={10.5555/872752.873524},
  booktitle={14th IEEE Computer Security Foundations Workshop},
  series={CSFW},
}

@article{orbaek1997trust,
  author={{\O}rb{\ae}k, Peter and Palsberg, Jens},
  title={Trust in the $\lambda$-calculus},
  year={1997},
  doi={10.1017/S0956796897002906},
  journal={Journal of Functional Programming},
  volume={7},
  number={6},
  pages={557--591},
}

@article{crary2020focused,
  author={Crary, Karl},
  title={A focused solution to the avoidance problem},
  year={2020},
  doi={10.1017/S0956796820000222},
  journal={Journal of Functional Programming},
  volume={30},
  number={Robert Harper Festschrift Collection},
}

@inproceedings{arden2016calculus,
  author={Arden, Owen and Myers, Andrew C},
  title={A Calculus for Flow-Limited Authorization},
  year={2016},
  doi={10.1109/CSF.2016.17},
  booktitle={29th Computer Security Foundations Symposium},
  series={CSF},
}

@article{egger2014enriched,
  author={Egger, Jeff and Ejlers, Rasmus and Simpson, Alex and others},
  title={The enriched effect calculus: syntax and semantics},
  year={2014},
  doi={10.1093/logcom/exs025},
  journal={Journal of Logic and Computation},
  volume={24},
  number={3},
  pages={615--654},
}

@techreport{biba1977integrity,
  title={Integrity Considerations for Secure Computer Systems},
  author={Biba, Kenneth J},
  year={1977},
  url={https://apps.dtic.mil/sti/citations/ADA039324}
}

@article{gouni2026security,
  author={Gouni, Hemant and Pfenning, Frank and Aldrich, Jonathan},
  title={Security Reasoning via Substructural Dependency Tracking},
  year={2026},
  doi={10.1145/3776669},
  journal={Proceedings of the ACM on Programming Languages},
  volume={10},
  number={POPL},
  pages={777--805},
}

@inproceedings{goguen1982security,
  author={Goguen, Joseph A and Meseguer, Jos{\'e}},
  title={Security Policies and Security Models},
  year={1982},
  doi={10.1109/SP.1982.10014},
  booktitle={IEEE Symposium on Security and Privacy},
  series={S\&P},
}

@book{schultz2017temporal,
  title={Temporal Type Theory: A Topos-Theoretic Approach to Systems and Behavior},
  author={Schultz, Patrick and Spivak, David I},
  doi={10.1007/978-3-030-00704-1},
  year={2019},
  publisher={Birkhäuser Cham},
}

\pagebreak

\appendix

\section{The Hybrid Modality}
\label{sec:hybrid}

\begin{figure}
\begin{mathpar}
    \inferrule[Mark]
    {\Gamma; \phi' \cup \phi \vdash e : A}
    {\Gamma; \phi' \vdash \texttt{box}\; e : T^{\phi} A}

    \inferrule[LetMark]
    {\Gamma; \phi' \vdash e_1 : T^{\phi}(A_1) \\ \Gamma, x :^{\phi} A_1; \phi' \vdash e_2 : A_2}
    {\Gamma; \phi' \vdash \texttt{unbox}\; e_1\; \texttt{as}\; x\; \texttt{in}\; e_2 : A_2}

    \inferrule[Var]
    {x :^{\phi} A \in \Gamma \\ \phi \subseteq \phi'}
    {\Gamma; \phi' \vdash x : A}
\end{mathpar}
\caption{Hybrid Modality}
\label{fig:hybrid}
\end{figure}

\autoref{fig:hybrid} shows a third kind of information flow modality appearing
in the literature \cite{miyamoto2004modal, choudhury2022dependent}. The
introduction rule \textsc{Mark} is structurally identical to that of the open
modality, reifying dependencies in the typing judgement within the modality
$T^\phi A$. However, \textsc{LetMark} is structurally closer to the elimination
rule for the closed modality; it inspects its first operand and passes it to
the second. Unlike \textsc{Unseal}, it does not operate by way of a sealing
judgement $A \geq \phi$, but by marking the bound variable $x$ with the
dependencies $\phi$ from the eliminated modality. Using $x$ will require the
judgemental dependencies $\phi'$ to be greater than the $\phi$ marking the
variable, in similar fashion to the eliminator \textsc{Produce} for the open
modality.

We are not aware of a particular justification for this formulation, and indeed
its metatheoretic standing is unclear. One might wonder, for instance, what
judgemental structure it internalizes. If the ambient dependencies $\phi$, one
would expect it to be formulated structurally as a function, as in the case of
the open modality, but it is not: \textsc{LetMark} is a positive, rather than
negative, elimination rule. Furthermore, there do not seem to be any
expressiveness benefits of this setup over the others. We suspect this `hybrid'
modality to simply be an accident of formalization, drafted when the precise
character of dependency tracking modalities was not as critical as it is here,
and set it aside.

\end{document}